\theoremstyle{definition}
\newtheorem{claim}{Claim}[section]
\theoremstyle{remark}
\newcommand{\hook}{\reflectbox{\reflectbox{\checkmark}}}
\newcommand{\unhook}{\reflectbox{\checkmark}}
\newcommand{\cohook}{\rotatebox[origin=c]{180}{\hook}}
\newcommand{\counhook}{\rotatebox[origin=c]{180}{\unhook}}
\def\ZZZ{{\hbox{ Z\kern-1.6mm Z}}}
\def\RRR{{\hbox{ R\kern-2.4mm R}}}
\def\CCC{{\hbox{ C\kern-2.0mm C}}}
\def\zzz{{\hbox{z\kern-1mm z}}}
\newcommand{\qeq}{{\hbox{=\kern-2.3mm ? \kern.5mm }}}
\renewcommand{\qeq}{=}
\newcommand{\be}{\begin{equation}}
\newcommand{\ee}{\end{equation}}
\newcommand{\ben}{\begin{equationarray}\displaystyle}
\newcommand{\een}{\end{equationarray}}
\def\one{{\hbox{ 1\kern-.8mm l}}}
\def\zero{{\hbox{ 0\kern-1.5mm 0}}}
\newcommand{\bea}[1]{\begin{equationarray}\label{#1} }
\newcommand{\eea}{\end{equationarray}}
\def\figone{

\def\JPicScale{0.4}
\ifx\JPicScale\undefined\def\JPicScale{1}\fi
\unitlength \JPicScale mm
\begin{picture}(140,90)(0,0)
\linethickness{0.1mm}
\put(30,60){\line(1,0){100}}
\linethickness{0.1mm}
\put(30,20){\line(1,0){100}}
\linethickness{0.3mm}
\multiput(30,90)(0.12,-0.18){167}{\line(0,-1){0.18}}
\linethickness{0.3mm}
\multiput(100,60)(0.12,0.12){250}{\line(1,0){0.12}}
\linethickness{0.3mm}
\multiput(100,20)(0.18,-0.12){167}{\line(1,0){0.18}}
\linethickness{0.3mm}
\multiput(30,0)(0.18,0.12){167}{\line(1,0){0.18}}
\put(150,62){\makebox(0,0)[cc]{$x^{\prime0}=T$}}

\put(155,22){\makebox(0,0)[cc]{$x^{\prime0}=-T$}}

\put(95,65){\makebox(0,0)[cc]{$c_{(1)}$}}

\put(140,90){\makebox(0,0)[cc]{$r_{(1)}$}}

\put(20,90){\makebox(0,0)[cc]{$r_{(2)}$}}

\put(140,0){\makebox(0,0)[cc]{$r_{(3)}$}}

\put(20,0){\makebox(0,0)[cc]{$r_{(4)}$}}

\put(55,65){\makebox(0,0)[cc]{$c_{(2)}$}}

\put(95,15){\makebox(0,0)[cc]{$c_{(3)}$}}

\put(65,15){\makebox(0,0)[cc]{$c_{(4)}$}}

\end{picture}

}
\begin{document}

\baselineskip 24pt

\begin{center}

{\Large \bf On Positive Geometries of Quartic Interactions II :\\ Stokes polytopes, Lower Forms on Associahedra and World-sheet Forms}

\vspace*{2.0ex} 

\centerline{\fontsize{13pt}{13pt}\selectfont 
 \rm P B Aneesh$^a$, Pinaki Banerjee$^b$, Mrunmay Jagadale$^{a,c}$, Renjan Rajan John$^{d,e}$, Alok Laddha$^a$, Sujoy Mahato$^f$}


\end{center}

\fontsize{12pt}{12pt}\selectfont



\baselineskip=18pt

\vspace*{2.0ex}

\centerline{\small \it ~$^a$ Chennai Mathematical Institute, H1, SIPCOT IT Park, Siruseri, Kelambakkam 603103, India}
\centerline{\small \it ~$^b$ Indian Institute of Technology Kanpur, Kalyanpur, Kanpur 208016, India}
\centerline{\small \it~$^c$ Tata Institute of Fundamental Research, Homi Bhabha Road, 
Mumbai 400005, India}
\centerline{\small \it~$^d$ Universit\`a del Piemonte Orientale, Dipartimento di Scienze e Innovazione Tecnologica}
\centerline{\small \it Viale T. Michel 11, I-15121 Alessandria, Italy }
\centerline{\small \it~$^e$  I.\,N.\,F.\,N. - sezione di Torino, Via P. Giuria 1, I-10125 Torino, Italy}
\centerline{\small \it~$^f$ Institute of Mathematical Sciences, Homi Bhabha National Institute}
\centerline{\small \it IV Cross Road, C. I. T. Campus, Taramani, Chennai 600113, India}

\vspace*{2.0ex} 

\centerline{\small{aneeshpb@cmi.ac.in, pinakib@iitk.ac.in, jmrunmay@cmi.ac.in,}}
\centerline{\small{renjan.rajan@to.infn.it, aladdha@cmi.ac.in, sujoymahato@imsc.res.in}}

\vspace*{2.0ex}

\fontsize{12pt}{17pt}\selectfont

\centerline{\bf Abstract} \bigskip

In \cite{Banerjee:2018tun}, two of the present authors along with P. Raman attempted to extend the Amplituhedron program for scalar field theories \cite{Arkani-Hamed:2017mur} to quartic scalar interactions. In this paper we develop various aspects of this proposal.  Using recent seminal results in Representation theory \cite{1807ppp,1906ppp}, we show that projectivity of scattering forms and existence of kinematic space associahedron completely capture planar amplitudes of quartic interaction. We generalise the results of \cite{Banerjee:2018tun} and show that for any $n$-particle amplitude, the positive geometry associated to the projective scattering form is a convex realisation of Stokes polytope which can be naturally embedded inside one of the ABHY associahedra defined in \cite{Arkani-Hamed:2017mur,HughThomas}. For a special class of Stokes polytopes with hyper-cubic topology, we show that they have a canonical convex realisation in kinematic space  as boundaries of kinematic space associahedra.

We then use these kinematic space geometric constructions to write world-sheet forms for $\phi^{4}$ theory which are forms of lower rank on the CHY moduli space. We argue that just as in the case of bi-adjoint $\phi^3$ scalar amplitudes, scattering equations can be used as diffeomorphisms between certain $\frac{n-4}{2}$ forms on the world-sheet and $\frac{n-4}{2}$ forms on ABHY associahedron that generate quartic amplitudes.

\vfill \eject

\fontsize{12pt}{18pt}\selectfont

\tableofcontents

\section{Introduction}

\cite{Arkani-Hamed:2017mur} brought the Amplituhedron program and the world of polytopes in contact with scattering amplitudes for non-supersymmetric Quantum field theories. In a number of beautiful results established in \cite{Arkani-Hamed:2017mur}, it was shown that the amplituhedron for bi-adjoint scalar $\phi^{3}$ theory is a particular realisation of a combinatorial polytope called associahedron ${\cal A}_{n}$ which lives in the kinematic space of Mandelstam invariants. It was also shown that the unique canonical top form on the associahedron gives the scattering amplitude of the theory. This new understanding of scattering amplitude which paralleled striking developments in the supersymmetric world \cite{Arkani-Hamed:2013jha,Arkani-Hamed:2013kca,Franco:2014csa,Arkani-Hamed:2018rsk,Arkani-Hamed:2017vfh}, also gave rise to a novel understanding/derivation of the CHY formula for bi-adjoint scalar field theories \cite{Arkani-Hamed:2017mur, He:2018pue}. 

The CHY formulae are written as integrals on the moduli space of punctured Riemann sphere, ${\cal M}_{0,n}$. This space admits a real section, ${\cal M}_{0,n}({\bf R})$ which is parametrized by an (equivalence class of) ordered set of points on a disk \footnote{This space is also known as positive moduli space and often denoted as ${\cal M}_{0,n}^{+}$ in the literature.} . It was shown in\cite{Arkani-Hamed:2017mur, delaCruz:2017zqr} that there exists a compactification $\overline{{\cal M}}_{0,n}({\bf R})$  of ${\cal M}_{0,n}({\bf R})$ which is also an associahedron.  CHY scattering equations precisely generate a diffeomorphism between the world-sheet associahedron $\overline{{\cal M}}_{0,n}({\bf R})$ and the kinematic space associahedron ${\cal A}_{n}$ such that the CHY formula for scattering amplitude can be understood in terms of pushforward of the canonical form between the two associahedra. 

In kinematic space, attempts were made to extend the amplituhedron program to generic scalar interactions \cite{Banerjee:2018tun,Raman:2019utu,Jagadale:2019byr}. In \cite{Banerjee:2018tun} it was shown that planar tree-level amplitudes for massless quartic interactions in any space-time dimensions were intricately tied to another polytope called Stokes polytope. Just as in the case of cubic interactions and the associahedron in kinematic space, it was argued that any $n$ particle amplitude can be obtained from canonical forms on $\frac{n-4}{2}$ dimensional Stokes polytopes.  Each Stokes polytope is parametrized by a reference quadrangulation $Q$ of an $n$-gon such that the corresponding canonical form produces a ``partial" scattering amplitude $m_{n}^{Q}$. It was argued in \cite{Banerjee:2018tun} that a (weighted) sum over $m_{n}^{Q}$ produces the full planar tree-level amplitude. These claims were verified for $n=6,8,\textrm{and}\ 10$ particle scattering. 

In the world of CHY formalism, it was shown in a series of seminal works \cite{Cachazo:2014xea,Baadsgaard:2015ifa,Baadsgaard:2016fel} that integrands for any monomial scalar interaction could be written down \footnote{This came as a surprise as CHY formalism is intimately tied to perturbative string amplitudes and due to the absence of any  independent quartic vertex in perturbative string amplitude, it was perhaps expected that no formulae could be written down in CHY formalism which were intrinsically tied to generic scalar field theories. We thank Poul Damgaard and Ashoke Sen for discussion on this issue.} . These results fit neatly into the basic paradigm of CHY formula for scattering amplitudes where the underlying moduli space (moduli space of punctured Riemann sphere) is universal but the integrand (which is an $n-3$ form for an $n$ particle scattering) depends on the theory. Ideologically, this appears to be in contrast with the emerging picture in kinematic space where for every scalar interaction the corresponding polytope is a unique member of the ``accordiohedron family" \cite{accoref} and the form that corresponds to the amplitude is the unique canonical form associated to the accordiohedron. 

This contrast raises a few questions. Just as the derivation of amplitude in bi-adjoint $\phi^3$ scalar theory from canonical form on the associahedron was a way to re-formulate the CHY formula, {\bf (a)} can one interpret kinematic space Stokes polytopes and their canonical forms from the moduli space perspective, and {\bf (b)} can one write down world-sheet formulae for massless planar $\phi^{4}$ amplitudes which are ``derived" from the corresponding canonical forms in kinematic space.  There appears a major obstacle in quantifying these questions :
\begin{itemize}
\item Derivation of the CHY formula for bi-adjoint $\phi^3$ scalar amplitude as a pushforward map between associahedra relied on the key result that compactification of the real moduli space ${\cal M}_{0,n}({\bf R})$ is also an associahedron and that the corresponding form on it is precisely the well known Parke-Taylor form. However, no such moduli space is known for which the corresponding polytopal realisation is the Stokes polytope (or any member of the accordiohedron family except associahedron).
\end{itemize}
A potential stumbling block which provides hints for defining the world-sheet forms is the following : The CHY integrand for quartic interactions as defined in \cite{Baadsgaard:2015ifa} is defined on the $n-3$ dimensional moduli space which is diffeomorphic to kinematic space associahedron. Then why does this associahedron not suffice to obtain $\frac{n-4}{2}$ forms which define scattering amplitude for quartic interactions. 

Using seminal results from \cite{1807ppp,1906ppp}, we show that, rather strikingly, certain projective $\frac{n-4}{2}$ forms on kinematic space \emph{associahedra} define planar amplitudes for massless quartic interactions. The notion of projectivity is the same as the one introduced in \cite{Arkani-Hamed:2017mur} and is accomplished in this case by the dependence of these forms on certain quadrangulations $Q$ of an $n$-gon.  
For an $n$ particle scattering and reference quadrangulation $Q$ we denote these forms as $\Omega^{Q}_{n}$.  
Using results from \cite{1906ppp} we show that for any $n$, $\Omega^{Q}_{n}$ are canonical forms on $\frac{n-4}{2}$ dimensional Stokes polytopes that can always be realised inside kinematic space associahedra defined in \cite{Arkani-Hamed:2017mur,HughThomas}. For a special class of quadrangulations $Q$ which consists of parallel diagonals of an $n$-gon, the corresponding Stokes polytope has a hyper-cube topology and as we show,  there is a canonical realisation of it as a boundary of a higher dimensional associahedra. 

This understanding of quartic scattering amplitude aids us in writing down world-sheet formulae for such theories even though there is no known moduli space associated to Stokes polytopes. We define $\frac{n-4}{2}$ forms $\widehat{\Omega}^{Q}_{n}$ on ${\cal M}_{0,n}({\bf R})$ which are mapped (by scattering equations) to the  $\frac{n-4}{2}$ forms $\Omega^{Q}_{n}$ on kinematic space associahedron 
 For $\phi^{3}$ interactions these pushforward maps are the CHY integral formula. We attempt to re-write the pushforward maps for the lower forms as integral formulae for certain sub-manifolds in the moduli space. We show that this is indeed true for  forms which are labelled by quadrangulations that consist of parallel diagonals of an $n$-gon and the corresponding sub-manifold in the moduli space is precisely the image of kinematic space Stokes polytope under CHY scattering equations. We also argue that for other quadrangulations,  although such sub-manifolds can be defined for which the pushforward of $\widehat{\Omega}^{Q}_{n}$ can be written as an integral formula, the sub-manifolds are not diffeomorphic to kinematic space Stokes polytopes via scattering equations.

The paper is organised as follows. In Section \ref{aquickreview} we give a quick review of the core ideas contained in \cite{Arkani-Hamed:2017mur,Banerjee:2018tun} which are needed for the rest of the paper. In Section \ref{convexrealisationstokesposet} we use the beautiful ideas of \cite{1906ppp} to realise Stokes polytopes as convex polytopes in kinematic space. We also show that the same idea can be used to obtain convex realisation of associahedra and we obtain a subset of the associahedra realisation that were defined in \cite{Arkani-Hamed:2017mur,HughThomas}. This family of realisations which are all linearly diffeomorphic to each other are referred to as ABHY associahedra in \cite{HughThomas}. In Section \ref{phi4fl} we obtain a rather surprising result that starting with a so-called planar scattering form for quartic interactions which is projective $\Omega^{Q}_{n}$, there is at least one ABHY associahedron such that the restriction of $\Omega^{Q}_{n}$ as an $\frac{n-4}{2}$ form onto the associahedron yields a form that is proportional to the scattering amplitude $m^{Q}_{n}$. In Section \ref{stinass} we place the results of \cite{Banerjee:2018tun} on a rigorous footing by showing that $\Omega^{Q}_{n}$
are canonical top forms on Stokes polytopes which are always realised in one of the ABHY associahedra. In Section \ref{wsforquar} we use this result to show that on the (real section of) CHY moduli space ${\cal M}_{0,n}({\bf R})$ one can define world-sheet forms $\widehat{\Omega}^{Q}_{n}$ that are labelled by reference quadrangulation $Q$ such that their push-forward via CHY scattering equations produces the amplitude $m_{n}^{Q}$. In Section \ref{wsst1} we attempt to write these pushforward maps as ``CHY-type" integral formula on $\frac{n-4}{2}$ dimensional sub-manifolds of the moduli space that we call world-sheet Stokes geometries $\widetilde{{\cal S}}^{Q}_n$. If the reference quadrangulation $Q$ consists of non-intersecting diagonals then we show that $\widetilde{{\cal S}}^{Q}_n$ is diffeomorphic to convexly realised ${\cal S}_{n}^{Q}$ with diffeomorphisms generated by CHY scattering equations. For any other $Q$ we argue that although such world-sheet Stokes geometries exist, they are not diffeomorphic to kinematic space Stokes polytope via scattering equations. We end with an outlook. 

\section{A quick review of Positive geometries in Kinematic space}
\label{aquickreview}
In this section we review key aspects of kinematic space associahedron and associated canonical form. We also review the main results from \cite{Banerjee:2018tun} where the positive geometry for planar quartic interactions was discussed.  We encourage the interested reader to consult the original references as well as \cite{Raman:2019utu,Jagadale:2019byr} for details. 

In \cite{Arkani-Hamed:2017mur} it was shown that the tree-level planar amplitude for massless $\phi^3$ theory can be obtained from  a positive geometry known as the associahedron \cite{Stasheff1,Stasheff2} that sits inside the kinematic space. A positive geometry $\mathcal{A}$ is a closed geometry with boundaries of \emph{all} co-dimensions and has a unique differential form $\Omega(\mathcal{A}$), known as the \emph{canonical form} - a complex differential form defined by following properties :
\begin{enumerate}
	\item it has logarithmic singularities  at the boundaries of $\mathcal{A}$, 
	\item its singularities are recursive, \emph{i.e.} at every boundary $\mathcal{B}$, \   $ Res_{\mathcal{B}} \, \Omega (\mathcal{A})= \Omega(\mathcal{B})$,
	\item when ${\cal A}$ is a point $\Omega({\cal A}):=\pm 1$.
\end{enumerate}


\noindent For massless scalar $\phi^3$ theory, the positive geometry for an $n$ particle scattering is an $n-3$ dimensional polytope known as associahedron and denoted as $\mathcal{A}_n$. The vertices  of the associahedron correspond to complete triangulations and co-dimesion $k$-faces represent $k$-partial triangulations\footnote{A partial triangulation of regular $n$-gon is a set of non-crossing diagonals.} of the $n$-gon. 



To analyse planar amplitudes one uses planar kinematic variables :
\begin{align}
\label{planarkinvars}
X_{i j} &=  (p_i + p_{i+1} + \ldots + p_{j-1})^2 ; \quad 1\le i < j \le n \,.
\end{align}  
The Mandelstam variables can be expressed in terms of these as  :
\begin{align}
s_{ij} &= 2 p_i . p_j = X_{i,j+1} + X_{i+1,j} - X_{ij} - X_{i+1,j+1}\,.
\end{align}
%
%


In kinematic space $\mathcal{K}_n$, \emph{projectivity}\footnote{Projectivity implies $\Omega(\mathcal{A})$ can only be a function of ratios of Mandelstam variables.} uniquely defines the planar kinematic form  $\Omega^{(n-3)}_n$ up to an overall sign that can be safely ignored :
\begin{equation}
\Omega^{(n-3)}_n := \displaystyle \sum_{\text{planar  graph } g} \text{sign}(g)\bigwedge_{a=1}^{n-3} d \log X_{i_a,j_a}\,.
\end{equation}

To obtain the planar amplitude, as was  prescribed in \cite{Arkani-Hamed:2017mur}, one needs to embed the  $(n-3)$ dimensional $\mathcal{A}_n$ inside the $\frac{n(n-3)}{2}$ dimensional $\mathcal{K}_n$ and then pullback $\Omega^{(n-3)}_n$  on to  $\mathcal{A}_n$ to read off the amplitude. To embed $\mathcal{A}_n$ in $\mathcal{K}_n$ the following constraints \cite{Arkani-Hamed:2017mur} are imposed :
\begin{align}\label{eq:nima-asso}
X_{ij} &\ge 0; \quad 1 \le i < j \le n\\
s_{ij} &= - c_{ij}; \quad 1 \le i < j \le n-1, \quad |i-j| \ge 2\,.
\end{align}
The first set of constraints which are inequalities cuts out a simplex ($\Delta_n$) inside $\mathcal{K}_n$ without reducing the dimensionality. The second set of constraints are $\frac{(n-2)(n-3)}{2}$ in numbers and select out an $(n-3)$-dimensional sub-manifold  ($H_n$) inside $\mathcal{K}_n$. Thus the associahedron in $\mathcal{K}_n$ is :
\begin{align}
\mathcal{A}_n := \Delta_n \cap H_n\,.
\end{align}
Notice that the above set of constraints is not unique. It is just one way of realising the associahedron in the positive region of kinematic space -- a particular convex realisation discovered in \cite{Arkani-Hamed:2017mur} which is a special class of the more general realisations known as ABHY associahedra \footnote{We will need these more general realisation of associahedra to connect the canonical form for $\phi^4$ theory in kinematic space to the differential form living in the subspace of world-sheet moduli space (See Section \ref{phi4fl}).} .


In \cite{Banerjee:2018tun}, the program of obtaining the amplitude from positive geometry was extended to $\phi^4$ interaction (and to other polynomial interactions in \cite{Raman:2019utu, Jagadale:2019byr}) for a few low values of $n$. To make this article self-contained and to set up notation, we briefly summarize the main findings of \cite{Banerjee:2018tun}. For $\phi^4$ theory, one should consider quadrangulations of an $n$-gon, where $n$ is always even.  The number of ways one can completely quadrangulate an $n=( 2N+2)$-gon is given by the famous Fuss-Catalan number, $F_{N}=\frac{1}{2N+1} \binom {3N}{N}.$ The number of internal lines for an $n$-particle graph which is the same as the number of diagonals in the quadrangulation of an $n$-gon is given by $N-1=\frac{n-4}{2}$. To connect to the positive geometry framework one should look for a combinatorial polytope living in $\frac{n-4}{2}$ dimensions and which has $F_{N}$ number of vertices. Any polytope in one dimension is a line segment. For the simplest non-trivial case  $n=6$ (\emph{i.e.} $N=2$), although the polytope is one dimensional, it would have $3$ vertices. To obtain a positive geometry one needs to systematically drop one of these vertices (\emph{i.e.} complete quadrangulations). The definition of the compatibility of a diagonal with a reference quadrangulation , which we will give now, does precisely this. 

The compatibility of a diagonal with a reference quadrangulation is defined as follows. Consider a regular polygon $P$ with $n$ vertices and label its vertices with $1,2,\ldots, n$. We will call this polygon, solid polygon and we will call its diagonals, solid diagonals. Now, consider the regular polygon formed by the mid-points of the sides of the solid polygon and label the mid-points of the sides $ [ i,i+1 ]$ by $i'$. We will call this polygon, hollow polygon and we will call its diagonals, hollow diagonals. Given a reference quadrangulation of the solid polygon $Q$ a cut $C(i',j')$ of the hollow diagonal $(i',j')$ is a set comprising the sides $[i,i+1]$ and $[j,j+1]$ of the solid polygon along with the diagonals of the quadrangulation $Q$ which intersect the diagonal $(i',j')$. We say the solid diagonal $(i,j)$ is compatible with the quadrangulation $Q$ if the cut $C(i',j')$ is connected. A quadrangulation $Q'$ is compatible with reference quadrangulation $Q$ if and only if all the diagonals of the quadrangulation $Q'$ are compatible with reference quadrangulation $Q$. This notion of compatibility is same as $Q$-compatiblity defined in \cite{Baryshnikov}. The vertices of the Stokes polytope with reference quadrangulation $Q$ are the quadrangulations that are compatible with $Q$. Note that the definition of Stokes polytope depends on the reference quadrangulation and different quadrangulations will give you different Stokes polytope. 



One has to introduce a differential $\frac{n-4}{2}$-form in the kinematic space. Using compatibility we can define a new operation on the $n$-gon called \emph{Flip}. Any $n$-point diagram with $n \ge 6$ will have one or more hexagons inside it. Flip is an operation of replacing  a diagonal of any such hexagon inside the quadrangulation of the polygon with a diagonal compatible with the reference quadrangulation. Flip helps assign particular signs ($\sigma = \pm 1$) to each compatible quadrangulation relative to its reference quadrangulation.

Let $Q$ be a quadrangulation of an $n$-gon which is associated to an planar Feynman diagram with propagators given by $X_{i_1},\dots, X_{i_\frac{n-4}{2}}$. We define the ($Q$-dependent) planar scattering form, 
\begin{equation}
\label{projectiveplanarform-Qdep}
\Omega^{Q}_{n}\ = \displaystyle \, \sum_{graphs} (-1)^{\sigma(\textrm{flip})} \, d \ln X_{i_{1}}\bigwedge  \, d \ln X_{i_{2}} \dots \, \bigwedge d \ln X_{i_{\frac{n-4}{2}}}
\end{equation}
where $\sigma(\textrm{flip})\ =\, \pm 1$ depending on whether the quadrangulation $X_{i_{1}},\dots, X_{i_{\frac{n-4}{2}}}$ can be obtained from $Q$ by even or odd number of flips.

Given $Q$ the differential form does not contain contribution from all the $\phi^4$ propagators. Consider the simplest case, i.e. $n=6$. Let's start with $Q = 14$. The set of $Q$ compatible quadrangulations are $\{\,(14\ , +),\ (36\ , -) \}$. We have attached a sign to each of the quadrangulation which measures the number of flips needed to reach it starting from reference $Q = 14$. The form $\Omega^{Q}_{6}$ on the kinematic space is given by,
\begin{equation} \label{eq:planar-phi4-form-Q14}
\Omega^{Q=14}_{6}\ =\  ( d\ln X_{14}\ -\ d\ln X_{36})\,. 
\end{equation}
It is evident that it does not capture the singularity associated to $X_{25}$ channel. We can get around this problem by considering other possible $Q$s whose forms on kinematic space are given by, 
\begin{align}
\label{eq:planar-phi4-form-Q2536}
\Omega^{Q=36}_{6}&=(d\ln X_{36}-d\ln X_{25})\cr
\Omega^{Q=25}_{6}&=(d\ln X_{25}-d\ln X_{14})\,.
\end{align}
In fact these differential forms naturally descend to the  canonical forms on Stokes polytopes ${\cal S}_{n}^{Q}$. Since a Stokes polytope is a positive geometry, it has a canonical form associated to it which has  (logarithmic) singularities on all the facets, such that the residue of restriction of this form on any of the facet equals the canonical form on the facet.
Once we have the Stokes polytopes and their corresponding canonical forms we need to embed the polytopes inside the positive region of  kinematic space. To obtain the planar amplitude we pull back the canonical forms on the polytopes. 
There are several convex realisations of Stokes polytopes. Their realisation as a simple polytope is given in \cite{Baryshnikov, Chapoton}.  In \cite{Banerjee:2018tun} inspired by ideas outlined in \cite{Arkani-Hamed:2017mur} a few low dimensional Stokes polytopes (namely $n=6, 8$ and $10$) were given particular convex realisation in the kinematic space. The strategy was to embed the Stokes polytopes ${\cal S}_{n}^{Q}$ with dimension  $\frac{n-4}{2}$ inside corresponding associahedra ($\mathcal{A}_n$) with dimension $n-3$, for given number of particle $n$. We illustrate these ideas through a simple example.

For $n=6$ if we start with reference $Q_{1} = (14)$,  the $Q_{1}$ compatible set is given by $\{(14, +),\ (36,-)\}$. The corresponding Stokes polytope is one dimensional with two vertices. We locate this Stokes polytope inside the kinematic space via the following constraints. 
\begin{equation}\label{n=6constraint1}
\begin{array}{lll}
s_{ij}\ =\ - c_{ij} \quad  \forall\ 1 \le i  <  j  \le n-1=5 , \ |i-j| \geq 2 \\
X_{13}\ =\ d_{13},\ X_{15}\ =\ d_{15}, \textrm{with}\ d_{13},\ d_{15}\ >\ 0
\end{array}
\end{equation}
The first line of constraints are precisely the ones which define the three dimensional kinematic associahedron ${\cal A}_{6}$ inside ${\cal K}_{6}$. The other two extra constraints have been imposed to make it a one-dimensional Stokes polytope ${\cal S}^{\{14\}}_{6}$ sitting inside ${\cal A}_{6}$. From the perspective of Feynman diagrams, these constraints are rather natural as planar variables from this set can never occur in Feynman diagrams of $\phi^{4}$ theory. 

Using the above constraints, it can be checked that the planar kinematic variables satisfy,
\begin{align}
\label{eq:const-X14}
X_{36} + X_{14}&=c_{14}+c_{24}+c_{15}+c_{25}=\displaystyle \sum_{i=1}^{2} \sum_{j=4}^{5} c_{i j} \geq 0, \cr
X_{25}&> 0
\end{align}
It is interesting to note that these constants on the RHS of equation \eqref{eq:const-X14} are independent of $d_{ij}$'s. As we will see in Section (\ref{phi4fl}), this observation is a corollary of a theorem which will have rather serious ramifications for us.  
We can now pull back the canonical form $\Omega^{\{14\}}_{6}$ on ${\cal S}^{\{14\}}_{6}$

\begin{equation}\label{spform2}
\begin{array}{lll}
\omega^{Q_{1}}_{6}=\left(\frac{1}{X_{14}}+\frac{1}{X_{36}}\right) d X_{14}
=:\ m_{6}({\cal S}^{Q_{1}}_{6})\ d X_{14}
\end{array}
\end{equation}

$m_{6}({\cal S}^{Q_{1}}_{6})$ is the canonical rational function associated to the Stokes polytope ${\cal S}^{Q_{1}}_{6}$. This is just a partial planar amplitude. To obtain the full planar amplitude we need to consider contributions from the other two Stokes polytopes (with $Q = 36$ and $25$). For $Q_{2} = (25)$  the $Q_{2}$-compatible set is given by $\{(25, +),\ (14,-)\}$ and for  $Q_{3} = (36)$  the $Q_{3}$-compatible set is $\{(36, +),\ (25,-)\}$. 

For the full planar amplitude one adds  up the canonical rational functions associated to all different Stokes polytopes with particular weights,
 \begin{align} 
\widetilde{{\cal M}}_{6} &:= \alpha_{Q_{1}} \, m_{6}({\cal S}^{Q_{1}}_{6})\ + \alpha_{Q_{2}} \, m_{6}({\cal S}^{Q_{2}}_{6})\ + \alpha_{Q_{3}} \, m_{6}({\cal S}^{Q_{3}}_{6})\ \\
&= \alpha_{Q_{1}} \left(\frac{1}{X_{14}}\ + \frac{1}{X_{36}}\right)\ +\ \alpha_{Q_{2}}\left(\frac{1}{X_{25}}\ +\ \frac{1}{X_{14}}\right)\ +\ \alpha_{Q_{3}} \left(\frac{1}{X_{36}}\ +\ \frac{1}{X_{25}}\right).
\end{align} 
It is evident that $\widetilde{{\cal M}}_{6}={\cal M}_{6}$ (the planar amplitude) if and only if $\alpha_{Q_{1}} = \alpha_{Q_{2}}= \alpha_{Q_{3}} =  \frac{1}{2}$ . 

Higher point amplitudes can also be obtained in a similar way since thanks to \cite{1906ppp} we have a convex realisation of Stokes polytopes inside ABHY associahedra for arbitrary $n$. 
To obtain the $n$-point planar amplitude for $\phi^4$ we need to consider the rational function contributions from all possible Stokes polytopes and sum them up with particular weights,
\begin{equation}\label{weights1}
\widetilde{{\cal M}}_{n}\ =\ \sum_{Q} \alpha_{Q} \ m_{n}(Q).
\end{equation}

The weights $\alpha_{Q}$ may look like arbitrary free parameters but they are constrained by physical factorization of amplitudes which is in one to one correspondence with combinatorial factorization of Stokes polytopes. Given any diagonal $(ij)$, the corresponding facet $X_{ij} = 0$ is a product of two lower dimensional Stokes polytopes, 
\begin{equation}\label{factor1}
{\cal S}_{n}^{Q}\bigg|_{X_{ij} = 0}\ \equiv {\cal S}_{m}^{Q_{1}} \times {\cal S}_{n+2-m}^{Q_{2}}
\end{equation}
where $Q_{1}$ and $Q_{2}$ are such that $Q_{1}  \cup Q_{2} \cup (ij) = Q$. It can also be seen that the residue over each Stokes polytope which contains a boundary $X_{ij} \rightarrow 0$ factorizes into residues over lower dimensional Stokes polytopes. 
This factorization property naturally implies factorization of amplitudes as follows. Consider the $n$-gon with a diagonal $(ij)$ (with $i,j$ such that this diagonal can be part of a quadrangulation). This diagonal subdivides the $n$-gon into a two polygons with vertices $L := \{i,\dots,j\}$ and $R := \{j,\dots,n,1,\dots i\}$ (see figure  \ref{fig:factorization}). 
\begin{figure}
	\centering 
	\includegraphics[width=8cm]{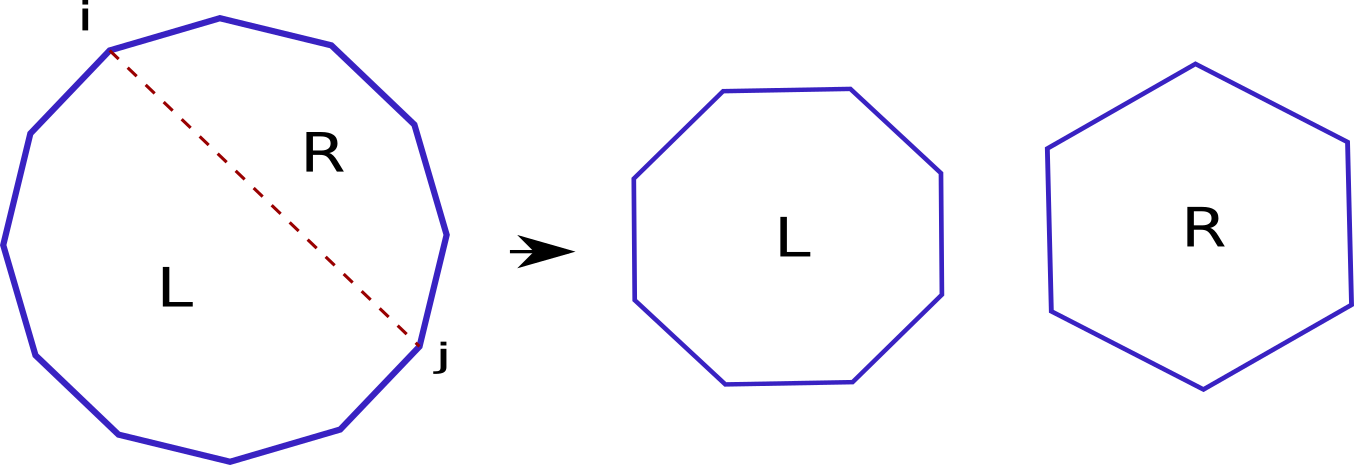} 
	\caption{\label{fig:factorization} A dual polygon breaks into two smaller polygons when a $\phi^4$ propagator is taken on-shell. In this example,  a 12-gon breaks into an octagon and a hexagon.}
\end{figure}
These two polygons, by construction,  correspond to two Stokes polytopes $\widetilde{M}_{\vert j - i + 1\vert}$ $=: \widetilde{{\cal M}}_{L}$ and $\ \widetilde{M}_{n+2 - (\vert j - i + 1\vert)}$ $=:\widetilde{{\cal M}}_{R}$ respectively. This implies if one takes $X_{ij} \to 0$ the amplitude factorizes to two lower point ones,  
\begin{equation} \label{factor2}
\widetilde{{\cal M}}_{n}\bigg|_{X_{ij}\, =\, 0}\ =\ \widetilde{{\cal M}}_{L} \,\frac{1}{X_{ij}}\, \widetilde{{\cal M}}_{R}
\end{equation}
This in turn put constraints on $\alpha_{Q}$'s which  have to satisfy the following relation, 
\begin{equation}\label{eq:alpha-cons}
\sum_{Q\ \textrm{containing} (ij)} \alpha_{Q}\ =\ \sum_{Q_{L}, Q_{R}}\alpha_{Q_{L}}\alpha_{Q_{R}}
\end{equation}
where $Q_{L}$ and $Q_{R}$ range over all the quadrangulations of the two polygons $L$ and $R$.


\section{Convex realisation of Stokes Polytope in the Kinematic Space} 
\label{convexrealisationstokesposet}
Convex realisations of accordiohedra provide us with positive geometries whose canonical forms yield scattering amplitudes for planar scalar field theories with polynomial interactions \cite{Arkani-Hamed:2017mur, Banerjee:2018tun, Raman:2019utu, Jagadale:2019byr}. For cubic interactions, such a realisation of associahedron was derived in \cite{Arkani-Hamed:2017mur}. In the context of quartic interactions, based on certain physics inputs motivated by the ideas in \cite{Arkani-Hamed:2017mur} an attempt was made in \cite{Banerjee:2018tun} to define a convex realisation of Stokes polytopes in kinematic space. In contrast to the associahedron case, this construction did not provide a \emph{canonical} convex realisation of an arbitrary Stokes polytope of generic dimension, and only a few examples of low dimensional Stokes polytopes were given. A rigorous and complete realisation of Stokes polytopes as embedded in kinematic space was provided in \cite{1906ppp}.  For sake of pedagogy and to ease the task of fellow travellers who venture in these worlds, we will now give a detailed transcription of the state of the art construction developed in \cite{1906ppp} to a working physicist's language.  

Although the construction in \cite{1906ppp} relies on inputs from theory of quivers and corresponding path algebras \cite{1807ppp}, the convex realisation of Stokes polytope can be understood without a detailed knowledge of them. In the following we state the key result in \cite{1906ppp} pertaining to such realisations (theorem 2.44 in \cite{1906ppp}) and explain the construction ``algorithmically" . 

\begin{itemize}
\item We consider an $n$-gon with ordered vertices which go clockwise. We label the external edges such that the one that connects vertices $i$ and $i+1$ is $[i,i+1]$. We denote by $\cal E$ the set of all edges. We consider a reference quadrangulation $Q$ that consists of $\frac{n-4}{2}$ diagonals, which divide the $n$-gon into quadrilaterals. We refer to these quadrilaterals as cells and use $\cal C$ to denote them. The diagonals in $Q$ are labelled such that if one of them intersects vertices $p$ and $q$ with $p<q$ we denote it as $D_{p,q}$. We then consider those edges of the $n$-gon that do not intersect any of the diagonals and refer to the set of such edges as $Q^{c}$. We place a ``hollow" vertex on each edge $[i,i+1]\in{\cal E}\setminus Q^{c}$ and label it $\widetilde{v}_{i}$.  We also place a hollow vertex on every $D_{pq}\in Q$ and label it $v_{pq}$.
See Figure.~\ref{figone} for an example. 
\begin{figure}[H]
    \centering
    \includegraphics[scale=0.3]{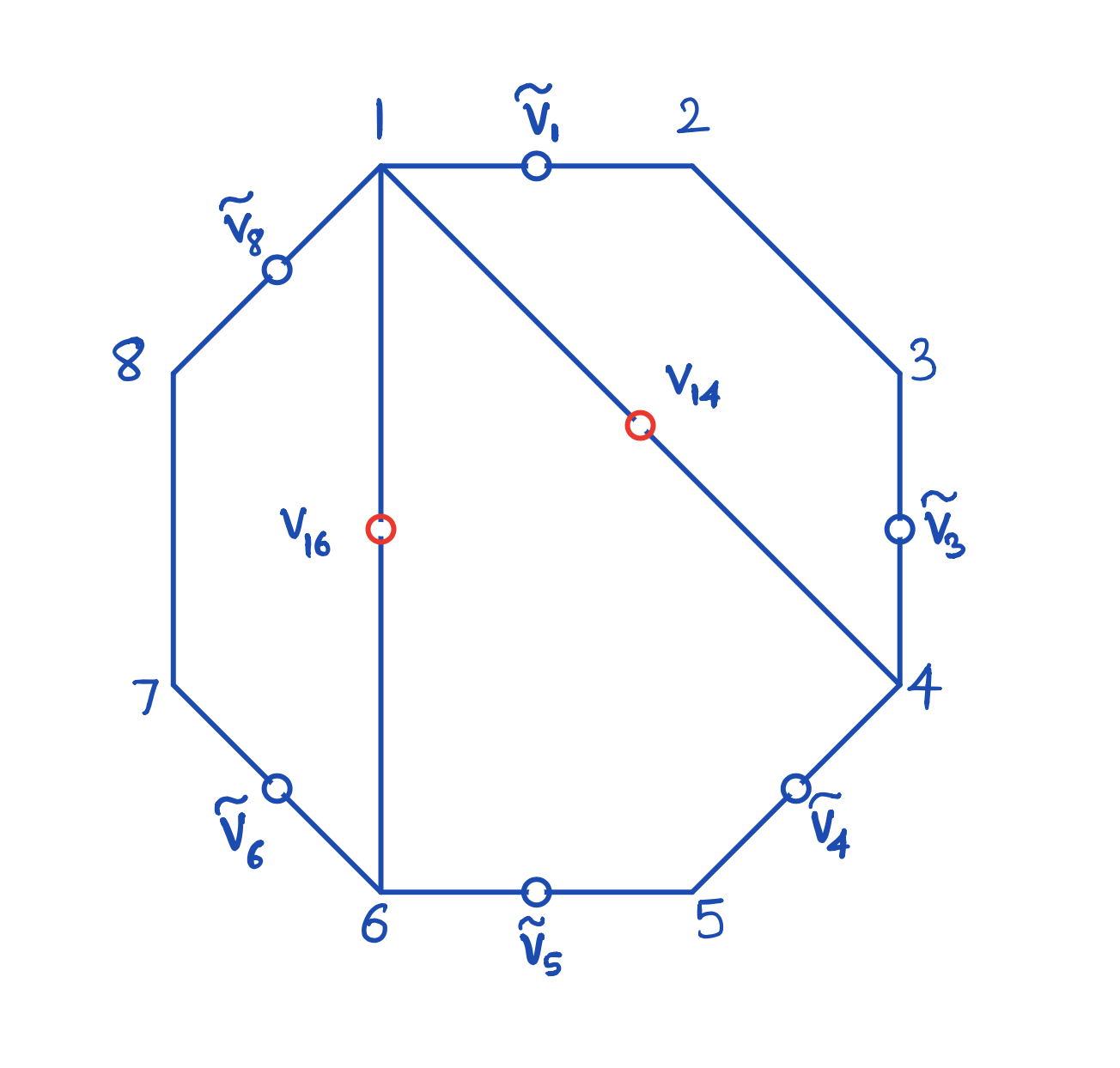}
    \caption{\{14,16\} quadrangulation of the octagon and the hollow vertices on the external edges and the diagonals.}
                \label{figone}
\end{figure}
%
%
\item In each cell we draw line segments such that,
\begin{itemize}
\item they have one end point on the hollow vertex on an external edge and the other on the hollow vertex on a diagonal which intersects the external edge\,,
\item they have their end points on two diagonals if the two intersect at a vertex of the $n$-gon\,.
\end{itemize}
We place arrows on these line segments such that within each cell ${\cal C}$, a path $p_{\cal C}$ comprising of such arrows, which begins on an external edge and ends on another is oriented counter-clockwise. See Figure.~\ref{fig2} for an example. It is easy to see that within each cell such a path is unique.
%
\begin{figure}
    \centering
    \includegraphics[scale=0.3]{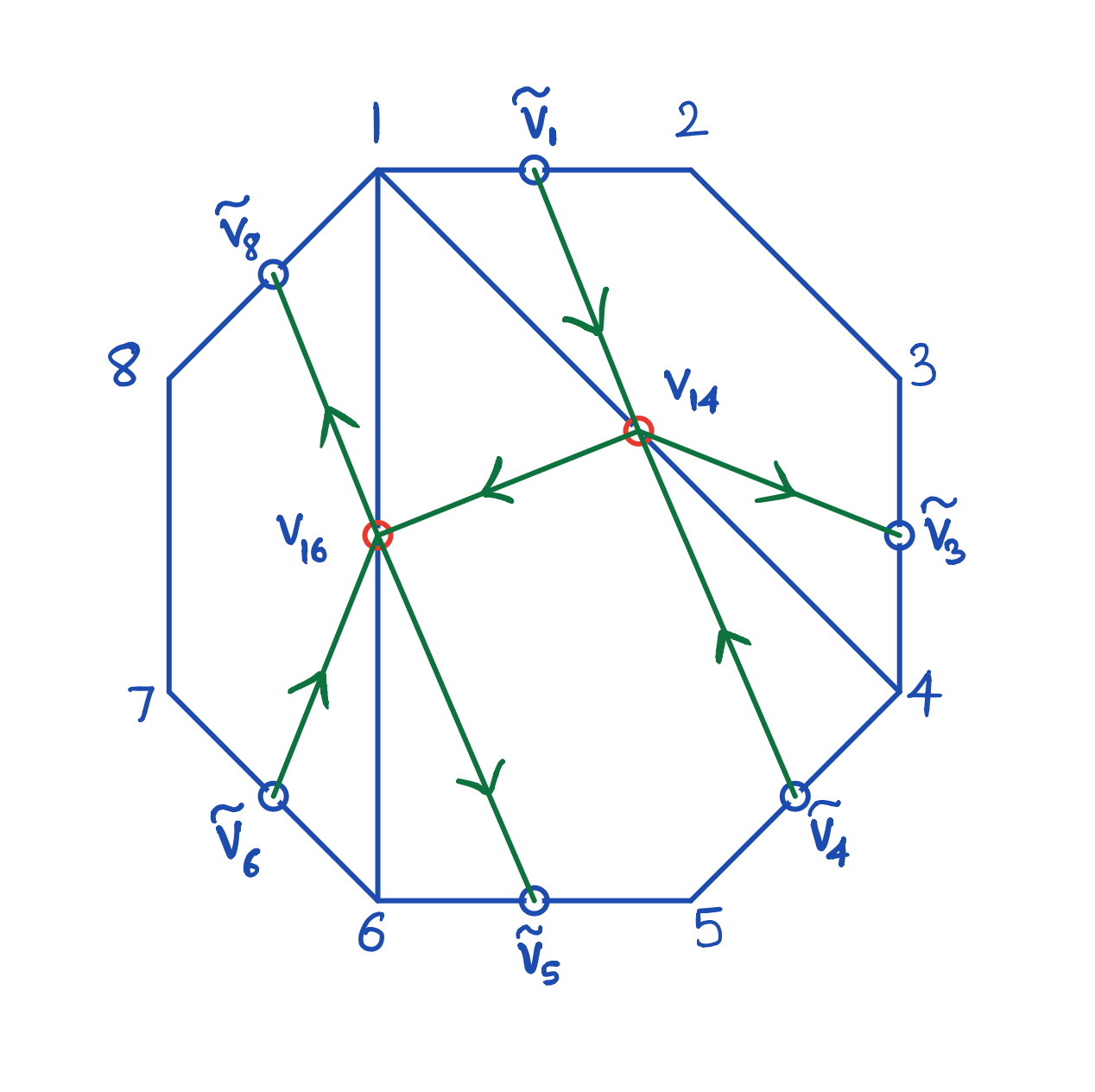}
    \caption{Paths within the three cells are marked in green.}
    \label{fig2}
\end{figure}
\item We then consider the set of all possible internal paths $p_{\{v_{i_{k},j_{k},\dots,v_{i_{\ell},j_{\ell}}}\}}$ which begin on $v_{i_{k},j_{k}}$ and end on $v_{i_{\ell},j_{\ell}}$ without intersecting any external edge. 
Once again it is clear from the figure that such paths are unique. Note that the set includes paths of length zero ($p_{v_{i,j}}$ for some $v_{i,j}$) which are the hollow vertices on diagonals themselves. The cardinality of this set denoted as $K$ depends on the choice of reference quadrangulation.
%
%

We now define the set of proper walks. A proper walk is a path between two hollow vertices on the external edges such that it does not contain two segments in the same cell. We denote the set of such proper walks by ${\cal W}$. 
\item Given a path $p$ we define a subset of proper walks that are  the key players in the construction. 
\begin{itemize}
\item  A proper walk containing $p$, $w_{p}$ reaches peak at $p$ if both the arrows on $w_{p}$ incident on $p$ are outgoing. 
A walk $w_{p}$ passing through $p$ which has a peak  $p$ and is such that all arrows not on $p$ and not incident on $p$ are incoming is denoted as $\hook p \unhook$.
\item Similarly, $w_{p}$ reaches a dip at $p$ if the two arrows on it which are incident on $p$ are incoming. If $w_{p}$ is such that all arrows not on $p$ and not incident on $p$ are outgoing, it is denoted as $\cohook p \counhook$.
\item We call $w_{p}$ neutral if one of the incident arrows on $p$ is incoming and the other is outgoing with incoming (outgoing) arrow preceded (followed) by arrows with reverse orientation. We denote the two neutral paths incident on $p$ as $\hook p \counhook$ and $ \cohook p \unhook$ respectively. 
\end{itemize}
\item It was shown in \cite{1906ppp} that for each path $p$, there is a unique proper walk $w_{p}$ which has a peak (or a dip) at $p$ and there are precisely two neutral walks at $p$. 
\end{itemize}
With the above algorithm in hand, we can now paraphrase  in physicist's language the main theorem in \cite{1906ppp} concerning convex realisation of Stokes polytopes as follows :

In the positive region of kinematic space and for any choice of constants $d_{p_{1}},\dots, d_{p_{K}}$, the following constraints locate Stokes polytope :
\begin{align}\label{stokc}
X_{\hook p_{i} \unhook}+X_{\cohook p_{i} \counhook}-X_{\hook p_{i} \counhook}-X_{ \cohook p_{i} \unhook}=d_{p_{i}},\quad
\forall\,\, i\in\{1,\dots, K\}\,.
\end{align}
This result provides a convex realisation of Stokes polytopes (and in general any accordiohedron) inside the positive hyper-quadrant of the kinematic space. We work out a simple example of this algorithm in Appendix \ref{appenalgo}.

We will now show that the convex realisation of any accordiohedron discussed above  includes a subset of (generalised) associahedra defined in \cite{Arkani-Hamed:2017mur, HughThomas}. 
\subsection{Convex realisations of associahedra}\label{sec3}
Unlike Stokes polytopes (and other accordiohedra based on $p$-angulations with $p>3$) there is a unique combinatorial associahedron associated to partial triangulations of a polygon.  However, there are several inequivalent convex realisations of an associahedron in an ambient Euclidean space. A special class of these realisations that correspond to the triangulation 
$T=\{(1,3),(1,4),\dots,(1,n-1)\}$ of the $n$-gon was discovered in \cite{Arkani-Hamed:2017mur} and we denote them by ${\cal A}_{n}$. We denote realisations based on a triangulation $T$ other than the one mentioned above by ${\cal A}_{n}^{T}$. These realisations were discussed in \cite{1906ppp} and analysed as a sub-set of a wider class of generalised associahedra in \cite{HughThomas}. Following the analysis of \cite{HughThomas} we will refer to ${\cal A}_{n}^{T}$ as ABHY associahedra. However we note that ${\cal A}_{n}^{T}$ are only a sub-set of convex realisations discussed in \cite{HughThomas}, namely those associated to linear quivers with arbitrary ordering \footnote{The easiest way to see this is via the comparison of so-called vertices and arrows of the linear quiver in \cite{HughThomas} with hollow vertices and arrows on $T$ that generate ${\cal A}_{n}^{T}$. A detailed discussion on this, though straightforward will cause too much digression and we avoid it in the interest of pedagogy.}\,\footnote{We are indebted to Song He for pointing out this injection  between $\{{\cal A}_{n}^{T}\}$ and ABHY associahedra to us.} .

\begin{claim}
For a reference triangulation $T$ of an $n$-gon, the algorithm described in Section.~\ref{convexrealisationstokesposet} gives rise to a convex realisation of the $n-3$ dimensional associahedron ${\cal A}_{n}^{T}$ in the $\frac{n(n-3)}{2}$ dimensional planar kinematic space (with basis given by planar kinematic variables) with the constraints :
\begin{equation}\label{sijcijT} 
s_{ij}=-c_{ij}\quad\forall\,\, (ij)\notin T^{c}\quad\text{and}\quad |i-j|\ge 2\,,
\end{equation}
where $T^{c}$ is the triangulation obtained via a $\frac{2\pi}{n}$ rotation of each diagonal in $T$ in the counterclockwise sense, i.e. if $(k,\ell)\in T$ then $(k-1,\ell-1)\in T^{c}$. 
\end{claim}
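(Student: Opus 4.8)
The plan is to specialize the general accordiohedron construction of Section \ref{convexrealisationstokesposet} to the case where the reference dissection is a full triangulation $T$, and then identify the resulting defining constraints \eqref{stokc} with the familiar $s_{ij}=-c_{ij}$ relations \eqref{sijcijT}. The first step is combinatorial: I would show that when the reference quadrangulation is replaced by a triangulation, every cell is a triangle, so the hollow-vertex/arrow data of the algorithm degenerates in a controlled way. In each triangular cell the unique counter-clockwise path $p_{\cal C}$ consists of exactly two arrows, one from a hollow vertex on an external edge (or on a diagonal) to the hollow vertex on the opposite diagonal. Consequently every maximal internal path $p_{v_{i_k,j_k},\dots,v_{i_\ell,j_\ell}}$ is built out of these two-arrow segments, and one checks that the set of such paths is in bijection with the diagonals of $T$ — i.e.\ $K=n-3$ and each $p$ can be labelled by a unique diagonal $(k,\ell)\in T$. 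This is the step where one must be careful about the relabelling/rotation that produces $T^{c}$: the hollow vertex $v_{pq}$ sits on the diagonal $(p,q)$ of the \emph{solid} polygon, but when one reads off which planar variable $X_{\ast}$ the peak/dip/neutral walks terminate on, the endpoints land on \emph{external edges} $[i,i+1]$, and tracing through the geometry shows the four walks $\hook p\unhook$, $\cohook p\counhook$, $\hook p\counhook$, $\cohook p\unhook$ attached to the diagonal $(k,\ell)\in T$ terminate on edges whose indices are shifted by one, producing precisely $X_{k-1,\ell-1}$ and its neighbours — hence the appearance of $T^{c}$.

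The second step is to compute, for a diagonal $(k,\ell)\in T$, the explicit combination $X_{\hook p\unhook}+X_{\cohook p\counhook}-X_{\hook p\counhook}-X_{\cohook p\unhook}$ in terms of planar variables. Using the identification of the four terminal edges from step one, together with the definition of the Mandelstam-like quantities in terms of planar variables, $s_{ij}=X_{i,j+1}+X_{i+1,j}-X_{ij}-X_{i+1,j+1}$, I expect this alternating sum to collapse exactly to $-s_{ab}$ for the pair $(a,b)=(k-1,\ell-1)$ (equivalently $s_{ij}$ for the corresponding $(ij)\notin T^{c}$, $|i-j|\ge 2$). Thus the constraint \eqref{stokc} becomes $-s_{ab}=d_p$, i.e.\ $s_{ab}=-c_{ab}$ with $c_{ab}:=d_p>0$, which is exactly \eqref{sijcijT}. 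One then counts: there are $n-3$ such constraints, matching $K=n-3$, and they are independent, so they cut out an $(n-3)$-dimensional affine slice of the $\tfrac{n(n-3)}{2}$-dimensional kinematic space; intersecting with the positivity region $X_{ij}\ge 0$ gives an $(n-3)$-dimensional polytope.

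The third step is to argue this polytope is combinatorially the associahedron realised on the triangulation $T$. Here I would invoke the master theorem of \cite{1906ppp} quoted just above the claim: for \emph{any} reference dissection, the constraints \eqref{stokc} with positive constants $d_{p_i}$ cut out the accordiohedron associated to that dissection inside the positive quadrant. Since the accordiohedron of a full triangulation is, by definition/standard fact, the associahedron (the vertices are the triangulations $T'$ compatible with $T$, and every triangulation is $T$-compatible when $T$ is a triangulation), the polytope we obtain is a convex realisation of the combinatorial associahedron; that this particular realisation coincides with the ABHY realisation ${\cal A}_n^T$ of \cite{Arkani-Hamed:2017mur,HughThomas} follows from matching defining hyperplanes — both are cut out by the same $s_{ij}=-c_{ij}$ equations for $(ij)\notin T^c$ — which is exactly the content of \eqref{sijcijT}. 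For $T=\{(1,3),\dots,(1,n-1)\}$ one checks $T^c=\{(2,n),(2,4),\dots\}$ reduces \eqref{sijcijT} to the original ABHY constraints $s_{ij}=-c_{ij}$ for $1\le i<j\le n-1$, $|i-j|\ge 2$, recovering \cite{Arkani-Hamed:2017mur} as the special case.

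The main obstacle I anticipate is step one — pinning down the exact index bookkeeping that turns $T$ into $T^{c}$. The algorithm of \cite{1906ppp} is phrased in terms of hollow vertices on edges and on diagonals, arrows oriented counter-clockwise within cells, and peak/dip/neutral walks; translating this faithfully for a triangulation and verifying that the four walks attached to each diagonal terminate on the \emph{edges} predicted by the $2\pi/n$ rotation requires drawing the picture carefully and checking a couple of representative cells (a ``fan'' cell incident to vertex $1$, and a generic cell), rather than any deep argument. Once that dictionary is fixed, step two is a one-line substitution into the $s_{ij}$-vs-$X_{ij}$ identity, and step three is an appeal to the already-quoted theorem plus the elementary fact that every triangulation is compatible with any reference triangulation.
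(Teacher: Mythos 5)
Your step one contains the error that sinks the argument: the internal paths of the algorithm are \emph{not} in bijection with the diagonals of $T$, and $K\neq n-3$. The set of internal paths includes the $n-3$ length-zero paths (the hollow vertices $v_{k,\ell}$ themselves) \emph{and} all longer paths traversing several consecutive diagonals of $T$; each such path $p$ running from $D_{i_k j_k}$ to $D_{i_\ell j_\ell}$ is labelled by the unique diagonal $(k,\ell)$ of the $n$-gon that crosses exactly the diagonals of $T$ visited by $p$, and this diagonal necessarily lies \emph{outside} $T$. Hence $K=\frac{n(n-3)}{2}-(n-3)=\frac{(n-2)(n-3)}{2}$, which is exactly the codimension needed to cut an $(n-3)$-dimensional polytope out of the $\frac{n(n-3)}{2}$-dimensional kinematic space (check $n=6$, $T=\{13,14,15\}$: the six internal paths $v_{13},\,v_{14},\,v_{15},\,v_{13}v_{14},\,v_{14}v_{15},\,v_{13}v_{14}v_{15}$ correspond to the six diagonals $24,35,46,25,36,26\notin T$). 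With your count of $n-3$ constraints you would obtain a $\frac{(n-2)(n-3)}{2}$-dimensional slice, not the associahedron, and your identification of the fixed Mandelstams is also inverted: you fix $s_{ab}$ for $(a,b)=(k-1,\ell-1)$ with $(k,\ell)\in T$, i.e.\ for $(a,b)\in T^{c}$, which is the complement of the set appearing in the claim (your own parenthetical ``equivalently $(ij)\notin T^{c}$'' contradicts the line it sits in).

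The paper's proof is precisely your step two carried out with the correct labelling: for an internal path $p$ one takes $k$ and $\ell$ to be the third vertices of the two triangular cells adjacent to the \emph{end} diagonals of $p$ on the sides not containing an arrow of $p$; tracing the peak, dip and neutral walks then gives $X_{\hook p\unhook}=X_{k,\ell}$, $X_{\cohook p\counhook}=X_{k-1,\ell-1}$, $X_{\hook p\counhook}=X_{k,\ell-1}$, $X_{\cohook p\unhook}=X_{k-1,\ell}$, so the alternating sum equals $-s_{k-1,\ell-1}$ with $(k,\ell)\notin T$, i.e.\ $(k-1,\ell-1)\notin T^{c}$ --- one constraint for each diagonal not in $T$, which is exactly \eqref{sijcijT}. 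Your algebraic identity relating the alternating sum to $s_{ij}$ and your appeal to the accordiohedron theorem of \cite{1906ppp} are both what the paper uses, but the proof cannot be completed until the path--diagonal dictionary, and with it the constraint count and the set of fixed $s_{ij}$, are corrected.
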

%
\begin{proof} 
Consider the path $p$ that starts on $D_{i_k j_k}$ and ends on $D_{i_\ell j_\ell}$. Both $D_{i_kj_k}$ and $D_{i_\ell j_\ell}$ belong to two triangular cells, one which contains an arrow of $p$ and one which does not. Let $k$ be the third vertex of the triangular cell adjacent to $D_{i_k j_k}$ and does not contain an arrow of $p$. Similarly, let $\ell$ be the third vertex of the triangular cell adjacent to $D_{i_\ell j_\ell}$ and does not contain an arrow of $p$. See Figure.~\ref{fig3}.
\begin{figure}
    \centering
    \includegraphics[scale=0.3]{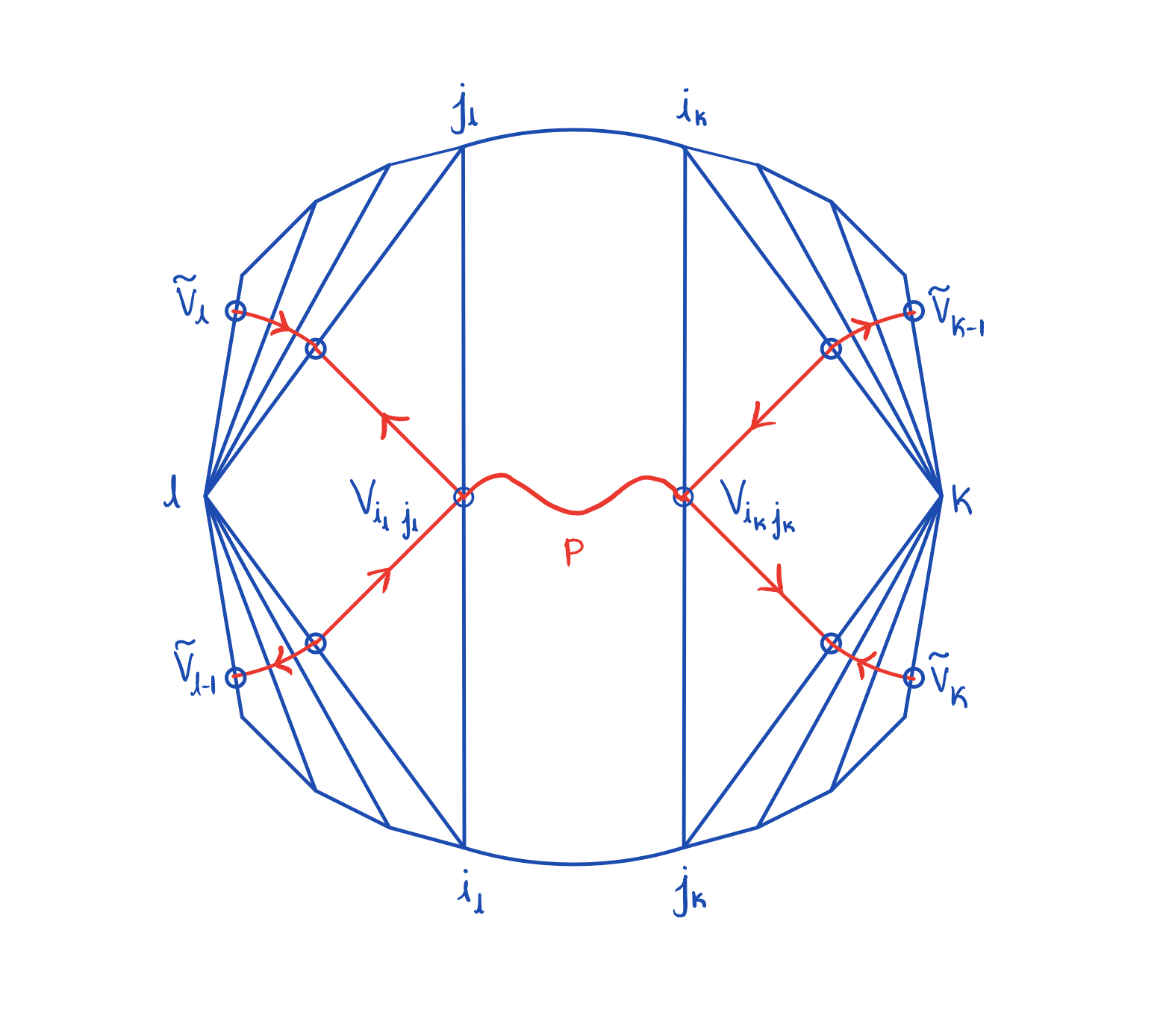}
    \caption{\,}
    \label{fig3}
\end{figure}
\noindent The outgoing arrow from $v_{i_k,j_k}$ goes to the diagonal/edge $D_{k,j_k}$ and the subsequent incoming arrows go to the hollow vertex $\widetilde{v}_{k}$. The incoming arrow from $v_{i_k,j_k}$ goes to the diagonal/edge $D_{i_k,k}$ and the subsequent outgoing arrows go to the hollow vertex $\widetilde{v}_{k-1}$. A similar analysis on the $\ell$ side tells us $ X_{\hook p \unhook} = X_{k,\ell}$, $X_{\cohook p \counhook}= X_{k-1,\ell-1}$, $X_{\hook p \counhook}=X_{k,\ell-1}$, and $X_{\cohook p \unhook }= X_{k-1,\ell}$. Thus we have :
\begin{align}
\label{intermediateAssociahedronconstraint}
X_{\hook p \unhook}+X_{\cohook p \counhook}-X_{\hook p \counhook}-X_{ \cohook p \unhook}&=
X_{k,\ell} + X_{k-1,\ell-1}-X_{k-1,\ell}-X_{k,\ell-1}\cr
&=-s_{k-1,\ell-1}\,,
\end{align}
where the second line follows from the definition of planar kinematic variables in \eqref{planarkinvars}.

Since the diagonal $D_{k,l}$ intersects the diagonals $D_{i_k,j_k}$ and $D_{i_\ell,j_\ell}$, it does not belong to the reference triangulation $T$. Thus $(k-1,\ell-1)\notin T^c$ and hence are constants according to \eqref{sijcijT}. Imposing this in \eqref{intermediateAssociahedronconstraint} we obtain :
\begin{align}\label{abhyppp}
X_{\hook p \unhook}+X_{\cohook p \counhook}-X_{\hook p \counhook}-X_{\cohook p \unhook}&=c_{k-1,\ell-1}\,,
\end{align}
which is precisely the associahedron constraint associated with the path $p$. Such an equation is valid for every diagonal which does not belong to the reference triangulation. This proves the claim.
\end{proof}{}

\section{Planar $\phi^{4}$ amplitudes from lower forms on ABHY associahedra}\label{phi4fl}
There is a happy relationship between the Stokes polytope constraints in equation \eqref{stokc} and the associahedron constraints in equation \eqref{abhyppp} if $Q\subset T$. In Appendix \ref{appb}, we show that the Stokes polytope constraints \eqref{stokc} are a linear combination of the ABHY associahedron constraints \eqref{abhyppp}.
This observation immediately implies that the restriction of the projective planar scattering form $\Omega^{Q}_{n}$ \eqref{projectiveplanarform-Qdep} onto the ABHY associahedron ${\cal A}^{T}_{n}$ (with $Q\subset T$) is an $\frac{n-4}{2}$ form proportional to the partial contribution to the amplitude $m^{Q}_{n}$.  

A rather striking consequence of this assertion  is that the convex realisation of Stokes polytope never enters the picture. The positive geometries which produce the amplitudes for $\phi^{3}$ as well as for $\phi^{4}$ massless planar interactions are always the ABHY associahedra. While amplitudes in bi-adjoint $\phi^{3}$ theory are given by the canonical top form on any one of the associahedra ${\cal A}_{n}^{T}$ (e.g. ${\cal A}_{n}$), the forms for $\phi^{4}$ theory are a set of lower $\frac{n-4}{2}$ forms on \{${\cal A}_{n}^{T}$\}. These forms are simply the restrictions of the projective planar scattering form $\Omega^{Q}_{n}$ for some reference quadrangulation $Q$. 

We first illustrate these ideas in the concrete example of 6 particle scattering. Let us first consider $Q=\{14\}$ and the ABHY associahedron ${\cal A}_{6}$ defined using equation~(\ref{abhyppp}) with $T=\{13,14,15\}$. Using associahedron constraints we get :
%
\begin{align}
X_{14}+X_{36}=\sum_{i=1}^{2}\sum_{j=4}^{5}\ c_{ij}\,.
\end{align}
We see that on ${\cal A}_{6}$ the planar scattering form $\Omega^{\{14\}}_{6}$ \eqref{eq:planar-phi4-form-Q14} 
descends to a one-form as :
\begin{align}
\Omega^{\{14\}}_{6}\bigg{|}_{{\cal A}_{6}} =\left(\frac{1}{X_{14}}+\frac{1}{X_{36}}\right)dX_{14}\,.
\end{align}
Let us now consider $Q=\{36\}$ and the associahedron ${\cal A}_{6}^{\{13,35,36\}}$. Associahedron constraints give :
%
\begin{align}
X_{36}+X_{25}=c_{13}+c_{14}+c_{36}+c_{46}\,.
\end{align}
Thus on ${\cal A}_{6}^{\{13,35,36\}}$ the planar scattering $\Omega^{\{36\}}_{6}$ in \eqref{eq:planar-phi4-form-Q2536} descends to :
\begin{equation}
\Omega^{\{36\}}_{6}\bigg{|}_{{\cal A}_{6}^{\{13,35,36\}}}=\left(\frac{1}{X_{36}}\ +\ \frac{1}{X_{25}}\right)dX_{36}\,.
\end{equation}
We now consider $Q=\{25\}$ and the associahedron ${\cal A}_{6}^{\{24,25,26\}}$. As before we can use associahedron constraints to show that :
\begin{equation}
X_{25}+X_{14}=\sum_{j=2}^{3}\sum_{j=5}^{6}\ c_{ij}\,.
\end{equation}
On ${\cal A}_{6}^{\{24,25,26\}}$ the planar scattering $\Omega^{\{25\}}_{6}$ in \eqref{eq:planar-phi4-form-Q2536} descends to :
\begin{align}
\Omega^{\{25\}}_{6}\bigg{|}_{{\cal A}_{6}^{\{24,25,26\}}}=\left(\frac{1}{X_{25}}+\frac{1}{X_{14}}\right)dX_{25}\,.
\end{align}
We thus see that the complete amplitude $m_{6}=\sum_{Q}\,\alpha_{Q}\ m^{Q}_{6}$ is obtained by summing over rational functions associated to $\Omega^{(Q)}_{6}\vert_{{\cal A}_{6}^{T}}$ with $Q\subset T$.  

It can be readily seen that this result generalises to the $n$ point amplitude. Since the constraints relating $X_{ij}$ such that $(ij)\in Q\cup Q^{m}$ \footnote{$Q^{m}$ is the set of all diagonals obtained from $Q$ by mutation.} are contained in the associahedron constraints when $Q\subset T$, we have the following result : 
All  ``partial" amplitudes $m^{Q}_{n}$ associated to quadrangulation $Q$ are simply restrictions of the projective planar scattering form $\Omega^{Q}_{n}$ to \emph{any} of the ABHY associahedra ${\cal A}_{n}^{T}$ with $Q\subset T$. 

We thus see that there are \emph{unique} projective forms on ABHY associahedra, such that a weighted sum over all of them produces the planar amplitude $m_{n}$ for quartic interactions. Combinatorics of a Stokes polytope is used to define projective forms but rather than viewing these forms as top forms on a convex polytope, we can view them as lower forms on a kinematic associahedron.

We find this perspective rather appealing as the only positive geometries needed to define amplitudes for $\phi^{p}$ theories are the ABHY associahedra. But rather than considering only canonical top forms on it which generate (planar) $\phi^{3}$ amplitude, we need to consider projective forms of various lower ranks which generate planar amplitudes for other scalar interactions. Although our analysis in this section and Appendix (\ref{appb}) are restricted to quartic interactions, it can be generalised to all polynomial interactions.

\section{Convex realisation of Stokes polytopes inside ABHY associahedra}\label{stinass}
In the last section we saw that projectivity and ABHY associahedra are enough to obtain the planar amplitudes in scalar field theory with quartic interactions. In the case of $\phi^{3}$ interactions, the projective form is the \emph{unique} canonical top form on kinematic space associahedron. Hence a natural question is if the lower dimensional projective forms $\Omega^{Q}_{n}$ can also be understood as canonical top forms on positive geometries in kinematic space. This idea was realised in \cite{Banerjee:2018tun,Raman:2019utu,Jagadale:2019byr} where convex realisations of Stokes polytopes (and accordiohedra in general) were used to restrict the planar scattering forms as unique canonical top forms on them and thereby extract the amplitude of various theories.

In this section we analyse such realisations in the context of Stokes polytopes. 
We denote such convex realisations as ${\cal S}_{n}^{Q}$. In \cite{Banerjee:2018tun} one, two and three dimensional Stokes polytopes were convexly realised inside the kinematic space. In fact the realisation was such that the polytopes sat either in the interior of, or on the boundary of ${\cal A}_{n}$.  We will now try to realise generic (in any dimension and for any quadrangulation $Q$) Stokes polytopes inside \footnote{By ``inside" we mean  either in the interior of the associahedron or embedded as one of the lower dimensional facets of the associahedron.}  ABHY associahedra ${\cal A}_{n}^{T}$. In Appendix (\ref{spoofQ}) we also prove that the canonical forms associated to ${\cal S}_{n}^{Q}$ indeed generate amplitudes for planar quartic interactions, a result that was anticipated in \cite{Banerjee:2018tun, Raman:2019utu}. 
Our primary result in this section is that any Stokes polytope with a reference quadrangulation $Q$ and ``size" determined by certain constants $d_{ij}$ can always be embedded inside (at least) one of the ABHY associahedra which satisfies the following properties :
\begin{itemize}
\item It is an ABHY associahedron ${\cal A}_n^{T}$ with reference triangulation $T\supset Q$. 
\item The constants $c_{ij}$ which determine the size of ${\cal A}_{n}^{T}$ in ${\cal K}_{n}$ bound $d_{ij}$ from above and below, as in e.g. equation \eqref{boundond}.
\end{itemize}
In the following we prove our claim for $Q=\{(1,4),(1,6),\dots,(1,n-2)\}$ where the ``enveloping" associahedron is ${\cal A}_{n}$ defined in \cite{Arkani-Hamed:2017mur}.

\subsection{Embedding ${\cal S}_{n}^{\{14,16,\dots,(1,n-2)\}}$ in ${\cal A}_{n}$}
\label{proof-convergentdissection}
In this section we consider a Stokes polytope obtained by considering the reference dissection, all whose elements originate from vertex $1$, i.e. $Q=\{14,16,18,\dots,(1,(n-2))\}$. We will show that one can always embed the resulting Stokes polytope inside $(n-3)$ dimensional associahedron ${\cal A}_{n}$.A consequence of such embedding is that a boundary of Stokes polytope obtained by setting $X_{ij} = 0$ (for some $(i,j)$) lie on the corresponding boundary of the associahedron. 

From \cite{1906ppp} and as detailed in Section \ref{convexrealisationstokesposet} we know that the embedding of Stokes polytope inside kinematic space is obtained via sets of constraints classified according to the length of internal paths which ranges from $k=0$ to $k=\frac{n-6}{2}$.  The constraint corresponding to a length $k$ path is :
\begin{align}
\label{lengthkconstraintstokes}
X_{i,i+3+2k}+X_{i+2,i+5+2k}-X_{i,i+5+2k}-X_{i+2,i+3+2k}=d_{i,i+k}\,,
\end{align}
where $i$ takes all odd values in the range $[1,\ldots, n-5-2k]$ and $d_{i,i+k}$ are constants.

We will now show that the constraints \eqref{lengthkconstraintstokes} are contained in the ones associated to the ABHY associahedron corresponding to $T=\{13,14,\dots,1(n-1)\}$, namely $\mathcal A_n$. Notice that $T\subset Q$. The ABHY constraints in this case are given by $s_{ij}=-c_{ij}\ \vert 1 \leq i < j-1\leq n-2$. 
%
%
Using these it is easy to show that :
\begin{align}
\label{momentumconsassociahedron1416}
X_{i,i+3+2k}+X_{i+2,i+5+2k}-X_{i,i+5+2k}-X_{i+2,i+3+2k}=\sum_{I=i}^{i+1}\sum_{J=i+3+2k}^{i+4+2k}\ c_{IJ}\,.
\end{align}
We compare \eqref{lengthkconstraintstokes} and \eqref{momentumconsassociahedron1416} and see that constraints that localize the Stokes polytope ${\cal S}_{n}^{\{14,16,\dots,1(n-2)\}}$ inside kinematic space are realised as a consequence of ABHY constraints. We will now show that ${\cal S}_{n}^{\{14,16,\dots,1(n-2)\}}$ in fact admits a \emph{proper} embedding inside ${\cal A}_{n}$, i.e. on the support of ${\cal S}_{n}^{\{14,16,\dots,1(n-2)\}}$ inside ${\cal A}_{n}$, every $X_{ij}$ variable which does not occur as a vertex of the Stokes polytope remains positive.

Consider the  following additional constraints :
\begin{equation}
\label{addcon}
X_{1j}=d_{1j}\quad\textrm{for all odd $j$.}
\end{equation}
These constraints can be used to locate the Stokes polytope inside kinematic associahedron ${\cal A}_{n}$ as we show below.
\begin{claim}
\label{embeddingclaim1}
For all odd $j$, if $\sum_{I=1}^{j-3}\sum_{J=j+1}^{n-1}\ c_{IJ}\ \leq c_{j-2,j}$ then $\exists\ d_{1j}$ for which $X_{j-2,j}$ and $X_{j-1,j+1}$ are positive.
\end{claim}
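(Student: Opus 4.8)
The plan is to work directly with the ABHY constraints for $T=\{13,14,\dots,1(n-1)\}$ together with the additional constraints \eqref{addcon}, and to solve for the two "dangerous" variables $X_{j-2,j}$ and $X_{j-1,j+1}$ explicitly as affine functions of the free parameters $\{X_{1k}\}$ (odd $k$), the $c_{ij}$'s, and the $d_{1j}$'s. First I would use the relation \eqref{momentumconsassociahedron1416}, specialised to the length-$k$ path with $i=1$, i.e.
\begin{align}
\label{specialisedcons}
X_{1,4+2k}+X_{3,6+2k}-X_{1,6+2k}-X_{3,4+2k}=\sum_{I=1}^{2}\sum_{J=4+2k}^{5+2k}\ c_{IJ}\,,
\end{align}
to express $X_{j-2,j}$ (writing $j=2k+5$ so $X_{3,4+2k}$ is $X_{j-2,j-1}$-type — I would re-index carefully) in terms of $X_{1j}$-variables, which by \eqref{addcon} are fixed constants $d_{1j}$. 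This should show that on the Stokes polytope slice $X_{j-2,j}$ equals a fixed linear combination of the $d_{1j}$'s plus a sum of $c_{IJ}$'s.

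The key step is then to telescope: summing the ABHY constraints associated to the diagonals $(1,\ell)$ for $\ell$ ranging appropriately, one obtains an identity of the schematic form $X_{j-2,j} = (\text{sum of certain } c_{IJ}) + (\text{sum of certain } d_{1m})$, where the $d_{1m}$-dependence can be arranged to be just $d_{1j}$ and its neighbours. The hypothesis $\sum_{I=1}^{j-3}\sum_{J=j+1}^{n-1} c_{IJ} \le c_{j-2,j}$ is precisely what is needed so that the range of $c$-sums appearing with a definite sign leaves a nonempty interval of admissible $d_{1j}$: the lower bound on $d_{1j}$ comes from demanding $X_{j-2,j}>0$, the upper bound from demanding $X_{j-1,j+1}>0$, and these two bounds are compatible exactly when the displayed inequality holds. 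I would exhibit the explicit interval, analogous to the "boundond" equation referenced in the statement, and check nonemptiness under the hypothesis.

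The main obstacle I anticipate is bookkeeping with the indices and making sure the telescoping sum of ABHY constraints is set up so that the right-hand side is manifestly a sum of $c_{IJ}$ over a rectangular index block (as in \eqref{momentumconsassociahedron1416}) — in particular identifying which $c$'s enter with which sign, and confirming that $X_{j-2,j}$ and $X_{j-1,j+1}$ are the only variables whose positivity is in question given that all other $X$'s are either fixed to be positive constants by \eqref{addcon} or are vertices of the Stokes polytope (hence positive by the Stokes-polytope realisation of Section~\ref{convexrealisationstokesposet}). Once the two variables are written as affine functions $X_{j-2,j}=A-d_{1j}$ and $X_{j-1,j+1}=B+d_{1j}$ (up to relabelling and up to additional already-positive terms), with $A-B$ controlled by $c_{j-2,j}-\sum_{I=1}^{j-3}\sum_{J=j+1}^{n-1}c_{IJ}\ge 0$, the existence of a valid $d_{1j}$ in $(-B,\,A)$ is immediate. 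I would close by noting this argument runs uniformly over all odd $j$, so all the additional constraints \eqref{addcon} can be imposed simultaneously, completing the proper embedding of ${\cal S}_{n}^{\{14,16,\dots,1(n-2)\}}$ inside ${\cal A}_{n}$.
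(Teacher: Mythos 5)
Your overall strategy coincides with the paper's: use the ABHY constraints for $T=\{13,14,\dots,1(n-1)\}$ to express $X_{j-2,j}$ and $X_{j-1,j+1}$ affinely in $d_{1j}$, extract a lower bound on $d_{1j}$ from the positivity of one and an upper bound from the positivity of the other, and observe that the two bounds are compatible precisely when $\sum_{I=1}^{j-3}\sum_{J=j+1}^{n-1}c_{IJ}\le c_{j-2,j}$. However, there is a concrete gap in how you set up the affine expressions. The ABHY constraints give
\begin{equation*}
X_{j-2,j}=d_{1j}-X_{1,j-1}+\sum_{I=1}^{j-3}c_{I,j-1}\,,\qquad
X_{j-1,j+1}=X_{1,j+1}-d_{1j}+\sum_{I=1}^{j-2}c_{Ij}\,,
\end{equation*}
and the variables $X_{1,j-1}$ and $X_{1,j+1}$ carry an \emph{even} second index: they are not among the constants fixed by the additional constraints $X_{1k}=d_{1k}$ (odd $k$) in \eqref{addcon}, but are free coordinates ranging over the geometry. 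Your proposal treats the right-hand sides as affine functions of fixed parameters only (``$\{X_{1k}\}$ (odd $k$) \dots\ fixed constants''), and the term $-X_{1,j-1}$ is certainly not one of the ``additional already-positive terms'' you invoke --- it enters with a negative sign and must be bounded \emph{above} before any lower bound on $d_{1j}$ can be extracted.

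The missing ingredient is the extremization of these free coordinates over the associahedron. The paper obtains $X_{1,j-1}^{\max}=\sum_{I=1}^{j-3}\sum_{J=j-1}^{n-1}c_{IJ}$ from the identity $X_{1,j-1}+X_{j-2,n}=\sum_{I=1}^{j-3}\sum_{J=j-1}^{n-1}c_{IJ}$ (a consequence of total momentum conservation together with the ABHY constraints) and the fact that $X_{j-2,n}$ can vanish on the polytope; similarly it uses that $X_{1,j+1}$ attains zero to get the upper bound $d_{1j}\le\sum_{I=1}^{j-2}c_{Ij}$. Without this step your telescoping produces inequalities involving the undetermined quantities $X_{1,j-1}$ and $X_{1,j+1}$ rather than the clean interval $\sum_{I=1}^{j-3}\sum_{J=j}^{n-1}c_{IJ}\le d_{1j}\le\sum_{I=1}^{j-2}c_{Ij}$ of \eqref{boundond}, and the nonemptiness criterion cannot be reached. (Also note your schematic signs are reversed relative to the actual expressions --- $X_{j-2,j}$ \emph{increases} with $d_{1j}$ while $X_{j-1,j+1}$ decreases --- though this contradicts only your own display and not your correct earlier statement of which positivity yields which bound, so it is a labelling slip rather than a further error.)
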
{}
\begin{proof}
Let us first consider $X_{j-2,j}$ for odd $j$. From the associahedron constraints we have :
%
\begin{align}
X_{j-2,j}=d_{1j}-X_{1,j-1}+\sum_{I=1}^{j-3}\ c_{I,j-1}\,.
\end{align}
Hence $X_{j-2,j}\ \geq 0$ iff 
\begin{equation}\label{1j-1max}
d_{1j}\ +\ \sum_{I=1}^{j-3}\ c_{I,j-1}\ \geq X_{1,j-1}^{\textrm{max}}\,.
\end{equation}
%
To obtain $X_{1,j-1}^{\textrm{max}}$, note that $(p_{1}+\dots+p_{j-2}+p_{j-1}+\ldots+p_{n-1})^{2}=0$, from which we have :
\begin{align}
X_{1,j-1}+X_{j-2,n}=\sum_{I=1}^{j-3}\sum_{J=j-1}^{n-1}c_{IJ}\,.
\end{align}
Since $X_{j-2,n}$ for odd $j$ can be zero, we obtain $X_{1,j-1}^{\textrm{max}}=\sum_{I=1}^{j-3}\sum_{J=j-1}^{n-1}c_{IJ}$.
Thus from equation \eqref{1j-1max} we have,
\begin{align}
\label{d1jlower}
d_{1j}&\geq\sum_{I=1}^{j-3}\sum_{J=j}^{n-1}\ c_{IJ}\,.
\end{align}
Let us now consider the positivity of $X_{j-1,j+1}$ for odd $j$. As before from associahedron constraints we have :
%
%
\begin{align}
X_{j-1,j+1}=X_{1,j+1}-d_{1,j}+\sum_{I=1}^{j-2}c_{Ij}\,.
\end{align}
Since $X_{1,j+1}$ can be zero, $X_{j-1,j+1}\geq 0$ implies : 
\begin{equation}\label{d1jupp}
\ d_{1j}\ \leq \sum_{I=1}^{j-2}\ c_{Ij}\,.
\end{equation}
From equations. \eqref{d1jlower} and \eqref{d1jupp}, we require $d_{1,j}$ to satisfy the following bound :
\begin{equation}\label{boundond}
\sum_{I=1}^{j-3}\sum_{J=j}^{n-1} c_{IJ}\ \leq d_{1j}\ \leq \sum_{I=1}^{j-2}\ c_{Ij}\,.
\end{equation}
We see from \eqref{boundond} that ${\cal S}^{\{14,\dots,1(n-2)\}}_n$ can be embedded inside $\mathcal A_{n}$ only if $c_{ij}$ satisfy :
\begin{equation}
\sum_{I=1}^{j-3}\sum_{J=j+1}^{n-1}\ c_{IJ}\ \leq c_{j-2,j}\,.
\end{equation}
This proves the claim. 
\end{proof}{}
%
%
\begin{claim}
With the above bounds on $d_{1k}$ for all odd $k$, we will now prove that for $(j - i)\ >\ 2$, $X_{ij}\ \geq 0$ when either both $(i,j)$ are even, or both are odd or $i$ is even and $j$ is odd.
\end{claim}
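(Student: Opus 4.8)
\noindent\emph{Sketch of the proof.} The plan is to rewrite every planar variable $X_{ij}$ as an affine function of the data that survives on ${\cal S}:={\cal S}_{n}^{\{14,\dots,1(n-2)\}}$, and then to bound it using the two-sided bounds \eqref{boundond} together with the positivity of those $X$'s that are \emph{a priori} non-negative on ${\cal S}$ --- namely the ones fixed to the constants $d_{1m}$ and the $X_{ab}$ attached to $Q$-compatible diagonals (these cut out the facets of the Stokes polytope). The basic tool will be the ``master identity'' obtained by telescoping a rectangle of ABHY constraints $c_{IJ}=-s_{IJ}=X_{IJ}+X_{I+1,J+1}-X_{I,J+1}-X_{I+1,J}$: for $1\le i$ and $i+2\le j\le n$,
\[
X_{ij}\;=\;X_{1,j}-X_{1,i+1}+\sum_{I=1}^{i-1}\sum_{J=i+1}^{j-1}c_{IJ},
\]
with the conventions $X_{1,2}=X_{1,n}=0$ (the $i=1$ instance being trivial). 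This is the same type of relation already used in the proof of Claim~\ref{embeddingclaim1}.

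Next I would record the three pieces of input available on ${\cal S}$: (i) $X_{1,m}=d_{1,m}$ for odd $m$, with $d_{1,m}\ge\sum_{I=1}^{m-3}\sum_{J=m}^{n-1}c_{IJ}$ and $d_{1,m}\le\sum_{I=1}^{m-2}c_{I,m}$ by \eqref{boundond}; (ii) $X_{1,m}\ge 0$ for even $m$, since $(1,m)\in Q$ so $X_{1,m}=0$ is a facet of ${\cal S}$; (iii) $X_{a,n}\ge 0$ for odd $a$ with $3\le a\le n-3$, because a short combinatorial check shows such an $(a,n)$ is $Q$-compatible --- the only diagonals of $Q=\{14,\dots,1(n-2)\}$ crossing the hollow chord $a'n'$ all emanate from vertex $1$ and hence form a connected cut together with the edge $[a,a+1]$, which meets $(1,a+1)$ at vertex $a+1$ --- so $X_{a,n}=0$ is again a facet of ${\cal S}$.

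Then I would run the master identity through the three parity ranges, always with $j-i>2$. If $i$ is even and $j$ odd, both $X_{1,j}=d_{1,j}$ and $X_{1,i+1}=d_{1,i+1}$ are constants; discarding $d_{1,j}\ge 0$ and using $d_{1,i+1}\le\sum_{I=1}^{i-1}c_{I,i+1}$ leaves $X_{ij}\ge\sum_{I=1}^{i-1}\sum_{J=i+2}^{j-1}c_{IJ}\ge 0$, the inner range being nonempty since $j-i\ge 3$. If $i,j$ are both even, $X_{1,i+1}=d_{1,i+1}$ is constant while $X_{1,j}\ge 0$; the same step gives $X_{ij}\ge X_{1,j}+\sum_{I=1}^{i-1}\sum_{J=i+2}^{j-1}c_{IJ}\ge 0$ (now $j-i\ge 4$, and the value $j=n$, where $(i,n)$ is not a quadrangulation diagonal, is covered using $X_{1,n}=0$).

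The only case needing a genuinely new idea is $i,j$ both odd; the sub-case $i=1$ is immediate ($X_{1,j}=d_{1,j}\ge 0$), so take $i\ge 3$. Here $X_{1,j}=d_{1,j}$ is constant but $X_{1,i+1}$ ($i+1$ even) is a free coordinate that must be bounded \emph{from above}. The master identity at $j=n$ reads $X_{i,n}=-X_{1,i+1}+\sum_{I=1}^{i-1}\sum_{J=i+1}^{n-1}c_{IJ}$, and since $(i,n)$ is $Q$-compatible we get $X_{1,i+1}\le\sum_{I=1}^{i-1}\sum_{J=i+1}^{n-1}c_{IJ}$. Feeding this bound and the \emph{lower} bound $d_{1,j}\ge\sum_{I=1}^{j-3}\sum_{J=j}^{n-1}c_{IJ}$ into the master identity collapses the $c$-sums and yields $X_{ij}\ge\sum_{I=i}^{j-3}\sum_{J=j}^{n-1}c_{IJ}\ge 0$, the range being nonempty because $j-i\ge 4$. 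This exhausts all cases of the claim. The main obstacle is precisely this last case: one has to notice that the only effective upper bound on the free coordinate $X_{1,i+1}$ is forced by the positivity of the $Q$-compatible facet variable $X_{i,n}$, which in turn rests on the combinatorial fact that $(i,n)$ is compatible with the convergent quadrangulation $Q$; everything else is bookkeeping with the two-sided bounds \eqref{boundond}.
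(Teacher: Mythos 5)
Your proof is correct and follows essentially the same route as the paper's: the identity $X_{ij}=X_{1j}-X_{1,i+1}+\sum_{I=1}^{i-1}\sum_{J=i+1}^{j-1}c_{IJ}$, the same three-way parity split, and the same interplay between the bounds \eqref{boundond} and positivity of the $Q$-compatible coordinates. The one place you go beyond the paper is the both-odd case, which the paper dismisses as ``a similar analysis'' even though the roles flip there ($X_{1j}=d_{1j}$ is now the fixed quantity while $X_{1,i+1}$, with $i+1$ even, is a free coordinate needing an \emph{upper} bound); your derivation of that bound from $X_{i,n}\ge 0$ via $Q$-compatibility of $(i,n)$ supplies exactly the missing step, and it mirrors how the paper itself obtains $X_{1,j-1}^{\max}$ in the proof of Claim \ref{embeddingclaim1}.
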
{}
\begin{proof}
From the associahedron constraints we have,
\begin{equation}
X_{ij}=X_{1j}-X_{1,i+1}+\sum_{I=1}^{i-1}\sum_{J=i+1}^{j-1}c_{IJ}\,.
\end{equation}
Let us first consider the case when both $i$ and $j$ are even. Then,
\begin{equation}
X_{ij}\ =\ X_{1j}\ -\ d_{1,i+1}\ +\ \sum_{I=1}^{i-1}\sum_{J=i+1}^{j-1}c_{IJ}\,.
\end{equation}
Since $d_{1,i+1}\ \leq  \sum_{I=1}^{i-1}\,c_{I,i+1}$ and as $X_{1j}\ \geq\ 0$, we have $X_{ij}\ \geq 0$ when $i$ and $j$ are both even. A similar analysis shows that $X_{ij}\ \geq 0$ when $i$ and $j$ are both odd. We now consider the final case when $i$ is even and $j$ is odd. 
In this case we have,
\begin{equation}
X_{ij}\ =\ d_{1j}\ -\ d_{1,i+1}\ +\ \sum_{I=1}^{i-1}\sum_{J=i+1}^{j-1}c_{IJ}\,.
\end{equation}
Due to the bounds on $d_{1j}$, we have that $d_{1j}-d_{1i+1}\geq 0$ for $ (j - i)\geq 3$. Hence $X_{ij}$ remain positive for even $i$ and odd $j$ and $ (j - i)\ \geq\ 3$. This proves the claim.
\end{proof}{}
Thus we see that there is always a choice of $d_{1j}$ for which embedding of Stokes polytope ${\cal S}^{\{14,16,\dots,1(n-2)\}}_n$ inside the ABHY associahedron is proper. 

A few remarks are in order.

\begin{itemize}
\item The two dimensional Stokes polytope corresponding to reference quadrangulation $Q=\{14,16\}$ can be convexly realised as a pentagon embedded inside the five dimensional associahedron ${\cal A}_{8}$ via the constraint $X_{15}=d_{15}$ where $d_{15}$ is bounded as $\sum_{I=1}^{2}\sum_{J=5}^{7}\ c_{IJ}\ \leq d_{15}\ \leq \sum_{I=1}^{3}\ c_{I5}$. As stated in Claim \ref{embeddingclaim1}, this will be true only if the $c_{ij}$ themselves satisfy the corresponding bounds. 

\item We note that there is no canonical embedding of ${\cal S}_{n}^{\{14,16,\dots,1(n-2)\}}$ inside ${\cal A}_{n}$. We chose a particular embedding defined by additional constraints in equation \eqref{addcon}. However we could have also considered constraints generated by linear combinations $\sum_{j\ \textrm{odd}}\ a_{1j}X_{1j}\ =\ \textrm{constant}$.    

\item This analysis can be easily generalised to ${\cal S}_{n}^{Q}$ for arbitrary $Q\neq\{14,5n,\dots,(\frac{n}{2}+1,\frac{n}{2}+4)\}$ and it can be shown that ${\cal S}_{n}^{Q}$ can be convexly realised in ${\cal A}_{n}^{T}$ for any $T$ that contains $Q$. 

\item In the next section we consider ${\cal S}_{n}^{\{14,5n,\dots,(\frac{n}{2}+1,\frac{n}{2}+4)\}}$ which arises from reference quadrangulation that consists of non-intersecting diagonals. In contrast to Stokes polytopes associated to other class of quadrangulations, in this case we show that ${\cal S}_{n}^{Q}$ do have \emph{canonical} embedding as a  (hyper-cube) boundary of ${\cal A}_{n}$. 
\end{itemize} 


\subsection{A Canonical Embedding of ${\cal S}_{n}^{\{14,5n,\dots,(\frac{n}{2}+1,\frac{n}{2}+4)\}}$ inside ABHY associahedron}\label{pardiss}
We know from \cite{1906ppp} and the algorithm given in Section.~\ref{convexrealisationstokesposet} that the convex realisation of the Stokes polytope ${\cal S}_{n}^{\{14,5n,\dots,(\frac{n}{2}+1,\frac{n}{2}+4)\}}$ inside kinematic space is given by the constraints :
\begin{align}
\label{145nembeddingequations}
X_{14}\ +\ X_{3n}&=\ d_{1}\cr
X_{ab}\ +X_{a-1,b-1}&=d_{k+2}\,,
\end{align}
for $a=5+k$, $b=n-k$, where $k\in[0,1,\ldots,\frac{n}{2}-4]$.
%
%
\begin{claim}
\label{claim145nembedding}
For $d_{1}=\sum_{I=1}^{2}\sum_{J=4}^{n-1}c_{IJ}$ and $d_{k+2}=c_{4+k,n-k-1}$ for $k\in[0,1,\ldots,\frac{n}{2}-4]$, ${\cal S}_{n}^{\{14,5n,\dots,(\frac{n}{2}+1,\frac{n}{2}+4)\}}$ can be realised as a face of $\mathcal A_{n}$ by setting $X_{13}=0$ and $X_{4+m,n-m}=0$ for $m\in [0,1,\ldots,\frac{n}{2}-3]$. 
\end{claim}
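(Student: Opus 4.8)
The plan is to verify the claim by directly computing the relevant planar kinematic variables on the ABHY associahedron $\mathcal{A}_n$ (defined by $s_{ij}=-c_{ij}$ for $1\le i<j-1\le n-2$), imposing the face conditions $X_{13}=0$ and $X_{4+m,n-m}=0$ for $m\in[0,1,\dots,\frac{n}{2}-3]$, and then checking two things: first, that the Stokes polytope embedding constraints \eqref{145nembeddingequations} are reproduced with exactly the stated values $d_1=\sum_{I=1}^{2}\sum_{J=4}^{n-1}c_{IJ}$ and $d_{k+2}=c_{4+k,n-k-1}$; and second, that this locus really is a face of $\mathcal{A}_n$, i.e. the face conditions are among the facet inequalities $X_{ij}\ge 0$ and imposing them is consistent with the associahedron constraints (so that the remaining $X_{ij}$ stay nonnegative on the locus).

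First I would set up the bookkeeping. The diagonals $\{13\}\cup\{(4+m,n-m): m=0,\dots,\frac{n}{2}-3\}$ form a triangulation-compatible collection (they are pairwise non-crossing), so $X_{13}=0$ together with the $X_{4+m,n-m}=0$ conditions cut out a genuine face of the associahedron of the correct codimension $\frac{n}{2}-1$, leaving a face of dimension $(n-3)-(\frac{n}{2}-1)=\frac{n-4}{2}$, which matches $\dim\mathcal{S}_n^Q$. This is the structural part of "is a face": a subset of the facet-defining diagonals that is mutually compatible always cuts out a lower face, by the recursive/factorization structure of the associahedron reviewed in Section \ref{aquickreview}.

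Next, the computational heart: on $\mathcal{A}_n$ one has the general identity (an instance of \eqref{momentumconsassociahedron1416}-type manipulations, or equivalently momentum conservation $(p_a+\dots+p_{b-1})^2=0$ rewritten via $s_{ij}=-c_{ij}$) expressing $X_{ab}+X_{a-1,b-1}-X_{a-1,b}-X_{a,b-1}=-s_{a-1,b-1}=c_{a-1,b-1}$ whenever $(a-1,b-1)\notin T^c$. I would apply this with $(a,b)=(5+k,n-k)$: on the face we have set $X_{a,b-1}=X_{4+k,n-k}=0$ (this is the $m=k$ face condition, reading $4+k=a-1+1$... more precisely $X_{4+k,n-k-1+1}$, i.e. the diagonal $(4+k,n-k)$ which is exactly one of the killed ones after relabelling) and $X_{a-1,b}=X_{4+k,n-k}$... — one must check the index matching carefully, but the upshot is that two of the four terms vanish on the face, leaving $X_{ab}+X_{a-1,b-1}=c_{a-1,b-1}=c_{4+k,n-k-1}$, which is precisely the stated $d_{k+2}$. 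The $k=0$ case needs separate treatment near the "boundary" of the family: here $X_{14}+X_{3n}=d_1$, and I would derive $d_1$ from the telescoping/momentum-conservation relation $(p_1+p_2+p_3+\dots+p_{n-1})^2=0$ combined with $X_{13}=0$, which yields $X_{14}+X_{3n}=\sum_{I=1}^{2}\sum_{J=4}^{n-1}c_{IJ}$ after using $s_{ij}=-c_{ij}$ on all the terms that appear — again routine once the identity \eqref{momentumconsassociahedron1416} is in hand.

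The main obstacle I anticipate is not any single hard step but the careful index accounting: verifying that the face conditions $X_{4+m,n-m}=0$ line up exactly (as $m$ ranges over $[0,\dots,\frac{n}{2}-3]$) with "one of the four terms in each Stokes constraint" and with the consumed degrees of freedom, and confirming that after imposing all face conditions the residual $X_{ij}$ that are the vertices of $\mathcal{S}_n^Q$ remain strictly positive for appropriate $c_{ij}$ (this is the analogue of Claim \ref{embeddingclaim1}, and would follow from a positivity argument of the same flavour: express each surviving $X_{ij}$ on the face as a nonnegative combination of the $c_{IJ}$ using the associahedron constraints). I would close by remarking that, unlike the generic-$Q$ case of Section \ref{proof-convergentdissection}, here no extra free constants $d_{1j}$ are introduced — the face conditions are all of the form $X_{ij}=0$, so the embedding is canonical, which is the point of the claim.
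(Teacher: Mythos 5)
Your proposal is correct and follows essentially the same route as the paper: the paper likewise derives $X_{14}+X_{3n}=\sum_{I=1}^{2}\sum_{J=4}^{n-1}c_{IJ}$ and $X_{ab}+X_{a-1,b-1}=X_{a-1,b}+X_{a,b-1}+c_{a-1,b-1}$ from the ABHY constraints, sets $X_{a-1,b}=X_{a,b-1}=0$ (which is exactly the family $X_{4+m,n-m}=0$) to recover the Stokes constraints with the stated $d_i$, and imposes $X_{13}=0$ on the grounds that $(13)$ crosses none of the $Q$-compatible dissections. The positivity check you flag as a potential obstacle is automatic here, since the locus is a face of $\mathcal{A}_n$ and hence inherits all the inequalities $X_{ij}\ge 0$; no analogue of Claim \ref{embeddingclaim1} is needed.
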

\begin{proof}
Using the associahedron constraints it is easy to see that :
\begin{align}
\label{momentumconsassociahedron145m}
X_{14}\ +\ X_{3n}&=\ \sum_{I=1}^{2}\sum_{J=4}^{n-1} c_{IJ}\cr
%
%
X_{ab}+X_{a-1,b-1}&=X_{a-1,b}+X_{a,b-1}+c_{a-1,b-1}\,,
\end{align}
for $a=5+k$, $b=n-k$, where $k\in[0,1,\ldots,\frac{n}{2}-4]$. We compare these relations with the Stokes polytope constraints in \eqref{145nembeddingequations} and see that when $d_i$ take the values stated in the claim the Stokes polytope is convex realised as the face of $\mathcal A_{n}$ obtained by setting $X_{a-1,b}$ and $X_{a,b-1}$ for $a=5+k$, $b=n-k$, where $k\in[0,1,\ldots,\frac{n}{2}-4]$ to zero. 

As $X_{13}$ never crosses any of the dissections which occur as vertices of the Stokes polytope we can choose $X_{13}=0$. This proves the claim.
\end{proof}

We refer to this embedding as canonical as it is a unique mapping which realises Stokes polytope as a codimension $\frac{n-2}{2}$ face of ${\cal A}_{n}$. 
Such a canonical realisation is not possible for any other reference quadrangulation. This follows from the fact that it is only for $Q\ =\ \{14,5n,\dots,(\frac{n}{2}+1,\frac{n}{2}+4)\}$ that all the vertices of the Stokes polytope correspond to vertices of an associahedron labelled by $Q\ \cup\ \{4n,5(n-1),\ \dots,\ (\frac{n}{2}+1,\frac{n}{2}+3)\}$ with all the co-ordinates except those in $Q$ set to zero.  For a generic quadrangulation whether any embedding is singled out by additional requirements is an interesting question in its own right. We touch upon this issue in Section (\ref{pentinws})

\section{Towards world-sheet forms for quartic interactions}\label{wsforquar} 
In this section we analyse the relationship between $\phi^{4}$ planar amplitudes in terms of positive geometry and projective forms and forms/integrands on the CHY moduli space. For $\phi^{3}$ interactions, it was shown beautifully in \cite{Arkani-Hamed:2017mur, He:2018pue} that the canonical form on the kinematic associahedron is a pushforward of the CHY $\phi^{3}$ ``half-integrand" (also known as the Parke-Taylor form) via the CHY scattering equations. We would like to see if a similar understanding exists between the scattering form for quartic interactions and certain forms on the moduli space. Our idea is to explore planar amplitudes for quartic interactions in terms of pushforward of $d\ln$ forms on \emph{a} moduli space. 

Recall that we have two distinct perspectives on the positive geometry picture in kinematic space. In this section we consider the first one, described in Section.~\ref{phi4fl}, in which one does not consider the Stokes polytope in kinematic space, and all projective forms are defined on the family of ABHY associahedra $\{{\cal A}_{n}^{T}\}$. This encourages us to define for a given quadrangulation $Q$, a world-sheet scattering form $\widetilde{\Omega}^{Q}_{n}$ on ${\cal M}_{0,n}({\bf R})$ and analyze its pushforward on ${\cal A}_{n}^{T}\ \textrm{with}\ Q\ \subset\ T$. As we illustrate below, while the meromorphic piece of the pushforward produces precisely the form proportional to $m^{Q}_{n}$, there is an additional exact $\frac{n-4}{2}$ $d\log$ form with no singularities in the interior of, or on the boundary of ${\cal A}_{n}^{T}$. Thus when viewed as a pushforward from certain equivalence class (defined in equation \eqref{equiv1}) of $d\log$ forms on the CHY moduli space $\overline{{\cal M}}_{0,n}({\bf R})$ onto the kinematic space associahedra via scattering equations, we  recover a form precisely proportional to $m^{Q}_{n}$.

We emphasise that our approach to deriving world-sheet formulae for planar quartic amplitudes is rather simple-minded, as a sophisticated and complete answer to this question presumably requires the existence of a (real) moduli space whose polytopal realisation is the Stokes polytope \cite{devadoss}. However, no such moduli space is known to date \footnote{An unsurprising fact if we think of field theory amplitudes as limits of string amplitudes which do not have quartic vertices. See however \cite{Kalyanapuram:2019nnf}.}\footnote{We are indebted to Satyan Devadoss and Sushmita Venugopalan for several discussions on this issue.} . An extremely interesting take on relating  canonical forms on Stokes polytopes to world-sheet was advocated recently in \cite{Kalyanapuram:2019nnf}. We believe that our analysis is consistent with the results of that paper. 


We now summarise the seminal ideas of \cite{Arkani-Hamed:2017mur, He:2018pue} on which our analysis is based. The planar kinematic form associated to bi-adjoint scalar $\phi^{3}$ theory can be used to define a form on the compactified real moduli space of the punctured Riemann sphere $\overline{{\cal M}}_{0,n}({\bf R})$ \cite{Arkani-Hamed:2017mur}.  On ${\cal M}_{0,n}({\bf R})$ this form turns out to be the Parke-Taylor form. One of the beautiful results established in \cite{Arkani-Hamed:2017mur,He:2018pue} was a derivation of the CHY formula for bi-adjoint scalar $\phi^{3}$ theory by interpreting the canonical form on ${\cal A}_{n}$ as a pushforward of the Parke-Taylor form via scattering equations. Our idea is to use this construction for planar $\phi^{4}$ interactions. We now outline our strategy :
\begin{itemize}
 \item Inspired by \cite{Arkani-Hamed:2017mur}, given a projective, planar scattering form $\Omega_{n}^{Q}$ associated to quadrangulation $Q$, we define the corresponding form on the real section of the CHY moduli space, i.e. ${\cal M}_{0,n}({\bf R})$  by ``formally substituting" $X_{ij}\rightarrow u_{ij}$.  This yields an $\frac{n-4}{2}$ form $\frac{1}{\cal N}\,\widetilde\Omega^{Q}_{n}$ on the moduli space, which has $\log$ singularities on boundaries corresponding to $u_{ij}\rightarrow 0$ with $(ij)$ compatible with $Q$, as expected of quartic interactions. ${\cal N}^{-1}$ is an overall normalisation and is defined to be equal to the number of solutions to scattering equations. Although it appears rather ad-hoc at this stage, it should only be considered as part of the definition and its role will become clear below \footnote{In \cite{Arkani-Hamed:2017mur}, the world-sheet form was obtained by replacing $X_{ij}\rightarrow\sigma_{ij-1}$ and this was precisely the Parke-Taylor $n-3$ form. We will come back to this in Section \ref{scrutiny}.}. 
\item We then study the pushforward of these world-sheet forms via scattering equations in Section \ref{pfa2a} . In light of arguments put forward in Section \ref{phi4fl} we expect the pushforward to produce a form which contains $m^{Q}_{n}$. This indeed turns out to be the case, as we show in the simplest non-trivial example of $n=6$ that the pushforward is the sum of such a form and an additional \emph{exact} form ! The exact form has no singularities in the interior of, or on the boundaries of the embedding associahedron. We thus define a world-sheet form $\widehat{\Omega}^{Q}_{6}$ whose pushforward indeed produces the $n$ point amplitude compatible with $Q$. $\widehat{\Omega}^{Q}_{6}$ parametrises an equivalence class of $d\ln$ forms $[\widetilde{\Omega}^{Q}_{6}]$ where equivalence is defined by addition of analytic exact forms \footnote{As we are working with real moduli spaces whose compactification is an associahedron as opposed to the ones discussed in \cite{cohomologyannals}, we do not identify these equivalence classes with any cohomology groups. We discuss this point in more detail below.}. We then write down the most general pushforward formula for any $\widehat{\Omega}^{Q}_{n}$ using the  diffeomorphism between $\overline{{\cal M}}_{0,n}({\bf R})$ and any of the ABHY associahedra.
\end{itemize}

\subsection{Towards world-sheet forms}
\label{Towards world-sheet forms}
%
Following the discussion above, we first consider the simplest $n=6$ case with $Q=\{14\}$ being the reference quadrangulation. Using $\Omega^{\{14\}}_{6}=d\ln X_{14}-d\ln X_{36}$ \footnote{Note that, when restricted to the kinematic space associahedron this yields a 1-form which is the canonical form on $ST^{\{14\}}_{6}$.}, we define a 1-form on ${\cal M}_{0,6}({\bf R})$ as, 
%
\begin{align}\label{14ws}
\widetilde\Omega_{6}^{\{14\}}&:=\frac{1}{{\cal N}}\left(d\ln u_{14}-d\ln u_{36}\right)\cr
&=\frac{1}{{\cal N}}\left(\frac{d\sigma_{3}}{\sigma_{3}(1-\sigma_{3})}-\frac{d\sigma_{4}}{\sigma_{4}}\right)\,,
\end{align}
where in the second line we performed the usual gauge fixing $\sigma_{1}=0, \sigma_{5}=1,\sigma_{6}=\infty$. We would now like to see the push-forward of this form on ${\cal A}_{6}$. As neither of the two channels $X_{14}$ or  $X_{36}$ crosses $X_{13}$, we can set $X_{13}=0$ without loss of generality. This implies that on the world-sheet we can set $\sigma_{2}=0$.

The CHY scattering equations can be used to show that, 
\begin{align}
\sigma_{3}=\frac{X_{14}\sigma_{4}}{c_{14}\ +\ c_{24}\ +\ \sigma_{4}(c_{15} + c_{25})}\,,\quad
\sigma_{4}=\frac{X_{15}(1-\sigma_{3})\ +\ \sigma_{3}c_{35}}{[\sum_{I=1}^{3}c_{I5}\ -\ \sigma_{3}\sum_{I=1}^{2}c_{I5}]}\,.
\end{align}
These equations are complicated to solve and hence we consider a specific kinematic configuration where $c_{15}+c_{25}=0$ \footnote{This choice of kinematics is a degenerate case that corresponds to certain collinear limits. We use it only for the purpose of illustration as will become clear below.}. In this case, we obtain for $\sigma_3$ and $\sigma_4$ the following :
\begin{align}
\sigma_3=\frac{X_{14}\,X_{15}}{c_{35}\sum_{i=1}^2c_{i4}-c_{35}\,X_{14}+X_{14}\,X_{15}}\,,\quad
\sigma_4=\frac{X_{15}\sum_{i=1}^2c_{i4}}{c_{35}\sum_{i=1}^2c_{i4}-c_{35}\,X_{14}+X_{14}\,X_{15}}\,.
\end{align}
Hence for such a restricted choice of kinematics, there is a unique real solution to the scattering equations and thus ${\cal N}=1$.

We now use these equations to push-forward $\widetilde{\Omega}^{(14)}_{6}$ onto ${\cal A}_{6}$. After some algebra and using the fact that for this special choice of kinematics, $X_{14}+X_{36}=c_{14}+c_{24}$ we see that the push-forward is given by,
\begin{align}
\textrm{push-forward}=dX_{14}\left(\frac{1}{X_{14}}+\frac{1}{X_{36}}\right)+\textrm{analytic term}\,.
\end{align}
For our choice of kinematics, the analytic term is given by :
\begin{equation}\label{anal-nui}
\textrm{Analytic term}=d\ln\left[c_{35}\sum_{i=1}^{2}c_{i4}-c_{35}X_{14}+X_{14}X_{15}\right]\,.
\end{equation}
This term can be understood as follows. For our choice of kinematics, it can be verified that
\begin{align}
X_{14}&=\sum_{i=1}^{2}\ c_{i4}u_{14}=:f_{14}u_{14}\cr
X_{36}&=\sum_{i=1}^{2}c_{i4}u_{35}u_{36}=: f_{36}u_{36}\,.
\end{align}
We thus see that under push-forward by scattering equations,
\begin{equation}
\widetilde{\Omega}^{\{14\}}_{6}\rightarrow\left(\frac{1}{X_{14}}+\frac{1}{X_{36}}\right)dX_{14}-d\ln\left[\frac{f_{14}}{f_{36}}\right]\,.
\end{equation}
This equation can be re-written as 
\begin{align}
\widetilde{\Omega}^{\{14\}}_{6}+d\ln\left[\frac{f_{14}}{f_{36}}\right]\ \rightarrow\left(\frac{1}{X_{14}}+\frac{1}{X_{36}}\right)\ dX_{14}\cr
\implies d\ln\{\frac{u_{14}\ f_{14}}{u_{36}\ f_{36}}\}\rightarrow\left(\frac{1}{X_{14}}+\frac{1}{X_{36}}\right)\ dX_{14}
\end{align}
where $f_{14}=c_{14}+c_{24}$ and $f_{36}=\frac{\sigma_3-\sigma_4}{\sigma_3-1}\frac{1}{\sigma_4}\,(c_{14}+c_{24})$.
Several comments are in order.
\begin{itemize}
\item In light of our analysis in Section (\ref{phi4fl}), the result of this simple example is not surprising. There it was shown that the restriction of the planar scattering form $\Omega^{Q}_{n}$ onto an ABHY Associahedron ${\cal A}_{n}^{T}$ (for any triangulation $T$ which contains $Q$) is proportional to $m_{n}^{Q}$. As ${\cal M}_{0,n}({\bf R})$ is diffeomorphic to ${\cal A}_{n}^{T}$ via scattering equations, it is rather expected that a world-sheet form with precisely the same singularity structure as $\Omega^{Q}_{n}$ produces the same singular contributions as $\Omega^{Q}_{n}\vert_{{\cal A}_{n}^{T}}$ under push-forward by scattering equations. The above example illustrates this explicitly. A beautiful geometric manifestation of this observation arises via intersection theory and has been explored in \cite{Kalyanapuram:2019nnf}. Due to the argument presented above, we expect  that this result can be generalised to any $n$. 
\item In contrast to the Parke-Taylor form, $\widetilde{\Omega}^{(14)}_{6}$ also produces a term which is analytic in ${\cal A}_{6}$         (it is in fact analytic in the positive quadrant of the planar kinematic space). This makes their push-forwards more intricate than those for the canonical top forms where such maps were just a manifestation of the CHY formula for (planar) $\phi^{3}$ interactions. 
\item The analytic piece vanishes when $X_{46}=0$. In $\overline{{\cal M}}_{0,n}({\bf R})$ this simply corresponds to going to the $\sigma_{4}\rightarrow 1$ boundary and hence reduces the world-sheet form to $\frac{d\sigma_{3}}{\sigma_{3}(1-\sigma_{3})}$.
\item  The $X_{46}=0$ constraint corresponds to the convex realisation of $ST^{\{14\}}$ in ${\cal A}_{6}$, and as we show below, has a nice counterpart on the world-sheet. Namely, the image of the (convexly realised) $ST^{\{14,5n,\dots,1(n-2)\}}$ is a  boundary of ${\cal M}_{0,n}({\bf R})$ and the restriction of $\widetilde{\Omega}_{n}^{\{14,5n,\dots,1(n-2)\}}$ onto this boundary is such that its push-forward via the CHY scattering equations produces precisely the amplitude. This helps us in writing a CHY type formula for these simplest kind of Stokes polytopes. 
\end{itemize}
\subsection{Lower forms on ${\cal M}_{0,n}({\bf R})$}\label{pfa2a}
%
%
%
In this section we analyse the pushforward of $\widetilde{\Omega}^{Q}_{n}$ via the CHY scattering equations. Generalizing the result in \cite{Arkani-Hamed:2017mur}, in Appendix \ref{messscatt} we propose a map from world-sheet associahedron to ABHY associahedron with arbitrary reference triangulation. Inspired by this map and the discussion in Section \ref{phi4fl} we make the following claim.
\begin{claim}
The pushforward of $\widetilde{\Omega}^{Q}_{n}$ via CHY scattering equations equals $m_{n}^{Q}\ \prod_{(ij)\in Q}dX_{ij}$ up to an exact form.
\end{claim}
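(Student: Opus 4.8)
The plan is to reduce the claim to the key facts already established: (i) the restriction of the projective scattering form $\Omega^{Q}_{n}$ onto any ABHY associahedron ${\cal A}^{T}_{n}$ with $Q\subset T$ equals $m^{Q}_{n}\prod_{(ij)\in Q}dX_{ij}$ (Section \ref{phi4fl}, via Appendix \ref{appb}); and (ii) the CHY scattering equations provide a diffeomorphism between $\overline{{\cal M}}_{0,n}({\bf R})$ and ${\cal A}^{T}_{n}$ for arbitrary reference triangulation $T$ (the map of Appendix \ref{messscatt}). First I would fix a triangulation $T\supset Q$ and use this diffeomorphism to identify the pullback of the planar kinematic variables $X_{ij}$ along the scattering equations with functions $X_{ij}(\sigma)$ on the moduli space, so that $\widetilde{\Omega}^{Q}_{n}$, built by the substitution $X_{ij}\to u_{ij}$, and the pulled-back $\Omega^{Q}_{n}$ differ only in whether they are written in terms of $u_{ij}$ or in terms of the genuine $X_{ij}(\sigma)$.

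Next I would make precise the relation between $u_{ij}$ and $X_{ij}(\sigma)$. The $n=6$ computation in Section \ref{Towards world-sheet forms} shows the pattern: on the support of the scattering equations one has $X_{ij}=f_{ij}\,u_{ij}$ where $f_{ij}$ is a ratio of $\sigma$-differences (and combinations of the $c_{kl}$) that is regular and nonvanishing in the interior and on all boundaries of ${\cal A}^{T}_{n}$ relevant to $Q$. Hence $d\ln u_{ij}=d\ln X_{ij}-d\ln f_{ij}$, and summing over the graphs contributing to $\Omega^{Q}_{n}$ with their flip signs gives
\begin{align}
\widetilde{\Omega}^{Q}_{n}\ \xrightarrow{\ \text{pushforward}\ }\ \Omega^{Q}_{n}\big|_{{\cal A}^{T}_{n}}\ +\ \Big(\text{sum of }\pm\,d\ln f_{i_{a}j_{a}}\Big)\,.
\end{align}
By step (i) the first term is $m^{Q}_{n}\prod_{(ij)\in Q}dX_{ij}$; the second is manifestly a $d\log$ form whose arguments $f$ are nonvanishing analytic functions on ${\cal A}^{T}_{n}$, hence it is an exact (and singularity-free) $\frac{n-4}{2}$ form on the associahedron. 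This is exactly the ``equivalence up to an exact form'' in the statement, and it recovers $\widehat{\Omega}^{Q}_{n}$ as the representative of $[\widetilde{\Omega}^{Q}_{n}]$ whose pushforward is strictly $m^{Q}_{n}\prod dX_{ij}$.

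The main obstacle I expect is controlling the factors $f_{ij}$ for general $n$: one must show that each $X_{ij}$ appearing in $\Omega^{Q}_{n}$ factorizes as $f_{ij}\,u_{ij}$ with $f_{ij}$ regular and nonvanishing throughout ${\cal A}^{T}_{n}$, so that the leftover $d\ln f$ terms genuinely have no poles on the interior or boundary. In the $n=6$ example this was checked only after specializing to a degenerate kinematic slice ($c_{15}+c_{25}=0$) where the scattering equations have a single real solution; for general $n$ and generic kinematics, solving the scattering equations explicitly is hopeless, so instead I would argue structurally. The cleanest route is to use the ${\cal N}^{-1}$-normalized pushforward (sum over all solutions) together with the fact that the pushforward of a $d\log$ form is again a $d\log$-type form whose poles can only sit where some $u_{ij}\to 0$, i.e. on the boundaries of $\overline{{\cal M}}_{0,n}({\bf R})$ mapped to $X_{ij}=0$; matching residues facet-by-facet against the known residues of $m^{Q}_{n}\prod dX_{ij}$ (which factorize as in \eqref{factor1}) forces the difference to have no poles at all, hence to be exact. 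Packaging this residue-matching inductively in $n$, using the factorization of Stokes polytopes, is the technical heart of the argument.
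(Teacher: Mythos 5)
Your decomposition is the same as the paper's: factor $X_{ij}=u_{ij}\,f_{ij}(u)$ on the image of the scattering equations, split $d\ln u_{ij}=d\ln X_{ij}-d\ln f_{ij}$, identify the $d\ln X$ piece with $\Omega^{Q}_{n}\vert_{{\cal A}^{T}_{n}}=m^{Q}_{n}\prod_{(ij)\in Q}dX_{ij}$ via the Section \ref{phi4fl} result, and declare the leftover $d\ln f_{ij}$ terms exact because the $f_{ij}$ never vanish --- this is precisely the paper's $\widehat{\Omega}^{Q}_{n}$ of \eqref{CHYform} and the statement that $\widehat{\Omega}^{Q}_{n}-\widetilde{\Omega}^{Q}_{n}$ is exact. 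Where you diverge is in how the nonvanishing of $f_{ij}$ is established, and your route is harder than what the paper actually needs. You treat this as requiring control over solutions of the scattering equations (citing the degenerate-kinematics $n=6$ computation, which is only the warm-up of Section \ref{Towards world-sheet forms}) and propose a facet-by-facet residue-matching argument, inductive in $n$ via the factorization \eqref{factor1}. The paper never solves the scattering equations: it uses the closed-form expression \eqref{wtf1} for $X_{ij}$ as a function of the Pl\"ucker coordinates (the world-sheet-to-ABHY map of Appendix \ref{messscatt} combined with the associahedron constraints), from which $f_{ij}$ in \eqref{xuf} is read off as an explicit polynomial in the $u_{k\ell}$ with positive coefficients and a nonzero constant term $c_{ij}$; since $0\le u\le 1$ on $\overline{{\cal M}}_{0,n}({\bf R})$, each $f_{ij}$ is sandwiched between positive constants (as in the $n=6$ bounds $c_{14}\le f_{14}\le c_{14}+c_{15}+c_{24}+c_{25}$), so $\ln f_{ij}$ is analytic and the leftover terms are ``exact'' in the paper's sense (analytic on the compactified moduli space). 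Your residue-matching alternative could work and has the virtue of not leaning on the explicit (and itself only conjectured) map of Appendix \ref{messscatt}, but as sketched it still presupposes the same diffeomorphism assumption to exclude poles at critical points of the map, and it should be stated that ``no poles'' yields analyticity --- which is what both you and the paper are calling exactness --- rather than de Rham exactness; with that understood, your argument and the paper's prove the same thing, the paper's being the more direct.
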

\noindent {\bf Proof} : 
Our primary argument is the following : $\widetilde{\Omega}^{Q}_{n}$ is a $d\log$ form with simple poles on a boundary subset of $\overline{{\cal M}}_{0,n}({\bf R})$, where the subset of boundaries is in one to one correspondence with $Q$-compatible quadrangulations. Scattering equations are diffeomorphisms between the world-sheet associahedron and any of the ABHY associahedra \footnote{Strictly speaking there is no proof of this conjecture to the best of our knowledge. However, it has been verified up to $n=10$ \cite{Arkani-Hamed:2017mur}.}.  Hence if we consider an equivalence class of the $d\ln$ forms
\begin{equation}\label{equiv1}
[\widetilde{\Omega}^{Q}_{n}]\ =\ \{\widetilde{\Omega}^{Q}_{n}\ +\ \textrm{exact-form}\}
\end{equation} 
where by ``exact-form" we mean forms which are analytic on $\overline{{\cal M}}_{0,n}({\bf R})$, the scattering equations induce the map at the level of $[\widetilde{\Omega}^{Q}_{n}]$. \emph{As the restriction of planar scattering form $\Omega^{Q}_{n}$ (which is also a closed form) to ${\cal A}_{n}^{T}$ equals $m_{n}^{Q}\prod_{(ij)\in Q}dX_{ij}$}, pushforward of $\widetilde{\Omega}^{Q}_{n}$ to ${\cal A}_{n}^{T}$ produces $m_{n}^{Q}\prod_{(ij)\in Q}dX_{ij}$ up to analytic (non-singular) form. Hence CHY scattering equations map $\widetilde{\Omega}^{Q}_{n}$ to $\Omega^{Q}_{n}\vert_{{\cal A}_{n}^{T}}$ up to an exact form. 

We now quantify this argument in  detail. 
Consider the map between the world-sheet associahedron and kinematic space associahedron as expressed in Appendix \ref{messscatt} :
\begin{align}\label{worldABHYmaptext}
     X_{ij} = \sum_{\substack{ i\leq k \leq j-2\\ j \leq \ell \leq i-2 }} c_{k\ell} \frac{\sigma_{i-1,\ell }\sigma_{k,j-1}}{\sigma_{k,\ell} \sigma_{i-1,j-1}} = \sum_{\substack{ i\leq k \leq j-2\\ j \leq \ell \leq i-2 }} c_{k\ell} \prod_{\substack{i \leq  p \leq k \\ j\leq q \leq \ell} } u_{pq}\,. 
\end{align}
For simplicity we restrict our attention to ${\cal A}_{n}$ (as opposed to ${\cal A}_{n}^{T}$) in the kinematic space. The  associahedron constraints associated to ${\cal A}_{n}$ are :
\begin{align}\label{assopent}
X_{ij} = X_{1j} - X_{1,i+1} + \sum_{\substack{1\leq k \leq i-1\\ i+1 \leq l \leq j-1}} c_{k\ell}.
\end{align}
Combining equations \eqref{worldABHYmaptext} and \eqref{assopent}, and after some algebra we get :
\begin{align}\label{wtf1}
X_{ij} = \sum_{\substack{1\leq k \leq i-1\\ i+1 \leq \ell \leq j-1}} c_{k\ell}  \prod_{\substack{k+1 \leq  p \leq i \\ \ell+1\leq q \leq n} } u_{pq} + \sum_{\substack{ i\leq k \leq j-2\\ j \leq \ell \leq n-1 }} c_{k\ell} \prod_{\substack{1 \leq  p \leq k \\ j\leq q \leq \ell} } u_{pq} +  \sum_{\substack{ 1\leq k \leq i-1\\ j \leq \ell \leq n-1 }} c_{k\ell}   \prod_{\substack{1 \leq  p \leq k \\ j\leq q \leq\ell} } u_{pq}  \prod_{\substack{k+1 \leq  r \leq i \\ j\leq s \leq n} } u_{r,s}\,.
\end{align}
Notice that each $X_{ij}$ is of the form 
\begin{equation}\label{xuf}
X_{ij}= u_{ij}f_{ij}(u),
\end{equation}
 where assuming the existence of diffeomorphism between world-sheet and kinematic space associahedra, we deduce that $f_{ij}$ is a non-vanishing polynomial in the other $u_{k\ell}$. 

Let us now consider the following form which equals $\widetilde{\Omega}^{Q}_{n}$ up to a $d\log$ form :
\begin{equation}\label{CHYform}
    \widehat{\Omega}^{Q}_{n} := \frac{1}{{\cal N}}\ \sum_{Q' \in ST^Q} \text{sgn}(Q') \bigwedge_{(ij)\in Q'} \left(d\log[u_{ij}] + d\log[f_{ij}(u)]\right)\,.
\end{equation}
Using equation \eqref{xuf} it can be readily checked that, \emph{on each solution to the scattering equation} the pushforward of this form produces $\Omega^{Q}_{n}\vert_{{\cal A}_{n}^{T}}$. Thus summing over all the solutions and dividing by ${\cal N}$ produces the amplitude. 

The primary argument written above  equation \eqref{worldABHYmaptext} then implies that $\widehat{\Omega}^{Q}\ -\ \widetilde{\Omega}^{Q}$ is exact, i.e. it has no singularities on the CHY moduli space.  We note that as $\widehat{\Omega}^{Q}_{n}\ -\ \widetilde{\Omega}^{Q}_{n}$ is sum of $\frac{n-4}{2}$ forms each of which has at least one differential of the form $d\ln[f_{ij}(u)]$ and hence the primary argument is equivalent to the claim that all the functions $\ln f_{ij}$ are analytic on $\overline{{\cal M}}_{0,n}({\bf R})$.

Let us illustrate this with an example. We again consider the simplest non-trivial case of $n=6$ with $Q=\{14\}$ being the reference quadrangulation. From Section \ref{Towards world-sheet forms} we consider the following 1-form on ${\cal M}_{0,6}({\bf R})$ :
%
\begin{align}\label{14ws}
\widetilde\Omega_{6}^{\{14\}}&=d\ln u_{14}-d\ln u_{36}\,.
\end{align}
%
From scattering equations we have :
%
\begin{align}
\label{se1436}
     X_{14} &= u_{14}f_{14}(u)\,,\cr
      X_{36}&= u_{36}f_{36}(u)\,,\cr
   \text{where}\quad f_{14}(u)&=\left(c_{14}+ c_{15}u_{15} + c_{24}u_{24}+c_{25}u_{15}u_{24}u_{25}\right)\,,\cr
     f_{36}(u)&=  \left(c_{14}u_{25}u_{35}u_{26}+c_{15}u_{26}+c_{24}u_{35}+c_{25} \right)\,.
\end{align}
Therefore $\widehat{\Omega}^{\{14\}}_{6}$ in \eqref{CHYform} is given by :
\begin{align}\label{extfor14}
        \widehat{\Omega}^{\{14\}}_{6}= d\log u_{14} - d\log u_{36} +d\log\left(\frac{c_{14} + c_{15}u_{15} + c_{24}u_{24} + c_{25}u_{15}u_{24}u_{25}}{c_{14}u_{25}u_{35}u_{26} + c_{15}u_{26} + c_{24}u_{35} + c_{25}}\right)\,.
\end{align}
Since $ c_{14} + c_{15} + c_{24} + c_{25} \geq f_{14} \geq c_{14} $ and $ c_{14} + c_{15} + c_{24} + c_{25} \geq f_{36} \geq c_{25}$, the extra term is an exact form. Using \eqref{se1436} we see that pushforward of $\widehat{\Omega}^{\{14\}}_{6}$ equals $\Omega^{\{6\}}\vert_{{\cal A}_{6}}=m_{6}^{\{14\}}\ dX_{14}$.



Although we quantified our arguments for ${\cal A}_{n}$, it can be generalised to any $Q$ and any ${\cal A}_{n}^{T}$ using the analysis in Appendix \ref{messscatt}. Thus for any $Q$ and $n$ we have a pushforward map  at the level of the equivalence classes  $[\widetilde{\Omega}^{Q}_{n}]$  
\begin{align}\label{pfmap}
\boxed{\widehat{\Omega}^{Q}_{n}\ \xrightarrow{\text{scattering equations}}\ m^{Q}_{n}\prod_{(ij)\ \in\ Q}\ dX_{ij}}
\end{align}

It is clear that $\widehat{\Omega}^{Q}_{n}$ are representatives of the equivalence class $[\widetilde{\Omega}^{Q}_{n}]$ defined in equation \eqref{equiv1}. 
In the case of complexified moduli space ${\cal M}_{0,n}$, $[\widetilde{\Omega}^{Q}_{n}]$ belongs to the cohomology groups $H^{\frac{n-4}{2}}({\cal M}_{0,n})$ \footnote{We note that Parke-Taylor forms form a basis for $H^{n-3}({\cal M}_{0,n})$ \cite{Mizera:2019gea}, for both the integral cohomology classes or cohomology over ${\bf C}$. Whether the $Q$-compatible forms generate such a basis for $\frac{n-4}{2}$ dimensional cohomology classes remains outside the scope of the paper.}. But as the compactification of the real moduli space that we are working with is simply an associahedron, we do not identify these equivalence classes with  a cohomology class \footnote{As an aside we note that cohomology classes of a different compactification of the real moduli space (which for $n=4$ is simply the one point compactification) has been analysed in \cite{Devadoss,cohomologyannals}. This compactification leads to $\overline{{\cal M}}_{0,n}({\bf R})$ which are Eilenberg-MacLane spaces $K(\pi, n)$ and are non-orientable for $n\ \geq 5$. It will be extremely interesting to analyse the relationship (if any) of ``Q-compatibility" with rank $\frac{n-4}{2}$ cohomology classes for such compactification.} .

\subsection{Further scrutiny of the world-sheet form}
\label{scrutiny}
In the previous section we argued that the push-forward of the world-sheet form $\widehat{\Omega}^{Q}_{n}$ by scattering equations generate the corresponding amplitude contribution $m_{n}^{Q}$. $\widehat{\Omega}^{Q}_{n}$ was obtained by considering the planar scattering form in kinematic space and simply replacing $X_{ij}$ with $u_{ij}\ f_{ij}$. 

We could try to apply the same idea for $\phi^{3}$ interactions and revisit the results of \cite{Arkani-Hamed:2017mur}. However in this case, we know that the world-sheet form (whose push-forward by scattering equations yields the planar amplitude) is the well known Parke-Taylor form. In the $\sigma_{1}=0,\sigma_{n-1}=1,\sigma_{n}=\infty$, this form is given by,
\begin{equation}
\label{wsheetform}
\omega^{\textrm{ws}}_{n}\ =\ \frac{d\sigma_{2}\wedge\dots\wedge d\sigma_{n-2}}{\sigma_{2}\ \sigma_{23}\ \dots\ \sigma_{n-2,n-1}}\,.
\end{equation}
It is rather obvious that $\widehat{\Omega}_{n}$ is not equal to the Parke-Taylor form.  Although, $\widehat{\Omega}_{n}$  is defined on the world-sheet, it depends on the kinematic data . This is in contrast to Parke-Taylor form which is the canonical top form on the Moduli space. Hence the ideas laid out in previous section seems to break down for cubic theory. If this were the case, it would be rather surprising and would require more scrutiny of the definition of $\widehat{\Omega}_{n}$. We will now argue by means of an example  that this is not the case.  Namely, on the solutions of scattering equations, the two forms are equal and the push-forward of $\widehat{\Omega}_{n}$ and the Parke-Taylor form both yield the canonical form on the kinematic space Associahedron.  We would like to emphasise that this comparison also shows that for quartic interactions $\widehat{\Omega}_{n}$ is not the final answer and there must exist a form ``intrinsic" to the world-sheet whose push forward will produce the corresponding contribution to the amplitude. However derivation of such form has eluded us so far. 
%
%
%
%

Our claim is that even in the case of $\phi^{3}$ theory, we have the following. 
\begin{equation}
\widehat{\Omega}_{n}-\omega^{\textrm{ws}}_{n}\ \rightarrow\ 0
\end{equation}
Even though we lack a general proof of this statement, we verify it for $n\ =\ 5$ point case \footnote{Verification for the first non-trivial, that is $n\ =\ 4$ case is trivial and we leave it as an exercise for the reader.}, i.e. we compute 
\begin{displaymath}
\sum_{\textrm{solns to scatt eqns}}\widehat{\Omega}_{5}-\omega^{\textrm{ws}}_{n}
\end{displaymath}
and show that it indeed vanishes. 

%
%
%
For $n=5$ case, the scattering equations for $X_{13}$ and $X_{14}$ in our choice of gauge  $(\sigma_1=0,\sigma_4=1,\sigma_5=\infty)$ are given by \cite{Arkani-Hamed:2017mur} :
\begin{align}\label{scattereq}
    X_{13} &= \frac{\sigma_{2}}{\sigma_{3}}(c_{13}+ \sigma_{3}c_{14}) \cr
        X_{14} &= \frac{1}{1-\sigma_{2}}((\sigma_{3}-\sigma_{2})c_{24} + \sigma_{3}(1-\sigma_{2})c_{14} ).
\end{align}
As shown in \cite{Arkani-Hamed:2017mur}, push-forward of world-sheet forms on the kinematic space associahedron is obtained by substituting solutions to the scattering equations. For generic choices of $c_{ij}$ computing the push-forward  of $\widehat{\Omega}$ is rather complicated and hence in the interest of pedagogy, we consider two choices of $c_{ij}$. In the first case, we consider $c_{14}=0$ while other $c_{ij}$
s are arbitrary. We see that for this choice of $c_{ij}$ $\widehat\Omega_5$ matches $\omega_{5}^{\textrm{ws}}$ and hence their push forward under the scattering equations trivially match.
%
%

We then consider the case where $c_{ij}=1$ for $\forall\ (i,j)$ and compute the push-forward. The scattering equations have two distinct solutions and hence the overall normalisation factor of $\widehat{\Omega}_{5}$ is $\frac{1}{2}$. As we show explicitly in Appendix \ref{mrunmay1}, the push-forward of $\widehat{\Omega}_{5}$ once again matches the push-forward of Parke-Taylor form. In fact we have checked that the push-forwards match for generic $c_{ij}$, but the intermediate expressions are quite complicated and we don't present them here.

The fact that $\widehat{\Omega}_{5}$ matches with the Parke-Taylor form only on the solutions to the scattering equations has a precedence in CHY formalism. For example, the reduced Pfaffian which generically shows up in CHY integrands is well defined only on the solution to the scattering equations.

We conclude this section with a few comments 

\begin{itemize}

\item In a recent work \cite{Kalyanapuram:2019nnf}, the author initiated the study of intersection theory for Stokes polytopes and derived formulae for $m^{Q}_{n}$ from intersection numbers of certain $d\log$ forms on the (complexified) moduli space ${\cal M}_{0,n}$. We believe that, as in the case of bi-adjoint amplitudes \cite{Mizera:2017rqa}, this approach is closely tied to the pushforward map in equation \eqref{pfmap}. However, a detailed comparison of the two approaches is beyond the scope of this paper.

%

\item In the $n=6$ case, we wrote an explicit form $\widehat{\Omega}^{\{14\}}_{6}$  whose pushforward produces partial amplitude $m_{6}^{\{14\}}$. This form is obtained by subtracting out the exact form $d\ln f[u_{14}]$ from the world-sheet form $\widetilde{\Omega}^{\{14\}}_{6}$. Interestingly enough, it can be readily checked that the exact form $d\ln f[u_{ij}]$  vanishes when $X_{46}=0$. The constraint precisely corresponds to the  convex realisation of ${\cal S}_6^{\{14\}}$ in ${\cal A}_{6}$ defined in Section (\ref{pardiss}).

\item In $\overline{{\cal M}}_{0,6}({\bf R})$ $X_{46}\ =\ 0$ is mapped via the scattering equations to the $\sigma_{4}\rightarrow 1$ boundary. We refer to this boundary as a world-sheet Stokes geometry and denote it as $\widetilde{{\cal S}}^{\{14\}}_6$. Hence restriction of the world-sheet form \eqref{14ws} to $\widetilde{{\cal S}}^{\{14\}}_6$ equals $\frac{d\sigma_{3}}{\sigma_{3}(1-\sigma_{3})}$ such that its pushforward via scattering equations equals $m_{6}^{\{14\}}\ dX_{14}$. 

\end{itemize}

In the following section we expand on these observations. We show that the image of the convexly realised ${\cal S}^{\{14,5n,\dots,1(n-2)\}}_n$ under diffeomorphism generated by CHY scattering equations is a  boundary of ${\cal M}_{0,n}({\bf R})$ and restriction of $\widetilde{\Omega}_{n}^{\{14,5n,\dots,1(n-2)\}}$ on this boundary is such that the corresponding pushforward under CHY scattering equations precisely produces the amplitude thus aiding us in writing a CHY type formula for the Stokes polytopes with hyper-cube topology.

\section{World-sheet Stokes Geometries}\label{wsst1}
For the bi-adjoint scalar amplitudes, the pushforward of the canonical form on world-sheet associahedron can be written in an integral representation which is the CHY formula for bi-adjoint scattering amplitudes. 
In this section we try to analyse if a similar formula can be written down in the present case, i.e. if the pushforward map defined in equation (\ref{pfmap}) can be expressed as a ``CHY-inspired" integral formula \cite{Cachazo:2013iea}.
 
We first consider the quadrangulation consisting of parallel diagonals $\{14,5n,\dots,\frac{n}{2}+1,\frac{n}{2}+4\}$ and analyse the pushforward of world-sheet form from the  perspective mentioned towards the end of previous section. 

We will see that the unique convex realisation of ${\cal S}^{\{14,5n,\ldots,(\frac{n}{2}+1,\frac{n}{2}+4)\}}_{n}$ as boundary of $\mathcal A_n$ helps us in writing ``CHY-type" integral formula for these special class of quadrangulations.

In Section (\ref{wssp14}) we use the canonical embedding of ${\cal S}^{\{14,5n,\ldots,(\frac{n}{2}+1,\frac{n}{2}+4)\}}_{n}$ as a boundary of ${\cal A}_{n}$ to define the world-sheet Stokes geometry $\widetilde{{\cal S}}^{\{14,5n,\ldots,(\frac{n}{2}+1,\frac{n}{2}+4)\}}_{n}$. We then show that restriction of $\Omega^{Q}_{n}$ on $\widetilde{{\cal S}}^{Q}_n$ (which we denote as $\widetilde{\omega}^{Q}_{n}$) is such that its pushforward via scattering equations equals scattering amplitude. We finally write this pushforward map as an integral formula on ${\widetilde{{\cal S}}}^{\{14,5n,\ldots,(\frac{n}{2}+1,\frac{n}{2}+4)\}}_{n}$.

We then argue that there is an ``intrinsic" characterization of $\widetilde{{\cal S}}^{\{14,5n,\dots,(\frac{n}{2}+1,\frac{n}{2}+4)\}}_n$ in $\overline{{\cal M}}_{0,n}({\bf R})$. This characterisation is obtained by defining $\widetilde{{\cal S}}^{Q}_n$ as a $\frac{n-4}{2}$ dimensional positive-geometry embedded in the moduli space on which the exact form $\widehat{\Omega}_{n}^{Q}\ -\ \widetilde{\Omega}^{Q}_{n}$ vanishes. This characterisation leads us to a definition of the world-sheet Stokes geometry $\widetilde{{\cal S}}^{Q}_n$ for other class of quadrangulations in Section \ref{pentinws}.

\subsection{World-sheet Stokes geometries for $Q\ =\ \{14,5n,\dots,(\frac{n}{2}+1,\frac{n}{2}+4)\}$}\label{wssp14}
We begin this section by considering again the $n=6$ case with $Q=\{14\}$ being the reference quadrangulation.
Let us note that the exact form on the RHS of equation \eqref{extfor14} has an interesting structure. If we approach the co-dimension 2 boundary of $\overline{{\cal M}}_{0,6}({\bf R})$ obtained by setting $u_{46}$ and $u_{13}$ to $0$, it vanishes. Thus the restriction of $\widetilde\Omega^{\{14\}}_6$ on to this boundary is such that its pushforward by scattering equations produces exactly the partial amplitude $m^{\{14\}}_6$.
%
%

In kinematic space this boundary corresponds to $X_{46}=0$ and $X_{13}=0$, which from our discussion in Section \ref{pardiss} gives precisely the canonical convex realisation of ${\cal S}_6^{\{14\}}$ as a boundary of $\mathcal{A}_6$. Image of this convex realisation under (inverse of) diffeomorphism generated by scattering equation is thus a one-dimensional positive geometry in $\overline{{\cal M}}_{0,6}({\bf R})$ which we will refer to as world-sheet Stokes Geometry and denote as $\widetilde{{\cal S}}^{\{14\}}_6$. We denote the restriction of $\frac{1}{{\cal N}}\,\widetilde{\Omega}^{\{14\}}_{6}$ on to $\widetilde{{\cal S}}^{\{14\}}_6$ as $\widetilde{\omega}^{\{14\}}_{6}$  and it is given by :
\begin{align}\label{n6wshc}
\widetilde{\omega}^{\{14\}}_{6}=
\frac{d u_{14}}{u_{14}(1-u_{14})}\,.
\end{align}
Let us now generalise the above discussion to arbitrary $n$ when $Q=\{14,5n,\dots,(\frac{n}{2}+1,\frac{n}{2}+4)\}$. As we know from Section \ref{pardiss} and the discussion above,  world-sheet Stokes Stokes geometry $\widetilde{{\cal S}}^{\{14,5n,\dots,(\frac{n}{2}+1,\frac{n}{2}+4)\}}_{n}$ is obtained by setting $u_{ij}=0$ $\forall\ (ij)\in\ Q^{\textrm{c}}:=\{13,4n,5(n-1),\dots,(\frac{n}{2}+1,\frac{n}{2}+3)\}$. The restriction of $\widetilde{\Omega}^{\{14,5n,\dots,(\frac{n}{2}+1,\frac{n}{2}+4)\}}_{n}$ on to this  Stokes geometry is an immediate generalisation of \eqref{n6wshc} :
\begin{align}\label{smallom145n}
\widetilde{\omega}^{\{14,5n,\dots,(\frac{n}{2}+1,\frac{n}{2}+4)\}}_{n}&=\frac{du_{14}\ \wedge\ du_{5n}\wedge\ \dots\ \wedge\ du_{(\frac{n}{2}+1),(\frac{n}{2}+4)}}{u_{14}(1-u_{14})\ \dots\ u_{\frac{n}{2}+1,\frac{n}{2}+4}(1-u_{\frac{n}{2}+1,\frac{n}{2}+4})}\cr
&=\bigwedge_{(ij)\in \{14,5n,\dots,(\frac{n}{2}+1,\frac{n}{2}+4)\}}\,\frac{du_{ij}}{u_{ij}(1-u_{ij})}\,.
\end{align}

We will now verify that under pushforward by scattering equations, \eqref{smallom145n} reproduces the partial amplitude $m_{n}^{\{14,5n,\dots,(\frac{n}{2}+1,\frac{n}{2}+4)\}}$. We notice the following linear relations between $X_{ij}$ and the corresponding $u_{ij}$ (see Appendix \ref{linear} for a derivation of these relations) :
\begin{align}
\label{linxu}
X_{14}&=u_{14}\sum_{4\le j\le n-1}\,(c_{1j}+c_{2j})\,,\cr
X_{ab}&=u_{ab}\ c_{a-1,b-1},\quad a=5+k,\ b=n-k,\quad k\in [0,\ldots,\frac{n}{2}-4]\,,\cr
X_{3n}&=u_{3n}\sum_{4\le j\le n-1}\,(c_{1j}+c_{2j})\,,\cr
X_{ab}&=u_{ab}\ c_{ab},\quad a=4+k, b=n-k-1,\quad k\in[0,1,\ldots \frac{n}{2}-4]\,.
\end{align}

As the canonical form on $\widetilde{{\cal S}}_{n}^{\{14,5n,\dots,(\frac{n}{2}+1,\frac{n}{2}+3)\}}$ is simply a direct product of $d \ln$ 1-forms \eqref{smallom145n}, we can immediately see that its pushforward using scattering equations \eqref{linxu} is the canonical form on ${\cal S}^{\{14,5n,\dots,(\frac{n}{2}+1,\frac{n}{2}+4)\}}_{n}$:
\begin{align}
\frac{1}{\vert\textrm{solns}\vert}\ \sum_{\textrm{solns}}\prod_{(ij)\in Q}\frac{d u_{ij}}{u_{ij}(1-u_{ij})}=m^{Q}_n\,\wedge_{(ij)\,\in\,Q}\,dX_{ij}\,,
\end{align} 
where the sum is over all the solutions of \eqref{linxu} and $Q=\{14,5n,\dots,(\frac{n}{2}+1,\frac{n}{2}+4)\}$. Thus for this class of quadrangulations, the pushforward can be expressed as a ``CHY-type" integral formula where the integrand is an $\frac{n-4}{2}$ form and integration is over the world-sheet Stokes geometry :
\begin{equation}\label{hypecubechy}
\boxed{m_{n}^{Q}\ =\ \frac{1}{\vert\textrm{solns}\vert}\ \int\ \prod_{(ij)\in\ Q}\frac{du_{ij}}{u_{ij}(1-u_{ij})}\delta (X_{ij}-c_{ij}u_{ij})}
\end{equation}
where $Q=\{14,5n,\dots,(\frac{n}{2}+1,\frac{n}{2}+4)\}$.

The definition of $\widetilde{{\cal S}}^{\{14,5n,\dots,(\frac{n}{2}+1,\frac{n}{2}+4)\}}_{n}$ above is as an image of ${\cal S}^{\{14,5n,\dots,(\frac{n}{2}+1,\frac{n}{2}+4)\}}_{n}$ under scattering equations. However it is easy to see that this world-sheet Stokes geometry can also be defined by evaluating the exact form $\widehat{\Omega}\ -\ \widetilde{\Omega}$ along the lines of equation \eqref{extfor14} and locating a co-dimension $\frac{n-2}{2}$ hyper-surface in $\overline{{\cal M}}_{0,n}({\bf R})$ on which this form vanishes. As the exact form is a linear combination of world-sheet forms each of which contains at least one $d\ln f_{ij}[u]$, the zero-locus hyper surface is obtained by setting $f_{ij}=\textrm{constant}.$ 

In the present case when $Q=\{14,5n,\dots,(\frac{n}{2}+1,\frac{n}{2}+4)\}$ it can be readily verified that each of the $f_{ij}$ is given by, 
\begin{equation}\label{extfforn}
f_{ij} = c_{i,j} +   \sum_{\substack{i \leq k \leq j-2 \\ j \leq \ell \leq i-2 \\ (k,\ell) \neq (i,j) }} c_{k,\ell} \prod_{ \substack{i\leq p \leq k \\ j \leq q \leq \ell \\ (p,q) \neq (i,j)} } u_{p,q}\,.
\end{equation}
A simple way to see this is that for this reference quadrangulation, $\widetilde{\Omega}^{Q}_{n}$ is a product of $1$-forms for each hexagon which is such that in each such hexagon precisely one of the $(ij)\ \in\ Q$ is a reference quadrangulation. It can be shown that all of $f_{ij}=\textrm{constant}$ is equivalent to $u_{mn}=0\,\,\forall\,\,(mn)\,\in\,\{13,4n,5(n-1),\dots,(\frac{n}{2}+1,\frac{n}{2}+3)\}$. 

Hence the upshot is that $\widetilde{{\cal S}}^{\{14,5n,\dots,(\frac{n}{2}+1,\frac{n}{2}+4)\}}_n$ is a co-dimension $\frac{n-2}{2}$ boundary of $\overline{{\cal M}}_{0,n}({\bf R})$ and $\widetilde{\omega}^{\{14,\dots,(\frac{n}{2}+1,\frac{n}{2}+4)\}}$ is its corresponding canonical form \footnote{We use the word canonical form slightly loosely here : A top form which has (a) log singularities only on the boundary and (b) residue of this form on any face equals form on the face when thought of as a lower dimensional positive geometry. For criteria (b) to be satisfied, we need positive geometry to be such that restriction of $\widetilde{\Omega}^{Q}$ should have no exact form.}.

In Section \ref{pentinws}, we  use this intrinsic definition of world-sheet Stokes geometry for $Q=\{14,16\}$ and argue that this positive geometry is not diffeomorphic (via scattering equations) to \emph{any} of the convex realisation of the Stokes polytope in ${\cal A}_{8}$.

%


\subsection{Towards world-sheet Stokes geometries for other Quadrangulations}\label{pentinws}

In the previous section, we defined the world-sheet Stokes geometry $\widetilde{{\cal S}}^{Q}_n$ corresponding to $Q\ =\ \{14,5n,\dots,(\frac{n}{2}+1,\frac{n}{2}+4)\}$ in two equivalent ways :

\begin{enumerate}
\item  ${\cal S}_{n}^{Q}$ in ${\cal K}_{n}$ is mapped on to $\widetilde{{\cal S}}^{Q}_n$ through (inverse of) diffeomorphism generated by scattering equations. 
\item $\widetilde{{\cal S}}^{Q}_n$ is a positive geometry in the moduli space obtained by setting  $u_{mn}=0$ where $(mn)\in\ Q^{c}$. These conditions ensure that  the additional exact forms contained in $\widetilde{\Omega}^{Q}_{n}$ vanish identically. 
\end{enumerate}

For a topologically inequivalent quadrangulation such as $Q=\{14,16,\dots,(1,n-2)\}$, we did not have a criterion to identify a specific embedding of Stokes polytope inside associahedron. Thus it is not clear how to arrive at a ``canonical" definition of world-sheet Stokes geometries for such cases and consequently it is not clear how to define an integral formula similar to equation \eqref{hypecubechy}. 
 
We will now argue that point (2) can be used to define world-sheet Stokes geometries for arbitrary $Q$. We propose to define $\widetilde{{\cal S}}^{Q}_n$ for arbitrary $Q$ by demanding that the restriction of $\widetilde{\Omega}^{Q}_{n}$ on such a positive geometry has no exact form \footnote{In Section \ref{wssp14} we saw that for $Q=\{14,5n,\dots,(\frac{n}{2}+1,\frac{n}{2}+4)\}$ this restriction is in fact the canonical form on $\widetilde{{\cal S}}^{Q}_n$.}. Although we do not analyse this proposal in detail in this paper, the investigation of $n=8$ case with $Q=\{14,16\}$ in the following makes further analysis worth pursuing. 
 
Consider $\widehat{\Omega}^{\{14,16\}}_{8}$ as defined in equation \eqref{CHYform} :
\begin{equation} 
\begin{split}
     \widehat{\Omega}^{\{14,16\}}_{8} =\,& (d\log u_{14} + d\log f_{14})\wedge  (d\log u_{16} + d\log f_{16}) \\ & - (d\log u_{14} + d\log f_{14})\wedge  (d\log u_{58} + d\log f_{58})\\ & - (d\log u_{36} +d\log f_{36})\wedge  (d\log u_{16} + d\log f_{16}) \\ &+ (d\log u_{38} + d\log f_{38})\wedge  (d\log u_{58} + d\log f_{58}) \\ &+ (d\log u_{36} + d\log f_{36})\wedge  (d\log u_{38} + d\log f_{38})\,,
\end{split}{}
\end{equation}
where  the analytic functions $f_{ij}$ are given by :
\begin{equation}
\begin{split}
    f_{14} =& c_{12,4} + c_{12,5} u_{15} + c_{12,67} u_{15}u_{16} \\
    f_{38} =& c_{12,4} u_{35}u_{36} + c_{12,5} u_{36} + c_{12,67} \\
    f_{16} =& c_{12,67} + c_{3,67}u_{36} + c_{4,67} u_{36}u_{46} \\ 
    f_{58} =& c_{12,67}u_{38}u_{48} + c_{3,67}u_{48} + c_{4,67} \\ 
    f_{36} =& c_{12,4}u_{35}u_{38} + c_{12,5}u_{38} + c_{12,67}u_{16}u_{38} + c_{3,67} u_{16} + c_{4,67}u_{16}u_{46}\,,
\end{split}{}
\end{equation}
in which we introduced the notation $c_{ij,k}=c_{ik}+c_{jk}$, $c_{ij,k\ell}=c_{ik}+c_{i\ell}+c_{jk}+c_{j\ell}$ and $c_{i,jk}=c_{ij}+c_{ik}$.
As we proved in Section \ref{pfa2a}, the pushforward of $\widehat{\Omega}^{\{14,16\}}_{8}$ via scattering equations produces $m_{8}^{\{14,16\}}\ dX_{14}\ \wedge\ dX_{16}$.

We would now like to find a two-dimensional positive geometry $\widetilde{{\cal S}}^{\{14,16\}}_{8}$ in $\overline{{\cal M}}_{0,8}({\bf R})$ such that the exact form vanishes on this geometry. Without loss of generality we set $\sigma_{2}=\sigma_{1}=0$ and $\sigma_{6}=\sigma_{7}=1$, i.e. we try to locate $\widetilde{ST}^{\{14,16\}}$ in a co-dimension two boundary of the (compactified) moduli space. The two dimensional Stokes geometry can be parametrized by an equation of the form :
\begin{align}
{\cal F}(\sigma_{3},\sigma_{4},\sigma_{5})=0\,.
\end{align}
%

For generic kinematics ($\{c_{ij}\}$) the form of $f_{ij}$ is rather complicated and it is not clear to us how to analytically find such an ${\cal F}$.  However we can determine the location of the boundaries of $\widetilde{{\cal S}}^{\{14,16\}}_8$ in $\overline{{\cal M}}_{0,8}({\bf R})$ rather easily \footnote{We  note that to write an integral formula as in equation \ref{hypecubechy}, it is enough to locate the boundaries of $\widetilde{ST}^{Q}$ in $\overline{{\cal M}}_{0,8}({\bf R})$ such that $f_{ij}$ vanish on the boundary. This is because as $\widetilde{\Omega}^{Q}_{n}=\widehat{\Omega}^{Q}_{n}+\textrm{exact form}$ and hence integral of $\widetilde{\Omega}^{Q}_{n}$ on $\widetilde{ST}^{Q}$ equals the integral of $\widehat{\Omega}^{Q}_{n}$.}.

Let us first consider the $u_{16}\rightarrow 0$ facet. Residue of $\widehat{\Omega}^{\{14,16\}}_{8}$ on this facet involves  the exact form $d\ln\frac{f_{14}}{f_{36}}$.  This form vanishes for generic choice of kinematics if $f_{14}\ =\ f_{36}$. It can be easily verified using Pl\"ucker relations that this implies :
\begin{equation}\label{bnd1416ws}
\begin{array}{lll}
{\cal F}(u_{16}=0)=u_{46}\,f_{1}(u_{14},u_{15})\,,
\end{array}
\end{equation}
where $f_{1}$ is an arbitrary analytic function which remains undetermined. Similarly we can show that $\widetilde{{\cal S}}^{\{14,16\}}_8$ intersects the $u_{14}=0, u_{38}=0, u_{58} = 0$ facets of ${\cal M}_{0,8}({\bf R})$ at :
\begin{equation}\label{wsst2}
\begin{array}{lll}
{\cal F}(u_{14}=0)=u_{15}f_{2}(u_{16})\,,\\
{\cal F}(u_{38}=0)=u_{35}\,,\\
{\cal F}(u_{58}=0)=u_{48}f_{3}(u_{14})\,,
\end{array}
\end{equation}
where $f_{2}$ and $f_{3}$ remain undetermined. It can be easily seen that on the $u_{36}=0$ facet the exact form trivially vanishes.

With the boundary locations of $\widetilde{{\cal S}}^{\{14,16\}}_8$ we can see that the following holds :
\begin{equation}
\int_{\widetilde{{\cal S}}^{\{14,16\}}_{8}}\ \widetilde{\Omega}^{Q}_{8}\ \delta(X_{14}-u_{14}\ f_{14})\ \delta(X_{16}-u_{16}\ f_{16})=m_{8}^{\{14,16\}}\,.
\end{equation}
However in contrast to the world-sheet Stokes geometries for  $Q=\{14,5n,\dots,(\frac{n}{2}+1,\frac{n}{2}+4)\}$, the image of $\widetilde{{\cal S}}^{\{14,16\}}_8$ cannot be any (linear) convex realisation of $ST^{\{14,16\}}$ inside the kinematic space associahedron ${\cal A}_{8}$. We prove this as follows. If there is indeed such a convex realisation it would be determined by a linear relation :
\begin{equation}\label{lin-pent}
a_{14}X_{14}+a_{15}X_{15}+a_{16}X_{16}=C\,,
\end{equation}
where $a_{ij}$ and $C$ are positive constants. However from equations \eqref{bnd1416ws} and \eqref{wsst2} these constants are constrained.

At $u_{14}=0$ boundary, $X_{14}\ \rightarrow\ 0$ and from equation \eqref{wsst2} we have that $X_{15}\ \rightarrow\ 0$. Thus equation \eqref{lin-pent} becomes :
\begin{equation}
X_{16}=\textrm{const}\,,
\end{equation}
which is meaningless, as $X_{14}\ \rightarrow\ 0$ is a one dimensional facet of the  pentagon ${\cal S}^{\{14,16\}}_{8}$. 

As we saw in Section \ref{proof-convergentdissection}, there is no canonical embedding of a Stokes polytope with $Q=\{14,16\}$ in ${\cal A}_{8}$. We now see that from the perspective of world-sheet and scattering equations we do not get any linear realisation of Stokes polytope in kinematic space, i.e. if we seek a world-sheet Stokes geometry on which the exact form vanishes then even though there does exist such a positive geometry in $\overline{{\cal M}}_{0,8}({\bf R})$, it does not get mapped to \emph{any} convex realisation of Stokes polytope in kinematic space ! Although our data point is merely one example, we believe the lesson drawn is more general and will be valid for arbitrary $n$ and any quadrangulation other than $Q=\{14,5n,\dots,(\frac{n}{2}+1,\frac{n}{2}+4)\}$. Whether there exists diffeomorphisms which map $\widetilde{{\cal S}}^{Q}_n$ to \emph{a} convex realisation of Stokes polytope in kinematic space is a question we leave for future investigation.

\section{Outlook}

The CHY formalism  and the  Amplituhedron program are two of the most outstanding recent developments in our conceptual understanding of S Matrix theory. For bi-adjoint scalar $\phi^{3}$ amplitudes, these paradigms are two avatars of the same underlying object, namely, associahedron and its corresponding canonical form. However for generic non-supersymmetric QFTs, the relationship between CHY formula and positive geometries program is still in its nascent stages if we move beyond cubic vertices. In this paper we tried to analyse just such a relationship for planar quartic scalar interactions. 

A key result that was obtained thanks to recent developments in the theory of cluster algebras, quivers and accordiohedra is that $\phi^{4}$ amplitudes can be obtained via lower projective forms on kinematic space associahedra. The subset of the entire class of ABHY associahedra defined in \cite{HughThomas} were used to deduce this fact. From this perspective, only the combinatorial data of a Stokes polytope (defined via a reference quadrangulation $Q$) was required to define a projective lower form on the associahedron. Although our analysis was for quartic interactions, we believe it can be readily generalised to $\phi^{p}$ where $p>4$ interactions with projective forms of appropriate ranks and their restriction to ABHY associahedra generating the corresponding amplitudes. One may wonder if we can classify all such scalar theories in terms of projective forms of varying ranks on ABHY associahedra. We leave such speculation for future investigation.

The above mentioned result was then transcribed to the world-sheet where there are indeed lower forms on the CHY moduli space whose pushforward via scattering equations produced $m^{Q}_{n}$. We also showed that these lower forms on kinematic space associahedra are canonical forms on Stokes polytopes which admit a realisation in the interior or on the boundary of associahedra. 

Our analysis is rather preliminary. However it brings to the fore lower ranked $d\ln$ forms (and corresponding cohomology groups) on the CHY moduli space which generate scattering amplitudes of (planar) scalar interactions. We conclude with certain open questions which may be worth investigating. 

Focussing on projective lower forms on ABHY associahedra offers a new possibility to revisit the issue of weights $\alpha_{Q}$ which appeared in equation \eqref{weights1}. Computation of $\alpha_{Q}$ (which only depends on $Q$ up to cyclic permutation) was performed in \cite{Banerjee:2018tun},\cite{Raman:2019utu} by demanding that when summed over all $Q$'s, each channel contributes to $m_{n}$ with a unit residue. This criterion relied on input from outside the domain of polytopes and canonical forms to compute the amplitude and that was philosophically a step back from the tenets of the Amplituhedron program (where the geometry of the polytopes is enough to compute amplitude of the theory which turned out to be unitary and local). The problems originated from the fact that when viewed as canonical forms on Stokes polytopes, each $\Omega^{Q}_{n}$ has unit normalisation. But when viewed as lower forms on associahedra, the normalisation of these forms is not determined apriori and opens up a possibility to define it via some new criterion which may depend on $Q$ and ${\cal A}_{n}^{T}$.  

As we argued in Section \ref{pentinws}, for $Q\ \neq\ \{14,5n,\dots,(\frac{n}{2}+1,\frac{n}{2}+4)\}$ although there may exist world-sheet Stokes geometries in ${\cal M}_{0,n}({\bf R})$ on which CHY-inspired integral formulae could be written down to compute $m_{n}^{Q}$, these positive geometries are not diffeomorphic to \emph{any} linear realisation of Stokes polytopes inside the kinematic space. A detailed analysis of such positive geometries and investigation of diffeomorphisms which map them to kinematic space Stokes polytope is required to write CHY type integral formulae for arbitrary quandrangulations and perhaps use to define a moduli space whose polytopal realisations are Stokes polytopes.
 
It is also important to investigate the relationship of the $\frac{n-4}{2}$ forms $\sum_{Q}\ {\alpha}_{Q}\ \widehat{\Omega}^{Q}_{n}$ with the $n-3$ forms for $\phi^{4}$ theory defined in \cite{Baadsgaard:2015ifa}. As these forms have singularities on the same boundaries of ${\cal M}_{0,n}({\bf R})$, we believe that a precise relationship between them must exist but may require further conceptual inputs than the ones given in this paper.

Finally, in \cite{Kalyanapuram:2019nnf}, a different outlook on the world-sheet perspective was provided by computing intersection numbers for Stokes polytopes in the moduli space. We believe that the pushforward maps derived in this paper are rather closely tied to the ideas advocated in \cite{Kalyanapuram:2019nnf}, and a more precise analysis of this relationship is worth pursuing.

\vskip 1.5cm
\noindent {\large {\bf Acknowledgments}}
\vskip 0.2cm
\noindent We are grateful to Sujay K. Ashok, Anirban Basu, Miguel Campiglia, Subhroneel Chakrabarti, Poul Damgaard, Abhijit Gadde, Nikhil Kalyanapuram, Akavoor Manu,  Prashanth Raman and Arnab Priya Saha for discussions and critical inputs. We are extremely grateful to Nikhil Kalyanapuram for making his manuscript available to us prior to publication.  We are indebted to Alfredo Guevara, Song He, Dileep P. Jatkar, Madhusudhan Raman and Ashoke Sen for a number of insightful discussions, and Song He for constant encouragement and his generosity with regards to time.  We would also like to thank Frederic Chapoton, Satyan Devadoss and Sushmita Venugopalan for a number of pertinent clarifications and to Frederic Chapoton for pointing out the key reference \cite{1906ppp}. AL is grateful to Sarjick Bakshi, KV Subrahmanyam and Sukendu Merhotra for conducting the Cluster Algebra seminar at CMI and to Sarjick Bakshi and Suratno Basu for being patient with his numerous doubts. AL is thankful to ICTS Bangalore, HRI Allahabad, Munich Institute for Astro- and Particle Physics (MIAPP), Ashoka University, BITS-Pilani (Goa campus) and especially IIT Gandhinagar for their kind hospitality during various stages of this project. The work of AL was partially supported by Scholars in Residence program at IIT Gandhinagar. The work of R.R.J is supported by the MIUR PRIN Contract 2015 MP2CX4 ``Non-perturbative Aspects Of Gauge Theories And Strings". The work of R.R.J is also partially supported by ``Fondi Ricerca Locale dell'Universit\`a del Piemonte Orientale". PB would like to acknowledge the support provided by the Max Planck Partner Group grant MAXPLA/PHY/2018577.

\noindent
%
\vskip 1cm

\appendix

\section{Notation}
We collect the relevant notation that goes into the paper below :
\begin{itemize}
\item $\mathcal{K}_{n}$ :  Kinematic space for $n$ massless momenta
\item $T$ : Triangulation of an $n$-gon
\item $Q$ :  Quadrangulation of an $n$-gon 
\item $\mathcal{A}_n$ :  ABHY associahedron in kinematic space associated to $T=\{(13),(14),\ldots\,(1,n-1)\}$  
\item $\mathcal{A}_n^T$ :  ABHY associahedron in kinematic space associated to any other triangulation 
\item $\Omega^Q_n$ :  Projective planar scattering form associated to quadrangulation $Q$ of an $n$-gon
\item $\mathcal{S}^Q_n$ :  Stokes polytope in kinematic space associated to quadrangulation $Q$ of an $n$-gon  
\item $m_n^Q$ :  Canonical rational function/ partial amplitude associated to $\mathcal{S}^Q_n$
\item $m_n$ :  Tree level planar $n$-point scattering amplitude
\item $\mathcal{M}_{0,n}({\textbf R})$ : Real section of the moduli space of  genus 0 with $n$ punctures
\item $\overline{\mathcal{M}_{0,n}}({\textbf R})$ : Compactification of $\mathcal{M}_{0,n}({\textbf R})$
\item $\widetilde\Omega^Q_n$ : Projective form on the moduli space corresponding to $\Omega^Q_n$
\item $\widetilde{{\cal S}}^{Q}_n$ :  world-sheet Stokes geometry associated to quadrangulation $Q$
\item $\widetilde\omega^Q_n$ : Restriction of $\widetilde\Omega^Q_n$ on to $\widetilde{{\cal S}}^{Q}_n$
\end{itemize}


\section{Convex realisation of Stokes polytopes : An example}\label{appenalgo}
Algorithm given in Section \ref{sec3} gives us a geometric realization of Stokes polytope inside the kinematic space. Here we will look at an example of geometric realization of Stokes polytope with $n=8$ and reference quadrangulation $(1,4),(1,6)$.

\begin{itemize}
    \item We first draw hollow vertices $\widetilde{v}_1,\widetilde{v}_3,\widetilde{v}_4,\widetilde{v}_5,\widetilde{v}_6$ and $\widetilde{v}_8$ on the edges $(1,2),(3,4),(4,5),(5,6),(6,7)$ and $(8,1)$ respectively. We place a hollow vertex $v_{1,4}$ on the diagonal $D_{1,4}$ and a hollow vertex $v_{1,6}$ on the diagonal $D_{1,6}$.
    \begin{figure}[H]
    \centering
    \includegraphics[scale=0.3]{walksexample.png}
    \end{figure}
    \item  Now we draw the following arrows, $(\widetilde{v}_1,v_{1,4}),(v_{1,4},\widetilde{v}_{3}),(\widetilde{v}_4,v_{1,4}),(v_{1,6}\widetilde{v}_{5}),(\widetilde{v}_{6},v_{1,6}),(v_{1,6},\widetilde{v}_{8})$ and $(v_{1,4},v_{1,6})$. Here the direction of arrow $(a,b)$ is from $a$ to $b$.
    \item For this quadrangulation there are three internal paths, two paths of length zero and a path of length one. Let $p_1 =v_{1,4} $, $p_2= v_{1,6}$ and $p_3 = v_{1,4}v_{1,6} $.
    \item The subset of proper walks that appear in the algorithm given in Section \ref{sec3} are :
    \begin{align*}
        \hook p_1 \unhook & = \widetilde{v}_3 v_{1,4} v_{1,6} \widetilde{v}_{6} &  \cohook p_1 \counhook &= \widetilde{v}_1 v_{1,4} \widetilde{v}_{4} &   
        \hook p_1 \counhook & = \widetilde{v}_3 v_{1,4}\widetilde{v}_{4}  & 
        \cohook p_1 \unhook & = \widetilde{v}_1 v_{1,4} v_{1,6} \widetilde{v}_{6}.\\
         \hook p_2 \unhook & = \widetilde{v}_5 v_{1,6} \widetilde{v}_{8} &  \cohook p_2 \counhook &= \widetilde{v}_3 v_{1,4} v_{1,6} \widetilde{v}_{6} &  
         \hook p_2 \counhook & = \widetilde{v}_5 v_{1,6}\widetilde{v}_{6}  &  
         \cohook p_2 \unhook & = \widetilde{v}_3 v_{1,4} v_{1,6} \widetilde{v}_{8}.\\
           \hook p_3 \unhook & = \widetilde{v}_{3}v_{1,4} v_{1,6} \widetilde{v}_{8} &  \cohook p_3 \counhook &= \widetilde{v}_{1}v_{1,4} v_{1,6} \widetilde{v}_{6} &  \hook p_3 \counhook & = \widetilde{v}_{3}v_{1,4} v_{1,6} \widetilde{v}_{6} &  \cohook p_3 \unhook & = \widetilde{v}_{1}v_{1,4} v_{1,6} \widetilde{v}_{8} &.   
    \end{align*}
    \end{itemize}
Therefore the constraints for Stokes polytope with reference quadrangulation $(1,4),(1,6)$ are 
\begin{align}\label{1416hch}
    X_{36} + X_{14} - X_{16} &= d_{p_1}\,,\cr
    X_{58} + X_{36} - X_{38} &= d_{p_2}\,,\cr
    X_{38} + X_{16} - X_{36} &= d_{p_3}\,.
\end{align}
This algorithm captures the notions of $Q$-compatibility in Stokes polytope and compatibility in a general accordiohedron. A diagonal $D_{i,j}$ is compatible with given reference dissection if and only if there exists a proper walk between hollow vertices $\widetilde{v}_i$ and $\widetilde{v}_j$. To see this we first notice that the proper walk between hollow vertices $\widetilde{v}_i$ and $\widetilde{v}_j$, with $j \neq i \pm 1 $ can be continuously deformed to get the hollow diagonal $D^{\circ}_{i,j}$. Therefore, the vertices of the proper walk are precisely the vertices on the diagonals which intersect the hollow diagonal $D^{\circ}_{i,j}$. Since the arrows are drawn only between the diagonals and edges which intersect, the set of edges $[i,i+1],[j,i+1]$ and diagonals which intersect the hollow diagonal $D^{\circ}_{i,j}$ is connected. On the other hand if the set of edges $[i,i+1],[j,i+1]$ and diagonals which intersect the hollow diagonal $D^{\circ}_{i,j}$ is connected it is easy to see that we can construct a proper walk between the hollow vertices $\widetilde{v}_i$ and $\widetilde{v}_j$.

\section{Convex realisation of Stokes Polytopes via an equivalent set of Constraints}\label{newappfor321}

In this appendix a second set of constraints which locate Stokes polytope in kinematic space \footnote{These set of constraints are derived in theorem {\bf 2.33} in \cite{1906ppp}. The complete derivation of these constraints involves describing Stokes Polytopes via ${\bf g}$-vectors.}.

Let $\mathcal{P}_{k\ell}$ be the set of all hollow vertices (paths of length zero) at which the proper walk between vertices $\widetilde{v}_{k}$ and $\widetilde{v}_{l}$ peaks, and let $\mathcal{V}_{k\ell}$ be the set of all hollow vertices at which the proper walk between the two vertices deeps. It can be shown that the constraints given in equation (\ref{stokc}) are equivalent to the following :
\begin{align}\label{gvect}
    X_{ij} = \sum_{(pq) \in \mathcal{V}_{ij} } X_{pq} - \sum_{(pq) \in \mathcal{P}_{ij}} X_{pq} + \epsilon_{ij}\,,
\end{align}
for all $X_{ij}$ compatible with the reference quadrangulation.
Here $\epsilon_{ij}$ are linear combinations of the $d_{p_{k}}$ in equation (\ref{stokc}).

These constraints appear to be very different then the ones defined in equation \eqref{stokc}. While the previous set of constraints were $\vert K\vert$ in number which (if the reference quadrangulation $Q$ does not contain any parallel diagonals) equals the number of $Q$-compatible planar variables minus the dimension of the Stokes polytope $\frac{n-4}{2}$, this set of constraints have cardinality equal to the number of $Q$-compatible $X_{ij}$. However if $(ij)\ \in\ Q$ then these constraints are trivially satisfied. 
 
 The advantage of writing the constraints this way is that it gives all the compatible $X_{ij}$ in terms of diagonals of the reference quadrangulation.

We derive these set of constraints for $n=8$ case and $Q\ =\ \{14,16\}$ to show that they are indeed equivalent to those given in equation \eqref{1416hch}. The proper walk between the vertices $\widetilde{v}_{3}$ and $\widetilde{v}_{6}$ is $\widetilde{v}_{3}, v_{1,4}, v_{1,6}, \widetilde{v}_{6}$. It peaks at hollow vertex $v_{1,4}$ and deeps at hollow vertex $ v_{1,6}$. Therefore the sets $\mathcal{P}_{3,6}$ and $\mathcal{V}_{3,6}$ are given by $\mathcal{P}_{3,6} =  \{v_{1,4} \} $ and $\mathcal{V}_{3,6} = \{ v_{1,6} \}$. Similarly, we can find the sets $\mathcal{P}_{k\ell}$ and $\mathcal{V}_{k\ell}$ for other proper walks. 
\begin{align}
\mathcal{P}_{1,4} & = \emptyset &  \mathcal{V}_{1,4} & = \{ v_{1,4} \}   \\
\mathcal{P}_{3,8} & = \{ v_{1,4} \} &  \mathcal{V}_{3,8} & =  \emptyset \\
\mathcal{P}_{1,6} & = \emptyset &  \mathcal{V}_{1,6} & = \{ v_{1,6} \}  \\
\mathcal{P}_{5,8} & =  \{ v_{1,6} \} &  \mathcal{V}_{5,8} & = \emptyset  
\end{align}
Therefore we get the following constraints from equation (\ref{gvect}),
\begin{align}\label{1416gvect}
X_{14} &= X_{14} &
 X_{16} &= X_{16} \\
 X_{38} &= -X_{14} + \epsilon_{38} &
 X_{58} &= -X_{16} + \epsilon_{58} \\
 X_{36} &= X_{16} - X_{14} + \epsilon_{36} & & 
\end{align}
Now it is easy to see that with $\epsilon_{38} = d_{p_1} + d_{p_3} $, $\epsilon_{58} = d_{p_2} + d_{p_3} $ and $\epsilon_{36} =  d_{p_1}$ the above equation \eqref{1416gvect} are equivalent to the equation \eqref{1416hch}. 

We use the above constraints to restrict the planar scattering form onto Stokes polytope. In Appendix \ref{spoofQ} we show that this restriction is proportional to the partial quartic $n$-point amplitude $m_{n}^{Q}$.
%
%
%

\section{Analysing geometric constraints on $X_{ij}\ (ij)\in Q$}\label{appb}
In this appendix we will show that the Stokes polytope constraints in equation (\ref{stokc}) are linear combinations of associahedron constraints in equation (\ref{sijcijT}). We will consider the set of constraints given in equation (\ref{gvect}). These equations provide a convex realisation of accordiohedra, in particular they provide a convex realisation of Stokes polytope and associahedron. We will show that the set of constraints given in equation (\ref{gvect}) for Stokes polytope is a subset of set of constraints given in equation (\ref{gvect}) for associahedron. Since, the constraints in equation (\ref{gvect}) for Stokes polytope are equivalent to constraints in equation (\ref{stokc}) and the constraints in equation (\ref{gvect}) for associahedron are equivalent to constraints in equation (\ref{sijcijT}), we would have proved our claim.

For a given $X_{ij}$ compatible with the reference dissection $D$ we denote by $\mathcal{P}_{D,i,j}$  the set of all hollow vertices at which the proper walk between the vertices $\widetilde{v}_{i}$ and $\widetilde{v}_{j}$ peaks and similarly we denote by $\mathcal{V}_{D,i,j}$ the set of all hollow vertices at which the proper walk between the vertices $\widetilde{v}_{i}$ and $\widetilde{v}_{j}$ peaks. 

\begin{claim}\label{subpeak}
If $D_{1}$ and $D_{2}$ are two dissections of an $n$-gon such that $D_{1} \subset D_{2} $, then $\mathcal{P}_{D_{1},i,j} = \mathcal{P}_{D_{2},i,j} $ and $\mathcal{V}_{D_{1},i,j} = \mathcal{V}_{D_{2},i,j} $ for all $(i,j)$ compatible with reference dissection $D_{1}$.
\end{claim}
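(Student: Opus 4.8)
The plan is to argue that the data determining $\mathcal{P}_{D,i,j}$ and $\mathcal{V}_{D,i,j}$ — namely the proper walk between the hollow vertices $\widetilde v_i$ and $\widetilde v_j$ — depends only on those diagonals of $D$ that actually cross the hollow diagonal $D^\circ_{i,j}$, and that this set of crossing diagonals is the same whether we compute it inside $D_1$ or $D_2$. The key tool is the geometric characterization recalled at the end of Appendix~\ref{appenalgo}: the proper walk between $\widetilde v_i$ and $\widetilde v_j$ can be continuously deformed onto the hollow diagonal $D^\circ_{i,j}$, so its vertices are exactly the hollow vertices lying on the solid diagonals of $D$ that intersect $D^\circ_{i,j}$, visited in the order in which $D^\circ_{i,j}$ meets them. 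Hence the walk — including which of its internal hollow vertices are peaks and which are dips — is entirely controlled by the ordered list of $D$-diagonals crossing $D^\circ_{i,j}$, together with the local picture of the arrows in the two cells adjacent to each such crossing.

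First I would make precise that $(i,j)$ being compatible with $D_1$ means the cut $C(i',j')$ is connected; by the discussion in Appendix~\ref{appenalgo} this is equivalent to the existence of the proper walk, and the diagonals of $D_1$ appearing in $C(i',j')$ are precisely the $D_1$-diagonals crossing $D^\circ_{i,j}$. Next, the central observation: since $D_1\subset D_2$, any diagonal of $D_2$ that crosses $D^\circ_{i,j}$ and is \emph{not} in $D_1$ would have to lie inside one of the cells of $D_1$; but a diagonal inside a single cell of $D_1$, crossing $D^\circ_{i,j}$, would split that cell and its presence would be consistent with $(i,j)$ still being compatible with $D_1$ — so I need to rule this out. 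The resolution is that a chord of a convex region crossing the hollow diagonal, where the hollow diagonal enters and exits that cell through the cell's hollow vertices, forces the crossing count within that cell to change the connectivity structure; more carefully, one shows directly that the set of diagonals crossing $D^\circ_{i,j}$ is the same for $D_1$ and $D_2$ because the extra diagonals of $D_2\setminus D_1$ triangulate/quadrangulate the cells of $D_1$ \emph{without} introducing new crossings of any hollow diagonal that is already ``resolved'' by $D_1$ — i.e., if $D^\circ_{i,j}$ passes through a cell $c$ of $D_1$ entering and leaving through two of $c$'s hollow vertices on solid edges of $D_1$, then no further subdivision of $c$ adds a diagonal meeting $D^\circ_{i,j}$, since such a diagonal would separate the two entry/exit edges and hence would itself have to be crossed, contradicting that it lies strictly inside $c$ with endpoints on $c$'s solid vertices. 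I would phrase this as a short lemma about convex polygons and their dissections.

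Once the crossing-diagonal sets agree, the proper walks agree as sequences of hollow vertices and arrows: the arrows within each cell are drawn by the fixed counter-clockwise rule of Section~\ref{convexrealisationstokesposet}, and the two cells adjacent to each crossing diagonal are the same local configuration in $D_1$ and $D_2$ (the refinement only touches cells \emph{not} met by $D^\circ_{i,j}$). Therefore the peak set and dip set computed along the walk coincide: $\mathcal{P}_{D_1,i,j}=\mathcal{P}_{D_2,i,j}$ and $\mathcal{V}_{D_1,i,j}=\mathcal{V}_{D_2,i,j}$.

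I expect the main obstacle to be the convex-geometry lemma in the previous paragraph — precisely arguing that a diagonal of $D_2$ lying inside a single cell of $D_1$ cannot cross the hollow diagonal $D^\circ_{i,j}$ when $(i,j)$ is $D_1$-compatible. This requires being careful about how $D^\circ_{i,j}$ traverses the cells of $D_1$ (it enters and exits each cell it meets through hollow vertices sitting on the \emph{solid} edges/diagonals of that cell), and using convexity to conclude no interior chord of such a cell can be crossed. Everything after that is bookkeeping along the (unique) proper walk, using only definitions already set up in Sections~\ref{convexrealisationstokesposet} and Appendix~\ref{appenalgo}.
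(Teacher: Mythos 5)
Your argument hinges on the lemma that the set of diagonals crossing the hollow diagonal $D^{\circ}_{i,j}$ is the same for $D_{1}$ and $D_{2}$, i.e.\ that no diagonal of $D_{2}\setminus D_{1}$ meets $D^{\circ}_{i,j}$ when $(i,j)$ is $D_{1}$-compatible. That lemma is false. Take the hexagon with $D_{1}=\{(1,4)\}$, $D_{2}=\{(1,3),(1,4),(1,5)\}$ and $(i,j)=(3,6)$, which is $D_{1}$-compatible. The hollow diagonal runs from the midpoint of $[3,4]$ to the midpoint of $[6,1]$; the chord $(1,5)$ separates these two midpoints (one lies in the triangle $1,5,6$, the other in the pentagon $1,2,3,4,5$), so $(1,5)\in D_{2}\setminus D_{1}$ is crossed and the proper walk in $D_{2}$ acquires the extra hollow vertex $v_{1,5}$. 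Your attempted justification is not a contradiction at all: a chord lying strictly inside a cell $c$ of $D_{1}$ can perfectly well separate the entry and exit points of $D^{\circ}_{i,j}$ within $c$, and then it is crossed --- exactly what happens in this example. So the step you yourself flagged as the main obstacle is the one that fails, and everything downstream (identity of the walks as sequences of vertices and arrows) collapses with it.

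The correct route, which is the one the paper takes, is to accept that the walk in the larger dissection passes through additional hollow vertices and to show instead that these extra passages are \emph{neutral}: by a local analysis of the four arrow configurations at a diagonal $(k,\ell)$, a walk that peaks or dips at $v_{k,\ell}$ cannot be reduced to a proper walk of $D\setminus(k,\ell)$, whereas a walk that passes through $v_{k,\ell}$ neutrally contracts to a proper walk of $D\setminus(k,\ell)$ with all arrow directions, hence all remaining peaks and dips, unchanged. Removing the diagonals of $D_{2}\setminus D_{1}$ one at a time then gives $\mathcal{P}_{D_{1},i,j}=\mathcal{P}_{D_{2},i,j}$ and $\mathcal{V}_{D_{1},i,j}=\mathcal{V}_{D_{2},i,j}$ even though the underlying walks are genuinely different. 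If you want to salvage your write-up, replace your crossing-set lemma with this neutrality statement and the one-diagonal-at-a-time contraction.
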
{}
\begin{proof}
We will show that for an arbitrary dissection $D$ of an $n$-gon if we remove a diagonal $(k,l) \in D $ from $D$ then $\mathcal{P}_{D\setminus (k,l),i,j} = \mathcal{P}_{D,i,j} $ and $\mathcal{V}_{D\setminus (k,l),i,j} = \mathcal{V}_{D,i,j} $ for all $(i,j)$ compatible with reference dissection $D\setminus (k,l)$. 

Given a diagonal $(k,l)\in D$, the arrows and vertices near the diagonal $(k,l)$ look like figure \ref{withkl}, where $(k,p),(l,q),(l,r)$ and $(k,s)$ are sides or diagonals in $D$. Any walk containing $v_{k,l}$ contains one of the following paths; $v_{s,k}v_{k,l}v_{k,p}$,$v_{l,r}v_{k,l}v_{l,q}$,$v_{s,k}v_{k,l}v_{l,q}$ or $v_{k,p}v_{k,l}v_{l,r}$. The walks containing $v_{s,k}v_{k,l}v_{l,q}$ have a peak at $(k,l)$ and walks containing $v_{k,p}v_{k,l}v_{l,r}$ have a deep at $(k,l)$.

\begin{figure}[H]
    \centering
    \includegraphics[scale=0.40]{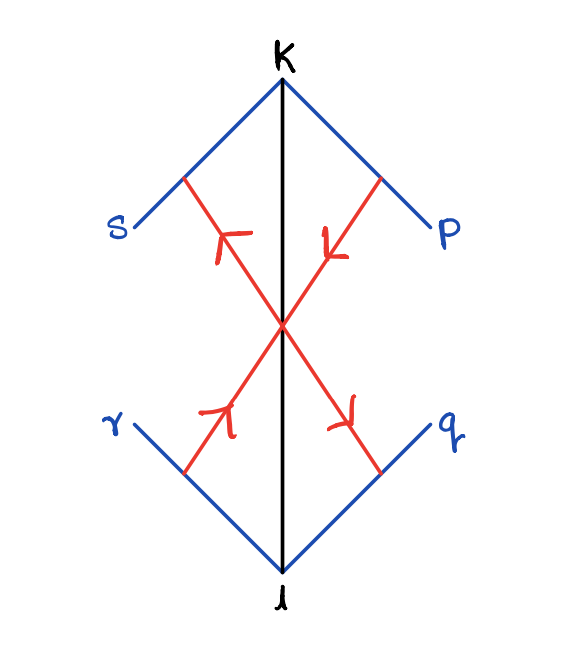}
    \caption{Arrows and vertices near the diagonal $(k,l)$}
    \label{withkl}
\end{figure}

Now if we remove the diagonal $(k,l)$ from the reference dissection, that is if consider the reference dissection $D\setminus (k,l)$, the arrows and diagonals look like figure \ref{withoutkl}. The walks in reference $D$ containing $v_{s,k}v_{k,l}v_{k,p}$ or $v_{l,r}v_{k,l}v_{l,q}$ reduce to walks containing $v_{s,k}v_{k,p}$ or $v_{l,r}v_{l,q}$ respectively. While the walks containing $v_{s,k}v_{k,l}v_{l,q}$ or $v_{k,p}v_{k,l}v_{l,r}$ can not be reduced to walks of $D\setminus (k,l)$. In other words if $(i,j)$ is compatible with dissection $D$ then $(i,j)$ is compatible with the dissection $D\setminus (k,l)$ if and only if the walk in $D$ between vertices $\widetilde{v}_i$ and $\widetilde{v}_j$ do not peak or deep at $(k,l)$.

\begin{figure}[H]
    \centering
    \includegraphics[scale=0.40]{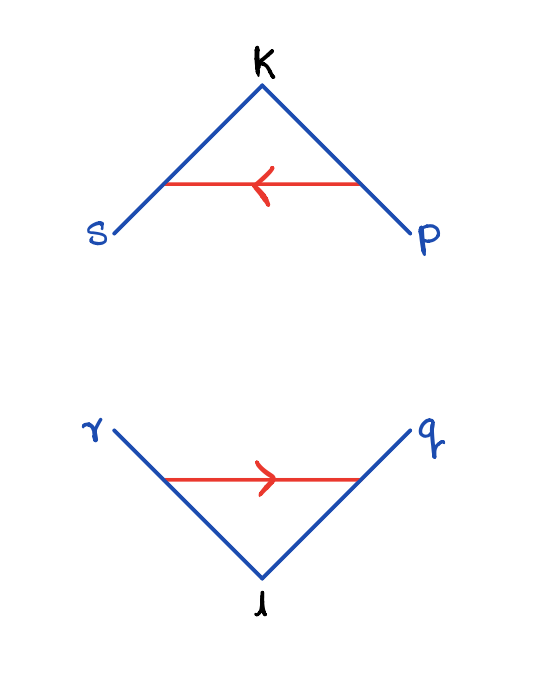}
    \caption{Arrows and vertices without diagonal $(k,l)$ }
    \label{withoutkl}
\end{figure}

Since the directions of arrows remain same in the reduced walks, other peaks and deeps of walks in $D$ are same as peaks and deeps of walks in $D\setminus(k,l)$. Therefore 
\begin{align}
    \mathcal{P}_{D,i,j} &=  \mathcal{P}_{(D\setminus(k,l)),i,j} &  \mathcal{V}_{D,i,j} &=  \mathcal{V}_{(D\setminus(k,l)),i,j},
\end{align}
for all $(i,j)$ compatible with the dissection $D\setminus(k,l)$.

Now, removing diagonals one by one it is easy to see that, if $D_{1}$ and $D_{2}$ are two dissections of an $n$-gon such that $D_{1} \subset D_{2} $, then $\mathcal{P}_{D_{1},i,j} = \mathcal{P}_{D_{2},i,j} $ and $\mathcal{V}_{D_{1},i,j} = \mathcal{V}_{D_{2},i,j} $ for all $(i,j)$ compatible with reference dissection $D_{1}$.
\end{proof}{} 

The convex realisations of accordiohedra are given by 
\begin{equation}
    X_{ij} = \sum_{(p,q) \in \mathcal{V}_{i,j} } X_{p,q} - \sum_{(p,q) \in \mathcal{P}_{i,j}} X_{p,q} + \epsilon_{i,j},
\end{equation}{}
for all $X_{ij}$ compatible with given reference dissection. Using claim \ref{subpeak} it is easy to see that the set of constraints for Stokes polytope is a subset of set of constraints for associahedron.

\section{Canonical form on kinematic space and scattering amplitude : Proof }\label{spoofQ}

In this section we review and generalise the results of \cite{Banerjee:2018tun} and show that given a reference quadrangulation $Q$ and the corresponding canonical form $\Omega^{Q}$ in kinematic space, the  induced form on ${\cal S}^{Q}_{n}$ produces a partial contribution to $\phi^{4}$ scattering amplitude.

\begin{claim}
Given a reference dissection $D$, suppose $X_{ij}$ and $X_{kl}$  are mutations of each other, i.e. $X_{ij}$ and $X_{kl}$  are  compatible intersecting diagonals and there exists a set of compatible diagonals $S$ such that $\{X_{ij}\} \cup S $ and $\{X_{k\ell}\} \cup S $ are maximal compatible dissections.  Let $\Omega_{S} =  \bigwedge_{X_{a,b} \in S} \mathrm{d}X_{a,b} $. Then 
\begin{equation}
    \mathrm{d} X_{ij} \wedge \Omega_{S} = -  \mathrm{d} X_{k\ell} \wedge \Omega_{S}.
\end{equation}{}
\end{claim}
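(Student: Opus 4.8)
The statement concerns two planar variables $X_{ij}$ and $X_{k\ell}$ that are mutations of each other, meaning the diagonals $(ij)$ and $(k\ell)$ intersect, are both compatible with the reference dissection $D$, and together with a common set $S$ of compatible diagonals each forms a maximal compatible dissection. The key observation is that a mutation takes place inside a single hexagon: the two maximal dissections $\{(ij)\}\cup S$ and $\{(k\ell)\}\cup S$ agree on $S$, and $S$ carves the $n$-gon into sub-polygons one of which is a hexagon whose two ``long'' diagonals are $(ij)$ and $(k\ell)$. So the plan is first to establish this local picture, and then to show that, modulo wedging with $\Omega_S = \bigwedge_{X_{ab}\in S} dX_{ab}$, there is a linear relation $X_{ij} + X_{k\ell} = (\text{linear in the } X_{ab}\in S \text{ and the Mandelstam/box constants})$, so that $dX_{ij} \wedge \Omega_S = -dX_{k\ell}\wedge \Omega_S$ after the $dX_{ab}$'s are killed by the wedge.

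Concretely, I would label the hexagon cut out by $S$ as having vertices $a<b<c<d<e<f$ (a cyclic sub-sequence of $1,\dots,n$), with $(ij) = (a,d)$ and $(k\ell) = (b,e)$ being the two crossing diagonals. The identity I want is the standard ``square'' relation among planar variables: $X_{ad} + X_{be} - X_{ae} - X_{bd}$ equals a Mandelstam-type combination (as in equation~\eqref{planarkinvars} and the $s_{ij}$ formula in the review section), while $X_{ae}$ and $X_{bd}$ are the two ``short'' diagonals of the hexagon, each of which together with part of $S$ bounds a quadrilateral, hence is expressible linearly in the $X_{ab}\in S$ plus constants. Feeding this back gives $X_{ad} + X_{be} = \sum (\text{constants}) + \sum (\text{coefficients})\, X_{ab}$ with the sum running over $(ab)\in S$. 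Differentiating, $dX_{ij} + dX_{k\ell}$ is a linear combination of the $dX_{ab}$ with $(ab)\in S$; wedging with $\Omega_S$ annihilates every such term, yielding $dX_{ij}\wedge\Omega_S = -dX_{k\ell}\wedge\Omega_S$.

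The main obstacle — and the part that needs genuine care rather than routine computation — is justifying the local hexagon picture in full generality: I need to argue that for $\phi^4$ dissections a mutation exchanging $(ij)$ and $(k\ell)$ with everything else fixed always happens within a hexagonal region bounded by edges of the polygon and diagonals of $S$, and that the two ``short'' diagonals $X_{ae}, X_{bd}$ appearing in the square relation are themselves determined linearly (with constant inhomogeneous term) by $S$ together with the ABHY-type constraints in use. This is essentially the combinatorial content of the factorization/flip structure described around equations~\eqref{factor1}–\eqref{eq:alpha-cons}, so I would lean on that and on the constraint relations \eqref{stokc}; the remaining algebra (expanding the square relation and checking the inhomogeneous terms are constants) is then a short direct verification. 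A cleaner alternative, if one wants to avoid case analysis, is to invoke Appendix~\ref{appb}: since the Stokes-polytope constraints are linear combinations of associahedron constraints, on the support of $\Omega_S$ the variables $X_{ij}$ and $X_{k\ell}$ differ by a constant (the top form on a $1$-dimensional mutation face being $\pm\,d(X_{ij}-X_{k\ell})$ up to sign), which immediately gives the claimed sign flip. I would present the hexagon argument as the main line and mention the constraint-theoretic shortcut as a remark.
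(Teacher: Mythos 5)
You have the right skeleton --- the paper's proof does rest on the ``square relation'' $X_{ij}+X_{k\ell}-X_{i\ell}-X_{kj}=\text{const}$ (with $i<k<j<\ell$) followed by the observation that the leftover differentials are annihilated by $\Omega_S$ --- but the two steps you yourself flag as needing ``genuine care'' are exactly the content of the proof, and your proposal does not supply them. First, you assert that the short diagonals $X_{i\ell}$ and $X_{kj}$ are ``expressible linearly in the $X_{ab}\in S$ plus constants'' because each ``bounds a quadrilateral with part of $S$''; this is not justified, and if either short diagonal depended linearly on $X_{ij}$ or $X_{k\ell}$ the argument would not close. The paper replaces this with a purely combinatorial fact: any diagonal crossing neither $X_{ij}$ nor $X_{k\ell}$ has both endpoints in one of the four arcs $V_{i,k},V_{k,j},V_{j,\ell},V_{\ell,i}$ cut out by the crossing pair, so the short diagonals cross nothing in $S$; hence, by maximality of $S\cup\{X_{ij}\}$, each short diagonal is either already an element of $S$ (if compatible with $D$) or not compatible at all, in which case the corresponding term is simply absent from the identity. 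Either way $dX_{i\ell}\wedge\Omega_S=dX_{kj}\wedge\Omega_S=0$, with no appeal to a hexagon (so the argument covers arbitrary dissections, not just quadrangulations). Second, the constancy of $X_{ij}+X_{k\ell}-X_{i\ell}-X_{kj}$ is \emph{not} obtained from setting Mandelstam invariants to constants: the claim is applied to the pullback onto the accordiohedron realised by \eqref{stokc}/\eqref{gvect}, where only certain combinations of Mandelstams are fixed, so you cannot assume every $s_{ab}$ appearing in the square relation is constant. The paper instead derives $X_{ij}+X_{k\ell}-X_{i\ell}-X_{kj}=\epsilon_{ij}+\epsilon_{k\ell}-\epsilon_{i\ell}-\epsilon_{kj}$ directly from the proper-walk constraints \eqref{gvect}, by a case analysis on where the walks $W_{i,j}$, $W_{k,l}$, $W_{i,l}$, $W_{k,j}$ peak and dip along their common sub-path $\rho$.

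Your ``cleaner alternative'' is also not available as stated: Appendix \ref{appb} shows the Stokes-polytope constraints follow \emph{from} the associahedron constraints, not conversely, so on the accordiohedron realisation you cannot import the full ABHY relations from that appendix; and the relation you would need is that $X_{ij}$ and $X_{k\ell}$ \emph{sum} to a constant modulo $S$-variables, not that they ``differ by a constant'', which would give $dX_{ij}\wedge\Omega_S=+\,dX_{k\ell}\wedge\Omega_S$, i.e.\ the wrong sign.
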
{}

\begin{proof}

Given any intersecting diagonals $X_{ij}$ and  $X_{k\ell}$ with $i<k<j<l$, they divide the vertices of the polygon in four sets, viz. $V_{i,k}=\{ i,i+1,\ldots, k \}$, $V_{k,j}=\{ k,k+1,\ldots, j \}$, $V_{j,l}=\{ j,j+1,\ldots, l \}$ and $V_{l,i}=\{ l,l+1,\ldots, i \}$. Any diagonal which does not intersect $X_{ij}$ and $X_{k\ell}$ has both the end points in one of the above four sets. That is, if $X_{p,q}$ does intersect $X_{ij}$ and $X_{k\ell}$ then both $p$, $q$ belong to one of $V_{i,k}$,$V_{k,j}$,$V_{j,l}$ or $V_{l,i}$.

Notice $X_{ik},X_{kj},X_{j\ell}$ and $X_{l,i}$ do not intersect with any diagonal whose both the end points are in one of $V_{i,k}$,$V_{k,j}$,$V_{j,l}$ or $V_{l,i}$, other than themselves. Therefore, if $X_{p,q} \in \{ X_{ik},X_{kj},X{j,l}, X_{l,i}  \} $  is a compatible diagonal then $ X_{p,q} \in S $.

$X_{ij}$ and $X_{k\ell}$ intersect, therefore the proper walk between $\widetilde{v}_{i}$ and $\widetilde{v}_{j}$, $W_{i,j}$ and the proper walk between $\widetilde{v}_{k}$ and $\widetilde{v}_{l}$, $W_{k,l}$ have a path $\rho$ in common such that one of $W_{i,j}$ and $W_{k,l}$ peaks at $\rho$ and the other one deeps at $\rho$. Without loss of generality let's assume $W_{k,l}$ peaks at $\rho$ and $W_{i,j}$ deeps at $\rho$.

\textbf{Case I :} 

Suppose the length of $\rho$ is greater than zero. That is, $\rho$ is not just a vertex. Suppose $\rho$ starts at vertex $v_{a,b}$ and ends at vertex $v_{c,d}$ , $\rho = v_{a,b} \widetilde{\rho} v_{c,d} $. Let $\alpha_{i},\alpha_{j},\alpha_{k},\alpha_{l}$ be such that
\begin{align}
    W_{i,j} &= \alpha_{i}  v_{a,b} \widetilde{\rho} v_{c,d} \alpha_{j} &    W_{k,l} &= \alpha_{k}  v_{a,b} \widetilde{\rho} v_{c,d} \alpha_{l}.
\end{align}
Consider walks $W_{i,l} = \alpha_{i}  v_{a,b} \widetilde{\rho} v_{c,d} \alpha_{l} $ and   $W_{k,j} = \alpha_{k}  v_{a,b} \widetilde{\rho} v_{c,d} \alpha_{j} .$ $W_{i,l}$ is a proper walk if and only if it peaks or deeps at some vertex in $W_{i,l}$, similarly $W_{k,j}$ is a proper walk if and only if it peaks or deeps at some vertex in $W_{k,j}$. 

Now, we note the following obersvations,
\begin{itemize}
     \item A walk in $ \{ W_{i,j}, W_{k,l},W_{i,l},W_{k,j}  \} $ peaks/deeps at a vertex $v$ in $\widetilde{\rho}$ if and only if all other walks in $ \{ W_{i,j}, W_{k,l},W_{i,l},W_{k,j}  \} $ peak/deep at $v$.
    \item $W_{i,j}$ peaks/deeps at a vertex $v \in \alpha_{i} v_{a,b} $ if and only if $W_{i,l}$ peaks/deeps at v.
    \item $W_{i,j}$ peaks/deeps at a vertex $v \in v_{c,d}\alpha_{j} $ if and only if $W_{k,j}$ peaks/deeps at v.
    \item $W_{k,l}$ peaks/deeps at a vertex $v \in \alpha_{k} v_{a,b} $ if and only if $W_{k,j}$ peaks/deeps at v.
    \item $W_{k,l}$ peaks/deeps at a vertex  $v \in v_{c,d}\alpha_{l} $ if and only if $W_{i,l}$ peaks/deeps at v.   
\end{itemize}

Therefore using the constraints given in equation (\ref{gvect}) it is easy to see that 
\begin{equation}
    X_{ij} +  X_{k\ell} - X_{i,l} - X_{kj} =   \epsilon_{i,j} +    \epsilon_{k,l} -   \epsilon_{i,l} -   \epsilon_{k,j}. 
\end{equation}{}

In case $X_{i,l}$ or $X_{kj}$ are not compatible diagonals that is,  $W_{i,l}$ or $W_{k,j}$ are not proper walks, we can take $X_{i,l} = \epsilon_{i,l}=0 $ or $X_{kj} = \epsilon_{k,j}=0 $ and the above equation will hold.

\textbf{Case II:}

Suppose $\rho$ is just a vertex. Let $\alpha_{i},\alpha_{j},\alpha_{k},\alpha_{l}$ be such that
\begin{align}
    W_{i,j} &= \alpha_{i} \rho \alpha_{j} &    W_{k,l} &= \alpha_{k}  \rho \alpha_{l}.
\end{align}
Consider walks $W_{i,l} = \alpha_{i} \rho \alpha_{l} $ and   $W_{k,j} = \alpha_{k} \rho \alpha_{j} .$ $W_{i,l}$ and $W_{k,j}$ are proper walks if and only if they peak or deep at some vertex. 

Now, we note the following observations, 
\begin{itemize}
    \item $W_{i,j}$ peaks/deeps at a vertex $v \in \alpha_{i} $ if and only if $W_{i,l}$ peaks/deeps at v.
    \item $W_{i,j}$ peaks/deeps at a vertex $v \in \alpha_{j} $ if and only if $W_{k,j}$ peaks/deeps at v.
    \item $W_{k,l}$ peaks/deeps at a vertex $v \in \alpha_{k} $ if and only if $W_{k,j}$ peaks/deeps at v.
    \item $W_{k,l}$ peaks/deeps at a vertex  $v \in \alpha_{l} $ if and only if $W_{i,l}$ peaks/deeps at v.   
\end{itemize}

$W_{i,l}$ and $W_{k,j}$ do not peak or deep at $\rho$ but $W_{i,j}$ deeps at $\rho$ and $W_{k,l}$ peaks at $\rho$ therefore just as in case I we have 
\begin{equation}
    X_{ij} +  X_{k\ell} - X_{i,l} - X_{kj} =   \epsilon_{i,j} +    \epsilon_{k,l} -   \epsilon_{i,l} -   \epsilon_{k,j}. 
\end{equation}{}

In case $X_{i,l}$ or $X_{kj}$ are not compatible diagonals that is,  $W_{i,l}$ or $W_{k,j}$ are not proper walks, we can take $X_{i,l} = \epsilon_{i,l}=0 $ or $X_{kj} = \epsilon_{k,j}=0 $ and the above equation will hold.

Therefore, $\mathrm{d}(X_{i,j} + X_{k,l} ) = \mathrm{d} X_{i,l} +  \mathrm{d} X_{k,j} $. Whenever $X_{i,l}$ and $X_{k,j}$ are compatible they belong to $S$ and when $X_{i,l}$ is not compatible $\mathrm{d}X_{i,l}=0$, and when $X_{k,j}$ is not compatible $\mathrm{d} X_{k,j}  = 0$. Hence,  
\begin{equation}\label{cfproof}
    \mathrm{d}( X_{i,j}+ X_{k,l}) \wedge \Omega_{S} = 0.
\end{equation}{}
\end{proof}{}

The planar scattering form is given by 
\begin{equation}
\Omega_{n}^{D} = \sum_{D'} \text{sign}(D') \left( \prod_{(i,j)\in D'} \frac{1}{X_{i,j}} \right) \bigwedge_{(i,j)\in D'} \mathrm{d}X_{i,j},
\end{equation}
where the sum is over dissection compatible with the reference dissection $D$.
Now, using equation (\ref{cfproof}) it is easy to see that the pull back of the planar scattering form on the accordiohedron is given by 
\begin{equation}
\omega^{D} = \sum_{D'} \left( \prod_{(i,j)\in D'} \frac{1}{X_{i,j}} \right) \bigwedge_{(i,j)\in D} \mathrm{d}X_{i,j}
\end{equation}
Thus, the induced form on the accordiohedron produces a partial contribution to the scattering amplitude. In particular, the induced form on ${\cal S}_{n}^{Q}$ produces a partial contribution to the $\phi^4$ scattering amplitude.  
\section{Map from world-sheet associahedron to ABHY associahedron}\label{messscatt}
In this appendix, we will propose a map from the world-sheet associahedron to ABHY associahedron with arbitrary reference triangulation. In \cite{Arkani-Hamed:2017mur} Arkani Hamed et al. gave a map from the world-sheet associahedron to ABHY associahedron with reference $T=\{(13),(14),\ldots,(1,n-1)\}$ using the scattering equations, but there is nothing special about this reference triangulation and we expect that the scattering equations will give a map from world-sheet associahedron to any ABHY associahedron.The most rigourous way to arrive at a diffeomorphism between $\overline{{\cal M}}_{0,n}({\bf R})$ and an ABHY associahedron is to generalise the derivation in \cite{Arkani-Hamed:2017mur} suitably. We believe this is possible but instead of attempting such a rigourous proof and propose a map (based on certain arguments given below) which we believe is the map induced by scattering equations.  

Given a triangulation $T$, using equation \eqref{gvect} we can express any planar variable $X_{mn}$ in terms of $\{X_{k\ell} \vert (k,\ell) \in T\}$ and $\bigcup_{(k,\ell) \in T} \{ c_{p,q} \vert \text{the diagonal} (p,q) \text{ intersects the diagonal } (k-1,\ell-1) \} $.  Therefore, once we have a map from the world-sheet associahedron to $X_{ij}$ for $(i,j) \in T$ we can get the map from world-sheet associahedron to ABHY Associahderon. The map to $X_{ij}$ for $(i,j) \in T$ should be of the form 
\begin{equation}
    X_{ij} = \sum_{(k,\ell) \in \mathcal{C}_{ij} }  c_{k\ell} f_{i,j,k,\ell}(\sigma).  
\end{equation}{}
Here we expect the set $\mathcal{C}_{ij}$ to depend only on $(i,j)$ and not the reference triangulation. Hence, the natural choice for $\mathcal{C}_{ij}$ is $\mathcal{C}_{ij} =  \{ c_{p,q} \vert \text{the diagonal } (p,q) \text{ intersects the diagonal} (i-1,j-1)\} $. We expect that the function 
$f_{i,j,k,\ell}(\sigma)$ is a $\mathbf{SL}(2,\mathbb{C})$ invariant function of $\sigma$s which depends only on $(i,j,k,\ell)$. A natural choice of such a function is the cross ratio $f_{i,j,k,\ell}(\sigma) = \frac{\sigma_{i-1,\ell }\sigma_{k,j-1}}{\sigma_{k,\ell} \sigma_{i-1,j-1}}$. 
Now we note the following identity which relates Pl\"ucker coordinates to the $\sigma$ coordinates,
\begin{equation}
    \prod_{\substack{a\leq p \leq b\\ c\leq q \leq d}} u_{pq} = \frac{\sigma_{a-1,d }\sigma_{b,c-1}}{\sigma_{b,d} \sigma_{a-1,c-1}}.
\end{equation}{}
Using this identity we can write the map between world-sheet associahedron and ABHY associahedron as follows,
\begin{equation}\label{worldABHYmap}
     X_{ij} = \sum_{\substack{ i\leq k \leq j-2\\ j \leq \ell \leq i-2 }} c_{k\ell} \frac{\sigma_{i-1,\ell }\sigma_{k,j-1}}{\sigma_{k,\ell} \sigma_{i-1,j-1}} = \sum_{\substack{ i\leq k \leq j-2\\ j \leq \ell \leq i-2 }} c_{k\ell} \prod_{\substack{i \leq  p \leq k \\ j\leq q \leq \ell} } u_{pq}\,.
\end{equation}

It can be checked that this map reduces to the map given in \cite{Arkani-Hamed:2017mur} for reference triangulation $T=\{(13),(14),\ldots,(1,n-1)\}$. We have verified that for $n=6$ with reference triangulations $\{(13),(14),(46)\}$ and $\{(13),(35),(15)\}$ the scattering equations give the above map (\ref{worldABHYmap}).

\section{Five Point Push Forward} 
\label{mrunmay1}
We begin with the following form : 
\begin{equation}\label{form}
\begin{split}
    \widehat{\Omega} &= (\mathrm{d}\log u_{13}+\mathrm{d}\log f_{13})\wedge (\mathrm{d}\log u_{14}+\mathrm{d}\log f_{14}) + (\mathrm{d}\log u_{35}+\mathrm{d}\log f_{35})\wedge (\mathrm{d}\log u_{13}+\mathrm{d}\log f_{13})\\ & + (\mathrm{d}\log u_{14}+\mathrm{d}\log f_{14})\wedge (\mathrm{d}\log u_{24}+\mathrm{d}\log f_{24})+ (\mathrm{d}\log u_{25}+\mathrm{d}\log f_{25})\wedge (\mathrm{d}\log u_{35}+\mathrm{d}\log f_{35}) \\ & + (\mathrm{d}\log u_{24}+\mathrm{d}\log f_{24})\wedge (\mathrm{d}\log u_{25}+\mathrm{d}\log f_{25}),
\end{split}
\end{equation}
where the $f_{ij}$ are given by 
\begin{align}\label{fs}
    f_{13} &= c_{13}+ c_{14}u_{14} \cr
    f_{14} &= c_{14} + c_{24}u_{24}  \cr
    f_{24} &= c_{13}u_{25} + c_{14}u_{14}u_{25} + c_{24}u_{14}\cr
     f_{25} &= c_{13}u_{24} + c_{14}\cr
      f_{35} &= c_{14}u_{25} + c_{24}\,.
\end{align}
After some tedious algebra we simplify the form in (\ref{form}) to get,

\begin{small}
\begin{equation}
   \begin{split}
        \widehat{\Omega} &=   \frac{\mathrm{d}u_{13}\wedge \mathrm{d}u_{14} }{u_{13}(1-u_{13})u_{14}(1-u_{14})} \\ &\hspace{0.5cm} +  \frac{c_{13}c_{14}c_{24}\left(\left(c_{14}+\left(c_{24}-u_{13}f_{13}\right)\right)\left(c_{13}- \left(c_{24}-u_{14}f_{14}\right)\right)\right)}{f_{13}f_{14}f_{24}f_{25}f_{35}(1-u_{13}u_{14})}  \mathrm{d}u_{13}\wedge \mathrm{d}u_{14}
   \end{split}
\end{equation}
\end{small}

%
%

In our choice of gauge $\sigma_{1}=0,\sigma_{4}=1,\sigma_{5}=\infty$,
\begin{align}
    u_{13} &= \frac{\sigma_{2}}{\sigma_{3}}, & u_{14} &= \sigma_{3}, & u_{25} &= 1- \sigma_{2},  & u_{35} &= \frac{1-\sigma_{3}}{1-\sigma_{2}}, & u_{24}&= \frac{\sigma_{3}-\sigma_{2}}{(1-\sigma_{2})\sigma_{3}}\,,
\end{align}
and 
\begin{align}\label{fsig}
    f_{13} &= c_{13}+ c_{14}\sigma_{3} \cr
    f_{14} &= c_{14} + c_{24}\frac{\sigma_{3}-\sigma_{2}}{(1-\sigma_{2})\sigma_{3}}  \cr
    f_{24} &= c_{13}(1- \sigma_{2}) + (c_{14}(1- \sigma_{2}) + c_{24})\sigma_{3}\cr
     f_{25} &= c_{13}\frac{\sigma_{3}-\sigma_{2}}{(1-\sigma_{2})\sigma_{3}} + c_{14}   \cr
     f_{35} &= c_{14}(1- \sigma_{2}) + c_{24}\,.
\end{align}
%
%
After some algebra, the form (\ref{form}) can be re-expressed as : 
\begin{equation}
\label{Omegahatrexpress}
    \widehat{\Omega} =  \left[ \frac{1}{\sigma_{2}(\sigma_{2}-\sigma_{3})(\sigma_{3}-1)} + R \right] \mathrm{d}\sigma_{2}\wedge \mathrm{d}\sigma_{3}.
\end{equation}{}
Where 
\begin{align}
    R = \frac{c_{13}c_{14}c_{24}\left(\left(c_{14}+\left(c_{24}-u_{13}f_{13}\right)\right)\left(c_{13}- \left(c_{24}-u_{14}f_{14}\right)\right)\right)}{f_{13}f_{14}f_{24}f_{25}f_{35}(1-u_{13}u_{14})u_{14}}
\end{align}{}
%
%
Notice that for our first choice of kinematics where $c_{14}=0$, we have $N_1=0$ and we see from \eqref{Omegahatrexpress} that the form $\widehat\Omega$ reduces to the Parke-Taylor form and hence matches $\omega_{5}^{\textrm{ws}}$ as was claimed in Section \ref{scrutiny}.

In the special choice of kinematics $c_{13}=c_{14} = c_{24}= 1$ the form is given by 
\begin{equation}\label{spform}
    \widehat{ \Omega} = \left(\frac{1}{\sigma_{2}(\sigma_{2}-\sigma_{3})(\sigma_{3}-1)}+\frac{1}{(2-\sigma_{2}) (1+ \sigma_{3}) (1-\sigma_{2} + (2- \sigma_{2} )\sigma_{3} ) } \right) \mathrm{d}\sigma_{2}\wedge \mathrm{d}\sigma_{3}.
\end{equation}
%
%
For a general choice of kinematics the solutions to the scattering equations \eqref{scattereq} are :
\begin{tiny}
\begin{align}
    \sigma_{2} &= {\frac{\sqrt{(c_{13} (c_{14}+c_{24})+X_{14} (c_{14}+X_{13})-c_{24} X_{13})^2-4 c_{14} X_{13}X_{14}
   (c_{13}-c_{24}+X_{14})}+c_{13} (c_{14}+c_{24})+c_{14}X_{14}-c_{24} X_{13}+X_{13}X_{14}}{2 c_{14}
   (c_{13}-c_{24}+X_{14})}}\nonumber\\
   \sigma_{3} &=  {-\frac{\sqrt{(c_{13} (c_{14}+c_{24})+X_{14} (c_{14}+X_{13})-c_{24} X_{13})^2-4 c_{14} X_{13}X_{14}
   (c_{13}-c_{24}+X_{14})}+c_{13} (c_{14}+c_{24})-c_{14}X_{14}-c_{24} X_{13}+X_{13}X_{14}}{2 c_{14}
   (c_{14}+c_{24}-X_{13})}}
\end{align}
\end{tiny}
and
\begin{tiny}
\begin{align}
    \sigma_{2} &= {\frac{-\sqrt{(c_{13} (c_{14}+c_{24})+X_{14} (c_{14}+X_{13})-c_{24} X_{13})^2-4 c_{14} X_{13}X_{14}
   (c_{13}-c_{24}+X_{14})}+c_{13} (c_{14}+c_{24})+c_{14}X_{14}-c_{24} X_{13}+X_{13}X_{14}}{2 c_{14}
   (c_{13}-c_{24}+X_{14})}} \nonumber\\
     \sigma_{3} &= {\frac{\sqrt{(c_{13} (c_{14}+c_{24})+X_{14} (c_{14}+X_{13})-c_{24} X_{13})^2-4 c_{14} X_{13}X_{14}
   (c_{13}-c_{24}+X_{14})}-c_{13} (c_{14}+c_{24})+c_{14}X_{14}+c_{24} X_{13}-X_{13}X_{14}}{2 c_{14}
   (c_{14}+c_{24}-X_{13})}}
\end{align}
\end{tiny}
For the special choice of kinematics $c_{13}=c_{14} = c_{24}= 1$  the solutions become,
\begin{align}\label{spsol1}
    \sigma_{2} &= \frac{\sqrt{(X_{13} (X_{14}-1)+X_{14}+2)^2-4 X_{13} X_{14}^2}+X_{13} (X_{14}-1)+X_{14}+2}{2
   X_{14}}\nonumber\\
   \sigma_{3} &= \frac{\sqrt{(X_{13} (X_{14}-1)+X_{14}+2)^2-4 X_{13} X_{14}^2}+X_{13} (X_{14}-1)-X_{14}+2}{2
   (X_{13}-2)}
\end{align}
and 
\begin{align}\label{spsol2}
    \sigma_{2} &= \frac{-\sqrt{(X_{13} (X_{14}-1)+X_{14}+2)^2-4 X_{13} X_{14}^2}+X_{13} (X_{14}-1)+X_{14}+2}{2
   X_{14}}\nonumber \\
     \sigma_{3} &= \frac{\sqrt{(X_{13} (X_{14}-1)+X_{14}+2)^2-4 X_{13} X_{14}^2}+X_{13}
   (-X_{14})+X_{13}+X_{14}-2}{2 (2-X_{13})}
\end{align}
The push forward of (\ref{spform}) along each of above solutions (\ref{spsol1},\ref{spsol2}) gives 
\begin{equation}\label{ampsp}
 \left(    \frac{-X_{13} X_{14}+2 (2-X_{13})+4 X_{14}}{(2-X_{13}) X_{13} (2-X_{14}) X_{14}
   (-X_{13}+X_{14}+1)}  \right)\mathrm{d}X_{13}\wedge\mathrm{d}X_{14}.
\end{equation}{}
which is equal to the following for the special choice of kinematics, $c_{13}=c_{14} = c_{24}= 1$ 
\begin{equation}\label{amp}
   \left( \frac{1}{X_{13} X_{14}}+\frac{1}{X_{13} X_{35}}+\frac{1}{X_{14} X_{24}}+\frac{1}{X_{24}
   X_{25}}+\frac{1}{X_{25} X_{35}} \right) \mathrm{d}X_{13}\wedge\mathrm{d}X_{14}.
\end{equation}
\section{Linear diffeomorphism for hyper-cube case}\label{linear}
As we saw in Section \ref{pardiss} there is a convex realisation of ${\cal S}^{\{14,5n,\ \dots,\ (\frac{n}{2}+1)(\frac{n}{2}+4)\}}_n$ as a facet of ${\cal A}_{n}$ given by the constraints \eqref{145nembeddingequations} where the $d_k$ take the values given in Claim \ref{claim145nembedding}. 
%
The corresponding polytope coincides with a co-dimension $\frac{n}{2}-1$ facet of ${\cal A}_{n-3}$ given by setting some of the $X_{ij}$ to zero as stated in Claim \ref{claim145nembedding}.
%
This boundary is diffeomorphic to the boundary of $\widetilde{\cal A}_{n-3}$ obtained by setting
a set of Pl\"ucker co-ordinates to zero :
\begin{equation}\label{vanipluk}
u_{13}=0,\quad \&\quad u_{4+k,n-k}=0\quad\text{for}\quad k\in[0,1,\ldots,\frac{n}{2}-3]\,.
\end{equation}
\begin{claim}
\label{lineardiffeoclaim}
Scattering equations provide a linear diffeomorphism between the aforementioned facet of $\widetilde{\cal A}_{n-3}$ and the convex realisation of ${\cal S}_{n}^{Q}$
where $Q=\{14,5n,\ \dots,\ (\frac{n}{2}+1)(\frac{n}{2}+4)\}$ : 
\begin{align}
X_{ij}={\cal D}_{ij}\ u_{ij},\,\,\, \forall\ (ij)\ &\in\ \{14, (5+k,n-k),\ 3n,\  (4+k,n-k-1)\}\,,
\end{align}
where $k=\{0,\ldots,\frac{n}{2}-4\}$ and ${\cal D}_{ij}={\cal D}_{ij}(\{c_{mn}\})$ are  linear sums of $c_{ij}$'s as given in \eqref{linxu}.
%
\end{claim}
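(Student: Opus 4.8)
\textbf{Proof proposal for Claim \ref{lineardiffeoclaim}.}

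The plan is to establish the linear diffeomorphism by directly composing two maps whose restrictions to the relevant facets are already understood: the world-sheet-to-ABHY map of Appendix \ref{messscatt} (equation \eqref{worldABHYmap}) and the canonical embedding of ${\cal S}_n^{Q}$ as a face of ${\cal A}_n$ from Claim \ref{claim145nembedding}. First I would specialise equation \eqref{worldABHYmap} to the planar variables $X_{ij}$ with $(ij)$ ranging over the set $Q\cup Q^{c}=\{14,(5+k,n-k),3n,(4+k,n-k-1)\}$, writing each $X_{ij}$ as $\sum_{k\ell} c_{k\ell}\prod u_{pq}$. Then I would impose the vanishing Pl\"ucker constraints \eqref{vanipluk}, i.e. $u_{13}=0$ and $u_{4+m,n-m}=0$ for $m\in[0,\ldots,\tfrac n2-3]$, on these expressions. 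The key observation to check is that once these $u$'s are set to zero, every surviving monomial $\prod u_{pq}$ appearing in a given $X_{ij}$ (for $(ij)$ in the above list) collapses to the single factor $u_{ij}$ — because any longer product of consecutive $u$'s along the relevant rectangle is forced through one of the vanishing Pl\"uckers. This is precisely the combinatorial heart of the argument and is where I expect the bookkeeping to be most delicate: one must verify, rectangle by rectangle, that the only monomial that is not killed is the length-one monomial $u_{ij}$ itself, so that $X_{ij}=u_{ij}\sum_{k\ell}c_{k\ell}=u_{ij}{\cal D}_{ij}$ with ${\cal D}_{ij}$ independent of the $u$'s. The explicit values of ${\cal D}_{ij}$ then read off directly and must be matched against \eqref{linxu}.

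Concretely, I would handle the four families separately. For $X_{14}$: in \eqref{worldABHYmap} the sum runs over $1\le k\le 2$, $4\le \ell\le n-1$, and the product $\prod_{1\le p\le k,\,4\le q\le \ell}u_{pq}$ always contains $u_{14}$; setting $u_{13}=0$ and the other constraints to zero, one checks that only the $u_{14}$ factor survives in each term, giving $X_{14}=u_{14}\sum_{4\le j\le n-1}(c_{1j}+c_{2j})$, matching the first line of \eqref{linxu}. For the ``interior'' diagonals $X_{5+k,n-k}$ the relevant cross-ratio rectangle is a single cell of the reference quadrangulation, so the sum in \eqref{worldABHYmap} contains just the term $c_{4+k,n-k-1}$ and the product reduces to $u_{5+k,n-k}$ on the constrained locus, giving $X_{ab}=u_{ab}c_{a-1,b-1}$. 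The diagonals $X_{3n}$ and $X_{4+k,n-k-1}$ lie in $Q^{c}$; here I would use the associahedron constraints \eqref{momentumconsassociahedron145m} together with Claim \ref{claim145nembedding} — on the face $X_{3n}=d_1-X_{14}$ with $d_1=\sum_{I=1}^2\sum_{J=4}^{n-1}c_{IJ}$, and similarly for $X_{4+k,n-k-1}=d_{k+2}-(X_{5+k,n-k}+\cdots)$ — to convert the (already-established) relations for $X_{14}$, $X_{5+k,n-k}$ into the corresponding relations $X_{3n}=u_{3n}\sum(c_{1j}+c_{2j})$ and $X_{4+k,n-k-1}=u_{4+k,n-k-1}c_{4+k,n-k-1}$, using $u_{ij}+u_{\bar\imath\bar\jmath}=1$-type relations among the complementary Pl\"ucker coordinates on the face.

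Finally I would note that since each of these relations is of the form $X_{ij}={\cal D}_{ij}u_{ij}$ with ${\cal D}_{ij}$ a strictly positive constant (a sum of positive $c$'s), the map $u_{ij}\mapsto X_{ij}$ is manifestly a linear isomorphism on the span of the $\tfrac{n-4}{2}$ surviving coordinates $\{u_{14},u_{5n},\dots\}$, and it carries the world-sheet Stokes geometry (the face \eqref{vanipluk} of $\overline{{\cal M}}_{0,n}({\bf R})$) bijectively and linearly onto the convex realisation of ${\cal S}_n^{Q}$ determined by \eqref{145nembeddingequations}. Invertibility of the full scattering-equation diffeomorphism between $\overline{{\cal M}}_{0,n}({\bf R})$ and ${\cal A}_n$ (assumed, as in Section \ref{pfa2a}) then guarantees it restricts to a diffeomorphism on these boundary components, and the computation above shows this restriction is the stated linear map. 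The main obstacle, as indicated, is purely the verification that the Pl\"ucker constraints \eqref{vanipluk} truncate all the monomials in \eqref{worldABHYmap} down to the single factor $u_{ij}$ for every $(ij)$ in the list; once that combinatorial fact is nailed down, the rest is immediate.
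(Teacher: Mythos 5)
Your proposal follows essentially the same route as the paper's proof in Appendix \ref{linear}: a first stage in which the scattering-equation expression for each $X_{ij}$ with $(ij)\in Q$ is truncated to the single monomial $u_{ij}$ by the vanishing Pl\"ucker coordinates \eqref{vanipluk} together with the crossing relations, and a second stage in which the complementary diagonals $3n$ and $(4+k,n-k-1)$ are recovered from the associahedron constraints \eqref{momentumconsassociahedron145m} and the relations of the type $u_{3+k,n-k}=1-u_{4+k,n-k+1}$. The combinatorial verification you defer --- that every monomial other than $u_{ij}$ contains a vanishing Pl\"ucker factor --- is precisely what the paper carries out by splitting the scattering equation \eqref{scateqpar} into three summands and exhibiting in each non-surviving term a factor $u_{a,b-1}$, $u_{a-1,b}$, or some $u_{cd}$ with $c+d=n+4$, so your outline matches the paper's argument up to that (true but nontrivial) bookkeeping.
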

\begin{proof}
Our proof will go in two stages. In the first stage we will use scattering equations to write planar variables labelled by elements of $Q$ in terms of world-sheet co-ordinates and show that due to \eqref{vanipluk} such a map is linear between $X_{ij}$ and the corresponding $u_{ij}$. In the second stage we will show that the map between $u_{ij}$ and corresponding $X_{ij}$ is linear also for $(ij)\in\{3n,4 n-1,\ldots,\frac{n}{2} \frac{n}{2}+3\}$.
\\
\\
{\textit{Stage 1}} : Let us first consider $X_{14}$. The scattering equations give :
\begin{align}
\label{genx14}
X_{14}=\sum_{1\le i\le 2}\,\sum_{4\le j\le n-1}\,\frac{\sigma_{i,3}}{\sigma_{i,j}}c_{ij}\,.
\end{align}
We express the world-sheet coordinates $\sigma_i$ that appear on the RHS of \eqref{genx14} in terms of Pl\"ucker coordinates $u_{1j}$ and then substitute $u_{1j}=1$ for $j=5,\ldots,n-1$ which arise from the respective crossing relations and impose $u_{14}=0$ from \eqref{vanipluk} to obtain the linear map between $X_{14}$ and $u_{14}$ in \eqref{linxu}. Since for $n=6$, $Q$ is just $\{14\}$ for which we obtained the linear map for general $n$ let us now consider $n>6$. 
\\
\\
Let us now consider $X_{5n}$. From the scattering equations we have :
\begin{align}
\label{genX5n1st}
X_{5n}=\sum_{1\le i\le 4}\,\sum_{6\le j\le n-1}\,\frac{\sigma_{5,j}}{\sigma_{i,j}}c_{ij}\,.
\end{align}
We will make use of the identity :
\begin{align}
\label{id1}
\frac{\sigma_{5,j}}{\sigma_{i,j}}=\prod_{m=i+1}^5\prod_{k=1}^{n-j}u_{m,j+k}\,,\quad 1\le i\le4,\quad 6\le j\le n-1\,.
\end{align}
Using  $u_{4,n}=0$ and  $u_{5,n-1}=0$ from \eqref{vanipluk}, it is easy to see that the RHS is non-vanishing only when $i=4$ and $j=n-1$. This gives the linear map between $X_{5n}$ and $u_{5n}$ in \eqref{linxu}.
\\
\\
 %
 Now that we have shown the linear map for $X_{ij}$ corresponding to the labels $(14),(5n) \in Q$, let us consider the scattering equations for $X_{ab}$ with $(a b)\in Q$ and $a=5+k,\ b=n-k$ such that $k\in \{1,\ldots,\frac{n}{2}-4\}$ :
\begin{align}\label{scateqpar}
X_{ab}\ &=\sum_{\substack{1\leq i\leq a-1 \\ \\ a+1\leq j\leq\ b-1}}\frac{\sigma_{aj}}{\sigma_{ij}}\ c_{ij}\ +\ 
\sum_{\substack{a\leq i\leq b-1 \\ \\ b\leq j\leq n-1}}\frac{\sigma_{ib-1}}{\sigma_{ij}}c_{ij} + \sum_{\substack{1\leq i\leq a-1 \\ \\ b\leq j\leq n-1}}\frac{\sigma_{ab-1}}{\sigma_{ij}}c_{ij}\,.
\end{align}
To analyse the first summand we find the following identity useful :
\begin{align}
\frac{\sigma_{aj}}{\sigma_{ij}}=\prod_{\ell=i+1}^a\prod_{m=j+1}^{n}u_{\ell m},\quad\forall\ 1\ \leq\ i\ \leq a-1,\quad a+1\le j\le b-1\,.
\end{align}
Using $u_{a(b-1)}=0$ and $u_{(a-1)b}=0$ from \eqref{vanipluk} we see that the only non-vanishing contribution to the RHS comes from $i=a-1,j=b-1$. Using $u_{am}=1$ $\forall\ m\ \in\ \{b+1,\ \dots,\ n\}$ we see that the contribution of this term to the first summand is $u_{ab}\ c_{(a-1)(b-1)}$.
\\
\\
To analyse the second summand \eqref{scateqpar} the following identity is useful :
\begin{align}
\label{id2ndterm}
\frac{\sigma_{i,b-1}}{\sigma_{ij}}=\prod_{c=1}^i\,\prod_{d=b}^j\,u_{cd},\quad a\le i\le b-1,\quad b\le j\le n-1\,.
\end{align}
It can be seen that $\forall i,j$ in the above range, there is at least one $u_{cd}$ on the RHS such that $c+d=4+n$ and $4\le c\le\frac{n}{2}+1$ which we know from \eqref{vanipluk} is zero. Thus the contribution from the second summand is zero.
\\
\\
We now come to the third summand in \eqref{scateqpar}. For $1\le i\le a-1, b< j \le n-1$ :
\begin{align}
\label{sum3term}
\frac{\sigma_{a,b-1}}{\sigma_{i,j}}=\prod_{c\,=1}^a\,u_{c\,b}\prod_{d\,=\,b+1}^j\,u_{1d}\prod_{r=i+1}^a\prod_{s=b+1}^n\,u_{rs}\prod_{r=2}^i\prod_{s=b+1}^j\,u_{rs}\,.
\end{align}
The first product on the RHS contains $u_{a-1,b}$ which is zero \eqref{vanipluk}. For $1\le i\le a-1, j=b$ we make use of the following identity :
\begin{align}
\label{sum3term2ndcase}
\frac{\sigma_{a,b-1}}{\sigma_{i,b}}=\prod_{c\,=1}^a\,u_{c\,b}\,\prod_{\ell=i+1}^a\prod_{m=b+1}^{n}u_{l,m},\quad 1\le i\le a-1\,.
\end{align}
This is also zero since the RHS contains  $u_{a-1,b}$.
Thus the contribution of the third summand in \eqref{scateqpar} is also zero. This completes Stage 1 of our proof where we have shown that $\forall\ (a,b)\in Q=\{14,5n,\ \dots,\ (\frac{n}{2}+1)(\frac{n}{2}+4)\}$ there is a linear map between $X_{ab}$ and $u_{ab}$ as given in the first two lines of \eqref{linxu}. 

We now come to the second stage of the proof where we show that a similar linear map exists also for labels $(ij)\in\{3n,4 n-1,\ldots,\frac{n}{2} \frac{n}{2}+3\}$.
\\
\\
{\textit{Stage 2}} : 
Using the first constraint in \eqref{momentumconsassociahedron145m}, the linear map between $X_{14}$ and $u_{14}$ in \eqref{linxu}, and the crossing relation $u_{14}=1-u_{3n}$
we obtain 
\begin{align}
X_{3n}=u_{3n}\sum_{4\le j\le n-1}\,(c_{1j}+c_{2j})\,.
\end{align}
as given in the third line of \eqref{linxu}.
%
We note the crossing relation :
\begin{equation}\label{pluksim}
u_{3+k,n-k}=1-u_{3+k+1,n-k+1}\quad \forall\ k\ \in\ [1,\dots,\frac{n-6}{2}]\,,
\end{equation}
where we have used the cyclicity of polygon vertices $n + 1\ =:\ 1$. This is rather straightforward as $\forall\ k\ \in\ [1,\dots,\frac{n-6}{2}]$ every dissection except $(3+k,n-k)$ intersects at least one of the $(i,j)$ in the set $\{(1,3),(4,n),\ldots,(\frac{n}{2}+1,\frac{n}{2}+3)\}$ for which the corresponding $u_{ij}=0$ from \eqref{vanipluk}. Using constraint equations given in the second line of \eqref{momentumconsassociahedron145m}, linear maps in the second line of \eqref{linxu}, and the crossing relations in \eqref{pluksim} it follows that the map between $u_{ij}$ and corresponding $X_{ij}$'s is linear for $\{X_{4 n-1},\dots,X_{\frac{n}{2} \frac{n}{2}+3}\}$ :
%
\begin{align}
X_{ij}\ =\ c_{ij}\ u_{ij}\,,\,\,\,\forall\quad i=4+k, j=n-k-1,\,\,\,\text{for}\,\,k=\{0,1,\ldots \frac{n}{2}-4\}\,,
\end{align}
as stated in the last line of \eqref{linxu}. This completes the proof of Claim \ref{lineardiffeoclaim}.
\end{proof}

%

%
\vskip 1cm


\begin{thebibliography}{99}

\bibitem{Banerjee:2018tun}
  P.~Banerjee, A.~Laddha and P.~Raman,
  ``Stokes polytopes: the positive geometry for $\phi^{4}$ interactions,''
  JHEP {\bf 1908} (2019) 067,
  arXiv:1811.05904 [hep-th]\,.
  
\bibitem{Arkani-Hamed:2017mur}
  N.~Arkani-Hamed, Y.~Bai, S.~He and G.~Yan,
  ``Scattering Forms and the Positive Geometry of Kinematics, Color and the world-sheet,''
  JHEP {\bf 1805} (2018) 096,
  arXiv:1711.09102 [hep-th]\,.
  
  \bibitem{1807ppp}
  Y.~Palu, V.~Pilaud and P-G.~Plamondon, 
  ``Non-kissing and non-crossing complexes for locally gentle algebras," 
  arXiv:1807.04730\,.
  
 \bibitem{1906ppp}
  A.~Padrol, Y.~Palu, V.~Pilaud and P-G.~Plamondon, 
  ``Associahedra for finite cluster type algebra and minimal relations between g-vectors,''  
  arXiv:1906.06861 [math.RT]\,.
  
  \bibitem{HughThomas}
   V.~Bazier-Matte, G.~Douville, K.~Mousavand, H.~Thomas and E.~Yildirim, 
   ``ABHY Associahedra and Newton polytopes of F-polynomials for finite type cluster algebras," 
  arXiv:1808.09986 [math.RT]\,.
  
\bibitem{Arkani-Hamed:2013jha} 
  N.~Arkani-Hamed and J.~Trnka,
  ``The Amplituhedron,''
  JHEP {\bf 1410}, 030 (2014)
  arXiv:1312.2007 [hep-th]\,.
  
\bibitem{Arkani-Hamed:2013kca} 
  N.~Arkani-Hamed and J.~Trnka,
  ``Into the Amplituhedron,''
  JHEP {\bf 1412}, 182 (2014)
  arXiv:1312.7878 [hep-th]\,.
  
\bibitem{Franco:2014csa} 
  S.~Franco, D.~Galloni, A.~Mariotti and J.~Trnka,
  ``Anatomy of the Amplituhedron,''
  JHEP {\bf 1503}, 128 (2015)
  arXiv:1408.3410 [hep-th]\,.   
  
\bibitem{Arkani-Hamed:2018rsk} 
  N.~Arkani-Hamed, C.~Langer, A.~Yelleshpur Srikant and J.~Trnka,
  ``Deep Into the Amplituhedron: Amplitude Singularities at All Loops and Legs,''
  Phys.\ Rev.\ Lett.\  {\bf 122}, no. 5, 051601 (2019)
  arXiv:1810.08208 [hep-th]\,.
  
\bibitem{Arkani-Hamed:2017vfh} 
  N.~Arkani-Hamed, H.~Thomas and J.~Trnka,
  ``Unwinding the Amplituhedron in Binary,''
  JHEP {\bf 1801}, 016 (2018)
  arXiv:1704.05069 [hep-th]\,.


  
\bibitem{He:2018pue}
  S.~He, G.~Yan, C.~Zhang and Y.~Zhang,
  ``Scattering Forms, world-sheet Forms and Amplitudes from Subspaces,''
  JHEP {\bf 1808} (2018) 040,
  arXiv:1803.11302 [hep-th]\,.
 
\bibitem{delaCruz:2017zqr} 
  L.~de la Cruz, A.~Kniss and S.~Weinzierl,
  ``Properties of scattering forms and their relation to associahedra,''
  JHEP {\bf 1803}, 064 (2018)
  arXiv:1711.07942 [hep-th]\,.
  

  
\bibitem{Raman:2019utu}
  P.~Raman,
  ``The positive geometry for $\phi^{p}$ interactions,''
  JHEP {\bf 1910}, 271 (2019)
  arXiv:1906.02985 [hep-th]\,.
  
  
  
\bibitem{Jagadale:2019byr}
  M.~Jagadale, N.~Kalyanapuram and A.~P.~Balakrishnan,
  ``Accordiohedra as Positive Geometries for Generic Scalar Field Theories,''
  arXiv:1906.12148 [hep-th]\,.
  
   
\bibitem{Cachazo:2014xea} 
  F.~Cachazo, S.~He and E.~Y.~Yuan,
  ``Scattering Equations and Matrices: From Einstein To Yang-Mills, DBI and NLSM,''
  JHEP {\bf 1507}, 149 (2015)
  arXiv:1412.3479 [hep-th]\,.
  
\bibitem{Baadsgaard:2015ifa} 
  C.~Baadsgaard, N.~E.~J.~Bjerrum-Bohr, J.~L.~Bourjaily and P.~H.~Damgaard,
  ``Scattering Equations and Feynman Diagrams,''
  JHEP {\bf 1509}, 136 (2015)
  arXiv:1507.00997 [hep-th]\,.
  
\bibitem{Baadsgaard:2016fel} 
  C.~Baadsgaard, N.~E.~J.~Bjerrum-Bohr, J.~L.~Bourjaily and P.~H.~Damgaard,
  ``String-Like Dual Models for Scalar Theories,''
  JHEP {\bf 1612}, 019 (2016)
  arXiv:1610.04228 [hep-th]\,.

  \bibitem{accoref}
T.~Manneville and V.~Pilaud,
  ``Geometric realizations of the accordion complex of a dissection,"
  Discrete \& Computational Geometry, 61(3) 507-540, 2019,
  arXiv:1703.09953\,.
  
  
  
  \bibitem{Stasheff1}
J. ~Stasheff,
 ``Homotopy Associativity of H-Spaces. I.'' (1963)
  	``Transactions of the American Mathematical Society,"
  	108, no. 2, 1963, 275-292\,.
  
  
\bibitem{Stasheff2}
J.~Stasheff,
  ``Homotopy Associativity of H-Spaces. II.'' (1963)
  	``Transactions of the American Mathematical Society,"
  	108, no. 2, 1963, 293-312\,. 
  
  
  \bibitem{Baryshnikov}
  Y.~Baryshnikov,
 ``On Stokes sets," 
  	New developments in singularity theory, 21, 2001, 65-86\,.
  
    \bibitem{Chapoton}
   F.~Chapoton,
  ``Stokes posets and serpent nest,"
   Discrete Mathematics \& Theoretical Computer Science, 18(3), 2016,
  arXiv:1505.05990 [math.RT]\,.

 \bibitem{devadoss}
  S.~L.~Devadoss, T.~Heath and W.~Vipismakul, 
  ``Deformations of bordered surfaces and convex polytopes,"
  Notices of the AMS 58 (2011) 530-541, 
  arXiv:1002.1676\,.
  
\bibitem{Kalyanapuram:2019nnf}
N.~Kalyanapuram,
  ``Stokes Polytopes and Intersection Theory,''
  arXiv:1910.12195 [hep-th]\,.
  
  
  
  \bibitem{cohomologyannals}
  P.~Etingof, A.~Henriques, J.~Kamnitzer, E.~M.~Rains, 
  ``The cohomology ring of the real locus of the moduli space of stable curves of genus 0 with marked points," 
  Ann. of Math (2) 171:2, 731-777 (2010)\,.
  
  \bibitem{Devadoss}
  S.~Devadoss, 
  ``Tessellations of Moduli Spaces and the Mosaic Operad," 
  Contemp. Math. 239 (1999), 91-114
  arXiv:9807010 [math.AG]\,.
  
 
  
  
  
\bibitem{Mizera:2019gea} 
  S.~Mizera,
  ``Aspects of Scattering Amplitudes and Moduli Space Localization,''
  arXiv:1906.02099 [hep-th]\,.
  
 
\bibitem{Mizera:2017rqa} 
  S.~Mizera,
  ``Scattering Amplitudes from Intersection Theory,''
  Phys.\ Rev.\ Lett.\  {\bf 120}, no. 14, 141602 (2018)
  arXiv:1711.00469 [hep-th]\,.


\bibitem{Cachazo:2013iea} 
  F.~Cachazo, S.~He and E.~Y.~Yuan,
  ``Scattering of Massless Particles: Scalars, Gluons and Gravitons,''
  JHEP {\bf 1407}, 033 (2014)
  arXiv:1309.0885 [hep-th]\,.
  
\end{thebibliography}
\end{document}